\documentclass[journal,onecolumn]{IEEEtran}
\usepackage{cite}
\usepackage{amsmath,amssymb,amsfonts}
\usepackage{algorithmic}
 \usepackage{booktabs}
\usepackage{graphicx}
\usepackage{textcomp}
\usepackage{amsfonts,amssymb,amsmath}
\usepackage{graphicx,graphics,epsfig,color}
\usepackage{multicol}
\usepackage{float}
\usepackage{subcaption}
\usepackage{comment}
\usepackage{caption}
\usepackage{diagbox}

\usepackage{multirow}

\usepackage{graphicx} 
\usepackage{here}
\usepackage{amsmath}
\usepackage{amsthm}
\usepackage{amssymb}
\usepackage{amsfonts}
\usepackage{graphicx}
\usepackage{subcaption}
\usepackage{xpatch}
\xpatchcmd{\paragraph}{\normalfont}{{\normalfont\bfseries}}{}{}
\usepackage{color}
\usepackage{dsfont}

\usepackage{nicematrix}

\usepackage[ruled,vlined,linesnumbered]{algorithm2e}

\newtheorem{theorem}{Theorem}
\newtheorem{remark}{Remark}
\newtheorem{lemma}{Lemma}
\newtheorem{proposition}{Proposition}

\newtheorem{definition}{Definition}

\newcommand{\R}{{\mathbb{R}}}

\newcommand{\N}{{\mathbb{N}}}

\newcommand{\Pre}{\mathrm{Pre}}

\newcommand{\argmax}{\textrm{arg}\max}

\newcommand{\Reach}{{\mathrm{Reach}}}

\newcommand{\cl}{\mathrm{cl}}

\newcommand{\dom}{\mathop{\rm dom}\nolimits}

\usepackage{hyperref}

\usepackage{enumitem}
\usepackage{balance}

\definecolor{olivegreen}{rgb}{0.14,0.29,0}

\newif\ifitsdraft
\def\itsdraft{\global\itsdrafttrue}

\itsdraft  

\ifitsdraft

\definecolor{gray}{rgb}{0.33,0.4,0.47}

\definecolor{steelblue}{rgb}{0,.42,.7}

\definecolor{britishgreen}{rgb}{0,0.26,0.15}
\definecolor{navyblue}{rgb}{0,0,.8}
\definecolor{olivegreen}{rgb}{0.14,0.29,0}
\definecolor{myred}{rgb}{0.86,0.1,0.16}
    
\newcounter{al}    \newcounter{ss}

    \allowdisplaybreaks
    \pdfoutput = 1
    
     \usepackage{soul} 

\newcommand{\Safe}{\mathrm{Safe}}

\newcommand{\Pers}{\mathrm{Pers}}
\newcommand{\Rec}{\mathrm{Rec}}
\newcommand{\Int}{\mathrm{Int}}

\begin{document}
\title{Specification-aware Robustness Margins for Symbolic Controllers}

\author{Youssef Ait Si, Antoine Girard, and Adnane Saoud
\thanks{Youssef Ait Si and Adnane Saoud are with the College of Computing, University Mohammed VI Polytechnic, Benguerir, Morocco.  (e-mail: \{youssef.aitsi, adnane.saoud\}@um6p.ma) }
\thanks{Antoine Girard is with Universit\'e Paris-Saclay, CNRS, CentraleSup\'elec, Laboratoire des Signaux et Syst\`emes, 91190, Gif-sur-Yvette, France (e-mail: \{antoine.girard@centralesupelec.fr).}
}

\maketitle                              

\begin{abstract}                          

We address the problem of robust controller synthesis for a class of linear temporal logic (LTL) specifications over families of perturbed systems using symbolic control techniques. Given a dynamical system, a specification, and a symbolic controller synthesized using the fixed-point algorithm of the specification, the objective is to find the maximal perturbation we can apply to the system while the system continues to satisfy the same specification under the same controller. We first provide general results, by demonstrating that controllers synthesized based on the symbolic model can be refined back to a perturbed version of the concrete system while preserving their correctness. Focusing on four fundamental temporal logic specifications, namely safety, reachability, persistence, and recurrence, we introduce a general measure of the maximal robustness margin. Then, for each class of specifications, we derive a customized version of the measure and establish the corresponding theoretical guarantees. Importantly, the robustness margin depends explicitly on the sequence of sets generated during the fixed-point computation, allowing for specification-dependent and less conservative bounds compared to generic abstraction-based approaches. The theoretical developments are illustrated on two examples, demonstrating the practical applicability and effectiveness of the proposed approach.

\end{abstract}

\section{Introduction}
The growing complexity of cyber-physical systems (CPS) has highlighted the critical role of formal methods in controller design, particularly for systems governed by complex temporal logic specifications. Ensuring correctness in such systems is essential, as failures can lead to safety risks and performance degradation in applications like autonomous vehicles, robotics, and industrial automation \cite{vargas2021overview,zhang2021grid,tabuada2009verification,saoud2020contract}.
One effective methodology to achieve such guarantees is through Abstraction-Based Controller Design (ABCD) \cite{ReissigWeberRungger17}, \cite{Kyle2018}, \cite{saoud2019compositional}. The process of ABCD begins by discretizing the state and input spaces of the nonlinear continuous system to create a finite-state symbolic abstraction. This step employs reachability analysis techniques \cite{althoff2021set,COOGAN2017254} to approximate the system’s reachable set of each state and input. The construction of such symbolic models is a well-established foundation \cite{zamani2013symbolic,girard2009approximately,Reissig2011computing}. Next, controllers are synthesized on this abstraction using tools from model checking and formal methods to ensure that complex temporal logic specifications are satisfied \cite{baier2008principles}, with correct by constructions controllers. Finally, the discrete controller is refined back into the continuous domain, ensuring that the correctness properties established on the abstraction hold for the original system. The soundness of this entire procedure depends fundamentally on relationships such as alternating simulation or feedback refinement relations that formally link the original concrete system with its discrete abstraction~\cite{ReissigWeberRungger17,calbert2024,DBLP:conf/concur/AlurHKV98,liu2016finite}, thereby preserving correctness guarantees across the abstraction-refinement stage.

Controller synthesis under complex temporal specifications has been extensively studied for dynamical systems subject to uncertainties and disturbances. One line of work investigates robustness through the lens of assume-guarantee contracts, examining how a system's ability to satisfy a specification with respect to assumption's violation \cite{ehlers2014resilience}. A complementary perspective defines robustness in terms of trajectory proximity: perturbed trajectories are required to remain close to their nominal disturbance-free counterparts \cite{tabuada2014towards,rungger2015notion,apaza2024synthesis}. More recently, new resilience metrics~\cite{saoud2025temporal,Aitsi2026,Negar2026} have been introduced for both controlled and autonomous discrete-time systems operating under finite linear temporal logic (LTL$_f$) specifications \cite{LTLf17}. Nevertheless, these approaches are limited in scope: they do not yield deterministic procedures for nonlinear systems and are limited to fragments of LTL$_f$. To address the presence of disturbances and modeling errors, abstraction-based methods often incorporate explicit uncertainty bounds and compute conservative over-approximations of reachable sets under worst-case realizations of the uncertainties \cite{ReissigWeberRungger17}, \cite{Kyle2018,borri2012symbolic}. Along these lines, \cite{liu2016finite} proposes the construction of robust finite abstractions with suitable safety margins, facilitating controller synthesis from temporal-logic specifications despite intersample behavior, imperfect state measurements, and unmodeled dynamics. To reduce the conservatism of these methods, \cite{bai2019incremental,bai2020accurate} develop adaptive abstractions based on state-input dependent disturbance bounds, together with an incremental refinement algorithm that updates only the regions affected by changes in the uncertainty model. 

Existing robust abstraction methods presented above typically rely on a fixed and predefined set of disturbances. In contrast, this paper focuses on synthesizing controllers that ensure a given specification is satisfied under the \textit{maximal admissible disturbance} without requiring prior knowledge of the disturbance set. Indeed, given a system, a specification, and a symbolic controller designed to satisfy that specification for the nominal (disturbance-free) system, this paper addresses the following question: \textit{What is the maximum disturbance that the system can tolerate while still satisfying the specification under the same designed controller?} 

Our work significantly advances the state of the art beyond \cite{aitsi2025symbolic} by introducing a more flexible and less conservative framework for controller synthesis under uncertainty. While \cite{aitsi2025symbolic} relies on ensuring an alternating simulation relation between the abstract and concrete models and computes the maximal admissible perturbation to preserve this relation, its generality leads to overly conservative results. In our approach, by focusing on commonly encountered specification classes such as reachability, safety, persistence, and recurrence, we are able to tolerate significantly larger perturbations while still ensuring the correctness of the synthesized controllers. Moreover, we provide proofs establishing the new robustness margins exceed the previously bounds in \cite{aitsi2025symbolic}. To achieve these results, we consider a discrete-time control system and construct its symbolic abstraction to enable formal controller synthesis. We introduce the notion of a \textit{perturbed symbolic abstraction}, formally defined as the symbolic model of a perturbed system. 
The core of our approach lies in determining the maximal perturbation bound that preserves equivalence between the predecessor operators for the nominal and perturbed abstractions. We seek the maximal perturbation magnitude that preserves this equivalence at each iteration of the fixed-point algorithm~\cite{istratescu2001fixed}, thus ensuring identical outcomes. This equivalence condition is key to preserving correctness across iterations, enables the quantification of the specification-dependent maximum robustness margin, and implies the resulting controllers from the fixed-point algorithm are guaranteed to be functionally equivalent. We then demonstrate how controllers synthesized for the nominal abstraction can be correctly refined to a perturbed concrete system, as long as the perturbations introduced remain below this computed maximal margin bound. This refinement procedure ensures the controller maintains correctness-by-construction for the concrete system. The perturbed symbolic abstraction and its controller synthesis thus serves as an analytical tool for determining exact bound on the robustness margins. Two numerical examples are proposed to show the performance of the proposed approach. The contributions developed in this paper are as listed below.
\begin{enumerate}
    \item A general definition and characterization of the robustness margin for safety, reachability, persistence, and recurrence specifications, computed directly from the sets generated by the corresponding fixed-point synthesis algorithm.
    \item A refinement procedure that maps a controller synthesized on the disturbance-free symbolic abstraction directly to the perturbed continuous system.
    \item Theoretical proofs showing our margins exceed previously known bounds in \cite{aitsi2025symbolic}.
\end{enumerate}   
Although the technical development is presented for the four specifications above, controller synthesis for several important fragments of LTL can be reduced to these classes. Specifically, synthesis for an LTL formula $\varphi$ is typically performed on the product of the system and a deterministic $\omega$-automaton recognizing $\varphi$, where the automaton's acceptance condition determines the resulting specification class: safety-LTL reduces to safety, co-safety-LTL to reachability, LTL formulas recognized by deterministic Büchi automata reduce to recurrence, and those recognized by deterministic co-Büchi automata reduce to persistence~\cite{DBLP:conf/concur/AlurHKV98,baier2008principles,2001automata}. Consequently, the robustness margins developed in this work extend directly to controllers synthesized for these LTL fragments.

\begin{table}[t]
\centering
\caption{Organization of theoretical results across the four temporal specifications.}
\label{tab:paper_structure}
\small
\setlength{\tabcolsep}{6pt}
\begin{tabular}{lcc}
\toprule
\textbf{Specification} & \textbf{Controllers} & \textbf{Refinement} \\
 & \textbf{equivalence} & \textbf{theorem} \\
\midrule
Safety       & Prop.~\ref{prop:safety_equiv}     & Thm.~\ref{thm:safety_refine}    \\
Reachability & Prop.~\ref{prop:reach_equiv}      & Thm.~\ref{thm:reach_refine}     \\
Persistence  & Prop.~\ref{prop:pers_equiv}       & Thm.~\ref{thm:pers_refine}      \\
Recurrence   & Prop.~\ref{prop:recu_equiv}          & Thm.~\ref{thm:recu_refine}       \\
\bottomrule
\end{tabular}
\end{table}
Table~\ref{tab:paper_structure} summarizes the parallel structure of Sections~\ref{sec:4}, \ref{sec:5}, \ref{sec:6} and \ref{sec:7}, each treating one of the four temporal specifications using the general results established in Section~\ref{Sec:3}.
The remainder of the paper is organized as follows.: Section~\ref{sec:2} shows some
required preliminaries on transition systems, control systems, symbolic abstractions, fixed-point and mu-calculus theory. In Section~\ref{Sec:3}, we provide general results, for refinement procedure and robust predecessor operator. In Sections~\ref{sec:4}, \ref{sec:5}, \ref{sec:6}, and~\ref{sec:7}, we address the specifications of safety, reachability, persistence, and recurrence, respectively. For each specification, we present the corresponding fixed-point algorithm, demonstrate how to compute the maximal robustness margins, and show how to refine the associated robust controller for the concrete perturbed system. Section~\ref{sec:9} shows that, the margins obtained in Sections~\ref{sec:4}--\ref{sec:7} are
less conservative than the abstraction-based margin of~\cite{aitsi2025symbolic}. Finally, in Section~\ref{sec:10}, illustrative examples are proposed in order to show the efficiency of the proposed approach.

\section{Preliminaries}
\label{sec:2}
\textbf{Notations:}
The symbols $ \N $, $ \N_{\geq 0} $, $ \R$, and $\R_{\geq 0},$ denote the set of positive integers, nonnegative integers, real and non-negative real numbers, respectively. Given sets $X$ and $Y$, we denote by $f: X \rightarrow Y$ an ordinary map from $X$ to $Y$, whereas $f : X \rightrightarrows Y$ denotes a set valued map. The notation \( 2^{ X} \) represents the power set of \(  X \), which is the set of all possible subsets of \( X \). For a set $X$, $\cl(X)$ denotes its closure, $\Int(X)$ its interior. For $x \in \mathbb{R}^n$, the infinity norm of $x$ is denoted by $\lVert x\rVert$. The closed ball centered at $x \in \mathbb{R}^n$ with radius $r$ is defined by $\mathcal{B}_{r}(x) = \{y \in \mathbb{R}^n |\lVert x- y\rVert \leq r  \}$. The projection map taking $(x, u) \in X \times U $ to $x \in X$ is denoted by $\pi_X$. Let $X^*$ denotes the set of finite sequences of elements in $X$, and $X^w$ denotes the set of infinite sequences of elements in $X$. Given a collection of sets $L = \{X_0, X_1, \dots, X_n \} \subseteq X^{n+1}$, we define the index map $j: X\rightarrow \mathbb{N} \bigcup\{ \infty\}$ as follows: $j_L(x)=\inf \left\{i \in \mathbb{N} \mid x \in X_i\right\}$ with the convention $\inf\{\emptyset\} =\infty$.

\subsection{Transition Systems}

We introduce the concept of a \textit{transition system}~\cite{tabuada2009verification}, which serves as a common modeling framework for both continuous control systems and their symbolic models.

\begin{definition}\label{definition6}
A transition system is a tuple \( S = (X, X^0, U, \Delta) \), where:
\begin{itemize}
    \item \( X \) is the set of states,
    \item \( X^0 \subseteq X \) denotes the set of initial states,
    \item \( U \) is the set of control inputs,
    \item \( \Delta \subseteq X \times U \times X \) defines the transition relation.
\end{itemize}
A transition \( (x, u, x') \in \Delta \) signifies that the state \( x' \) is reachable from the state \( x \) under input \( u \). This can be denoted as \( x' \in \Delta(x, u) \), and we refer to \( x' \) as a \( u \)-successor of \( x \).
\end{definition}

Given a state \( x \in X \), the set of admissible (or enabled) inputs is denoted by \( U^a(x) = \{ u \in U \mid \Delta(x, u) \neq \emptyset \} \). A transition system \( S \) is called \emph{non-blocking} if for every state \( x \in X \), we have \( U^a(x) \neq \emptyset \). Throughout the paper, we will utilize one fundamental operator to formalize state-control pairs that leads to successors within a given target set \( Z \subseteq X \times U \), called predecessor operator \cite{cai1989program}.
\begin{equation}
\label{Def:predecessor}
\Pre_{\Delta}(Z) = \{ (x, u) \in X \times U \mid \emptyset \neq \Delta(x, u) \subseteq \pi_X(Z) \},
\end{equation}
Where  $\pi_X(Z) = \{x \in  X |    \exists u \in U, (x, u) \in Z\}$ is the projection of $Z$ in the space of states $X$ and with the convention that \( \Pre_\Delta(\emptyset \times U) = \emptyset \times U \). A behavior of the transition system \( S = (X, X^0, U, \Delta) \) is a sequence $\sigma = ((x_0, u_0), (x_1, u_1), (x_2, u_2), \ldots, x_N),$ with \( N \in \mathbb{N} \bigcup \{+\infty\} \), such that \( x_0 \in X^0 \), \( u_i \in U^a(x_i) \), and \( x_{i+1} \in \Delta(x_i, u_i) \) for all \( i < N \). The associated state trajectory is \( \sigma_x = (x_0, x_1, \ldots, x_N) \). A behavior is:
\begin{itemize}
    \item \emph{maximal} if it cannot be extended further while preserving the same initial segment,
    \item \emph{complete} if its length is infinite, i.e., \( N = +\infty \).
\end{itemize}
The set of all maximal behaviors of \( S \) is denoted by \( \mathcal{H}(S) \), and the corresponding set of all maximal state trajectories is written \( \mathcal{H}_x(S) \). A specification for \( S \) is a subset of $X^* \cup X^w$, i.e., \( \mathcal{H}_{\text{Spec}} \subseteq X^* \cup X^w \), representing admissible state trajectories. The system satisfies the specification if \( \mathcal{H}_x(S) \subseteq \mathcal{H}_{\text{Spec}} \).

\subsection{Controlled transition systems}
\label{control_system}
We now formalize the notion of a controller for a transition system. Consider the transition system $S(\Sigma) = (X, X^0, U, \Delta)$, and let a controller be defined as a set-valued map $\mathcal{C}: X \rightrightarrows U$, where $\mathcal{C}(x) \subseteq U^a(x)$ for every state $x \in X$. The domain of the controller is denoted by $\mathrm{dom}(\mathcal{C}) = \{ x \in X \mid \mathcal{C}(x) \neq \emptyset \} \subseteq X$, representing the set of states for which control actions are enabled. Based on this, we define the control transition system as the tuple $S^{\mathcal{C}} = (X^{\mathcal{C}}, X^{\mathcal{C}, 0}, U^{\mathcal{C}}, \Delta^{\mathcal{C}})$, where:
\begin{itemize}
    \item $X^{\mathcal{C}} = X \cap \mathrm{dom}(\mathcal{C})$ is the set of admissible states;
    \item  $X^{\mathcal{C}, 0} = X^0 \cap \mathrm{dom}(\mathcal{C})$ denotes the set of initial states;
    \item $U^{\mathcal{C}} = U$ is the input space;
    \item The control transition relation is given by:
$$
x' \in \Delta^{\mathcal{C}}(x, u) \iff x' \in \Delta(x, u) \text{ and } u \in \mathcal{C}(x),
$$

\end{itemize}
meaning that a transition is allowed if and only if the input $u$ is permitted by the controller at state $x$. Given a specification $\mathcal{H}_{Spec}\subseteq X^w \cup X^*$, $\mathcal{C}$ is said to be a controller for the system $S$ and specification $\mathcal{H}_{Spec}$ if $\mathcal{H}_x(S^\mathcal{C}) \subseteq \mathcal{H}_{Spec}$.
\subsection{Control Systems}

Consider a discrete-time control system \( \Sigma \) described by
\begin{equation}
\label{eqn:system}
x_{k+1} = f(x_k, u_k, d_k),
\end{equation}
where \( x_k \in X \subseteq \mathbb{R}^n \) is the state, \( u_k \in U \subseteq \mathbb{R}^p \) the control input, and \( d_k \in D \subseteq \mathbb{R}^q \) a disturbance input. Moreover, we define a perturbed version of this system, denoted \( \Sigma_{\varepsilon} \), as $x_{k+1} \in f(x_k, u_k, d_k) + \mathcal{B}_{\varepsilon(x_k, u_k)}(0),$ where \( \mathcal{B}_{\varepsilon(x_k, u_k)}(0) \) denotes the closed ball centered at zero with radius \( \varepsilon(x_k, u_k) \), and \( \varepsilon: X \times U \to \mathbb{R}_{\geq 0} \). We can model the system \( \Sigma \) as a transition system \( S(\Sigma) = (X, X^0, U, \Delta) \), where:
\begin{itemize}
    \item \( X \subseteq \mathbb{R}^n \) is the state space,
    \item \( X^0 \subseteq X \) the initial states,
    \item \( U \subseteq \mathbb{R}^p \) the control inputs,
    \item \( \Delta(x, u) = f(x, u, D) \) defines the set of possible successors for each \( (x, u) \).
\end{itemize}

Similarly, the perturbed system \( \Sigma_{\varepsilon} \) can be described as a transition system \( S(\Sigma_{\varepsilon}) = (X, X^0, U, \Delta_{\varepsilon}) \), where 
$x' \in \Delta_{\varepsilon}(x, u) \quad \text{if and only if} \quad x' \in f(x, u, D) + \mathcal{B}_{\varepsilon(x, u)}(0).$

\subsection{Symbolic abstraction of dynamical systems}
\label{Def:Sd}
The symbolic abstraction of \( S(\Sigma) \), denoted \( S_d(\Sigma) = (X_d, X_d^0, U_d, \Delta_d) \), is constructed as follows:
\begin{itemize}
    \item \( X_d = \{ q_0, q_1, \ldots, q_N \} \), where \( q_0 = \mathbb{R}^n \setminus X \), and \( \{ q_i \}_{i=1}^N \) is a finite partition of \( X \),
    \item \( X_d^0 \subseteq X_d \) is the set of symbolic initial states,
    \item \( U_d = \{ v_1, \ldots, v_M \} \subseteq U \) is the set of symbolic inputs,
    \item \( \Delta_d \subseteq X_d \times U_d \times X_d \) is the transition relation defined by:
    \begin{enumerate}
        \item For \( q \in X_d \setminus \{q_0\} \), \( q' \in \Delta_d(q, v) \) if \( \cl(\overline{f}(q, v, D)) \cap \cl(q') \neq \emptyset \), where \( \overline{f}(q, v, D) \) is an over-approximation of \( f(q, v, D) \), i.e $f(q, v, D) \subseteq  \overline{f}(q, v, D) $.
        \item \( \Delta_d(q_0, v) = X_d \) for all \( v \in U_d \).
    \end{enumerate}
\end{itemize}
Using this construction of the symbolic model $S_d(\Sigma)$, one can ensure the existence of an alternating simulation relation from \( S_d(\Sigma) \) to \( S(\Sigma) \) \cite{tabuada2009verification}. 

Given a perturbation map \( \varepsilon: X \times U \rightarrow \mathbb{R}_{\geq 0} \), the symbolic model of the perturbed system \( \Sigma_\varepsilon \) is the tuple \( S_d(\Sigma_\varepsilon) = (X_d, X_d^0, U_d,  \Delta_{d, \varepsilon}) \), where \( X_d \), \( X_d^0 \), and \( U_d \) are defined as for the nominal system \( \Sigma \) and the transition relation is defined as follows: 
    \begin{enumerate}
        \item   For $q \in X_d \setminus \{q_0\}$, 
$  q' \in  \Delta_{d, \varepsilon}(q, v) \iff \\ 
 \{\cl\left( \overline{f}(q, v, D) \right) + \mathcal{B}_{\varepsilon_d(q, v)}(0) \}\cap \cl(q') \neq \emptyset,$ where \begin{equation} \label{epsilon_d}
     \varepsilon_d(q, v) = \sup\limits_{x \in q} \varepsilon(x, v) \end{equation} bounds the perturbation over the couple \( (q, v) \). 
    \item For $q = q_0$: $ \Delta_{d, \varepsilon}(q_0, v) = X_d, \quad \forall v \in U_d.$
\end{enumerate}
Note that this symbolic model of the perturbed system will be only used as analytical tools for proving the main results in this work. 

We define a quantizer $Q: X \rightarrow X_d$ that maps each continuous state $x \in X$ to a corresponding discrete (symbolic) state $q \in X_d$, such that $x \in q$. The inverse quantization map $Q^{-1}: X_d \rightrightarrows X$ is formally given by $Q^{-1}(q) = \{ x \in X \mid x \in q \}$ for each state $q \in X_d$. For a discrete subset $A \subseteq X_d$, the inverse image is extended as $Q^{-1}(A) = \bigcup_{q \in A} Q^{-1}(q)$, implying that $x \in Q^{-1}(A)$ if and only if there exists $q \in A$ s.t. $x \in q$. This inverse mapping can also be generalized to trajectory-level specifications. Let $\mathcal{H}_{Spec} \subseteq X_d^w \cup X_d^*$ denote a set of symbolic trajectories satisfying a specification. The corresponding set of continuous trajectories is defined as $Q^{-1}(\mathcal{H}_{Spec}) \subseteq X^w \cup X^*$, where $Q^{-1}(\mathcal{H}_{Spec}) = \{ \sigma_x = (x_0, x_1, \ldots) \in X^w \cup X^* \mid\ 
(Q(x_0), Q(x_1), \ldots) \in \mathcal{H}_{Spec} \}.$






We define the maximal robustness margin by the map \(\eta: X_d \times U_d \times 2^{X_d} \rightarrow \mathbb{R}_{\geq 0},\)
quantifying the largest admissible perturbation for a symbolic state-control pair $(q, v) \in X_d \times U_d$ such that all its successors remain within the interior of a target set $A \subseteq X_d$. Formally, it is defined as
\small
\begin{equation}
\label{Def:eta_a}
\eta(q, v, A)= 
\begin{cases}
\begin{aligned}[t]
 &  \sup \left\{ \varepsilon \geq 0 \,\middle|\,  \cl(\overline{f}(q,v,D)) + \mathcal{B}_{\varepsilon}(0) \subseteq \right. \\
 & \hspace{2.8cm}\left. \Int(Q^{-1}(A)) \right\}  \\
&  \qquad \text{if } \cl(\overline{f}(q,v,D)) \subseteq \Int(Q^{-1}(A)),  \\
& 0  \qquad \text{otherwise}.
\end{aligned}
\end{cases}
\end{equation}



We now introduce the maximal robustness margin over a collection of sets $L = \{Z_0, Z_1, \dots, Z_m\}$, where $Z_i \subseteq X_d \times U_d$ for $i \in \{0,1,\ldots,m\}$. This margin is defined by the map $\gamma: X_d \times U_d \times (2^{X_d \times U_d})^m \rightarrow \mathbb{R}_{\geq 0}$, which assigns to each $(q,v) \in X_d \times U_d$ and $L$, the value
\begin{equation}
\label{Def:gamma_l}
\gamma(q, v, L)  = \eta(q, v, I(q, v, L)),
\end{equation}
with 
 \begin{equation*}
\begin{aligned}
    I(q,v, L)&=     \bigcap\limits_{i=0}^m \{\pi_X(Z_i) \mid Z_i \in L,\; (q, v) \in Pre_{\Delta_d}(Z_i)  \}, \\
 &=  \bigcap\limits_{i=0}^m \{\pi_X(Z_i)  \mid Z_i \in L,\;\Delta_d(q, v) \subseteq \pi_X(Z_i)  \} \subseteq X_d.
\end{aligned}
\end{equation*}
The map $\gamma$ provides a unified measure based on the map $\eta$, but considers the intersection of all sets \(\pi_X(Z_i)\) with $Z_i$ in \(L\) and for which the successors of $(q, v)$ belongs to $Z_i$. This construction allows us to provide a common robustness margin for a collection of sets $L$. Thus, the map $\gamma$ seeks the maximum margins expanding the reachable set of $(q, v)$ to remain inside the intersection of all the sets containing the successors of $(q, v)$. The intersection $I(q,v, L)$ ensures only the minimal zone of sets containing the successors are considered. If the sets $Z_i$ are ordered, i.e., $Z_i \subseteq Z_{i+1}$, then $I(q, v, L)$ corresponds to the smallest set $Z_i$ that contains the successors of $(q, v)$. 

Before presenting our main results, we first introduce the fundamental concept of fixed-point theory and the $\mu$-calculus~\cite{tabuada2009verification,2001automata}. 

\subsection[fixed-points and mu calculus]{fixed-points and $\mu$-calculus}
\label{Sec:fixed_point}

In the study of algorithms characterized by fixed-point expressions, a central concept is monotone functions. Given the  transition system $S(\Sigma)=(X,X^0,U,\Delta)$ in~\eqref{eqn:system}, a map $G:2^{X\times U} \to 2^{X \times U}$ is said to be monotone if for all $Z_1,Z_2 \subseteq X\times U$, with $Z_1 \subseteq Z_2$, $G(Z_1) \subseteq G(Z_2)$. A subset \( Z^* \subseteq X \times U \) is considered a fixed-point of \( G \) if \( Z^* = G(Z^*). \) Among the fixed-points of $G$, two particular types are of special interest: maximal and minimal fixed-points. 
A fixed-point $\hat{Z}$ of $G$ is called the \emph{maximal fixed-point} if, for any $Z \subseteq X \times U$, \(\; \;Z \subseteq G(Z) \implies Z \subseteq \hat{Z}.\)
Conversely, a fixed-point $\check{Z}$ of $G$ is called the \emph{minimal fixed-point} if, for any $Z \subseteq X \times U$, \(\; \;Z \supseteq G(Z) \implies Z \supseteq \check{Z}.\)
To denote these fixed-points, we use \( \nu \) for the maximal fixed-point and \( \mu \) for the minimal fixed-point, defined as follows:
 $\hat{Z} = \nu Z . G(Z),$ and $\check{Z} = \mu Z . G(Z) .$
The computation of the maximal fixed-point can be achieved through non-growing iterations as follows: $\nu^0 Z . G(Z) = X \times U$ and $\nu^{i+1} Z . G(Z)  = G(\nu^i Z . G(Z)).$ In this iteration process, \( \nu^i Z . G(Z) \) represents the result after \( i \) iterations, starting from the initial set \( X \times U\). Similarly, the minimal fixed-point is computed using non-shrinking iterations $\mu^0 Z . G(Z)  = \varnothing \times U $ and $\mu^{i+1} Z . G(Z)  = G(\mu^i Z . G(Z)).$ Here, \( \mu^i Z . G(Z) \) denotes the result after \( i \) iterations, starting from the set \( \varnothing \times U\). The following theorem from~ \cite{2001automata,davey2002introduction} states the existence and uniqueness of minimal and maximal fixed-points for monotone functions over finite domains.
\begin{theorem}[\cite{2001automata,davey2002introduction}]
\label{theorem:existence_fixed_point}    
Let $G: 2^{X \times U} \to 2^{X \times U}$ be a monotone function with $X$ and $U$ finite sets. Then there exist integers $n_1, n_2 \in \mathbb{N}$ such that \(
\nu^{n_1+1} Z . G(Z) = \nu^{n_1} Z . G(Z) 
\quad \text{and} \quad 
\mu^{n_2+1} Z . G(Z) = \mu^{n_2} Z . G(Z).
\)
Moreover, the fixed-points are unique and satisfy \(
\nu Z . G(Z) = \nu^{n_1} Z . G(Z) 
\quad \text{and} \quad 
\mu Z . G(Z) = \mu^{n_2} Z . G(Z).
\)
\end{theorem}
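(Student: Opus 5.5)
The plan is the classical Kleene-type iteration argument on the finite lattice $(2^{X\times U},\subseteq)$. I will first show by induction that the iterates are monotone sequences. For the maximal fixed-point, $\nu^1 Z.G(Z) = G(X\times U)\subseteq X\times U = \nu^0 Z.G(Z)$ holds trivially. If $\nu^{i+1}\subseteq \nu^i$, then monotonicity of $G$ gives $\nu^{i+2}=G(\nu^{i+1})\subseteq G(\nu^i)=\nu^{i+1}$. Hence the sequence $(\nu^i)_i$ is non-increasing. Symmetrically, $\mu^0=\varnothing\times U$ is contained in everything, so $\mu^0\subseteq\mu^1$, and induction gives a non-decreasing sequence $(\mu^i)_i$.

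Next, I will use finiteness to get stabilization. Since $X\times U$ is finite, a strictly decreasing chain of subsets has length at most $|X\times U|+1$. So there is $n_1\le |X\times U|$ with $\nu^{n_1+1}=\nu^{n_1}$, and likewise an $n_2$ for the $\mu$-iterates. Once two consecutive iterates agree, all later ones agree too, because each is obtained by applying $G$ to the previous one. Thus $\nu^{n_1}$ and $\mu^{n_2}$ are fixed points of $G$.

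It remains to check extremality in the sense of the definitions. Let $Z\subseteq G(Z)$. I claim $Z\subseteq\nu^i$ for every $i$. The base case is $Z\subseteq X\times U$. For the step, $Z\subseteq G(Z)\subseteq G(\nu^i)=\nu^{i+1}$. Hence $Z\subseteq\nu^{n_1}$, so $\nu^{n_1}$ is the maximal fixed-point. Dually, let $Z\supseteq G(Z)$. I claim $\mu^i\subseteq Z$ for every $i$. The base case is immediate, and for the step $\mu^{i+1}=G(\mu^i)\subseteq G(Z)\subseteq Z$. Hence $\mu^{n_2}$ is the minimal fixed-point.

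Uniqueness follows from these two properties. If $\hat Z,\hat Z'$ both satisfy the maximality property, each is a post-fixed point, since it is a fixed point, and so each is contained in the other. Hence $\hat Z=\hat Z'$, and the same argument applies to minimal fixed-points. This identifies $\nu Z.G(Z)=\nu^{n_1}Z.G(Z)$ and $\mu Z.G(Z)=\mu^{n_2}Z.G(Z)$.

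The only slightly delicate point is the paper's convention $\mu^0=\varnothing\times U$, which is simply the empty set and so is still the bottom element. The argument is otherwise routine. I expect no real obstacle beyond stating the chain-length bound cleanly.
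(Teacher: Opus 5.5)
Your argument is correct and is the standard Kleene-iteration proof on the finite lattice $(2^{X\times U},\subseteq)$ found in the cited references; the paper gives no proof of its own and simply defers to those sources. One cosmetic point: the paper's $\mathbb{N}$ denotes the positive integers, so if the iterates already stabilize at index $0$ you should take $n_1=1$ (resp.\ $n_2=1$), which is harmless since, as you note, the iterates stay constant once two consecutive ones agree.
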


The next section presents general theoretical results applicable to any specification. 

\section{General results for robustness margins }
\label{Sec:3}
In this section, we present two general results that are independent of the specification. 
First, we establish a refinement procedure that lifts a controller synthesized for the 
abstract nominal system to a controller for the perturbed concrete system under any specification. This refinement is a key step to ensures correctness-by-construction is preserved 
when transferring the controller to the perturbed concrete system. 
Compared with existing refinement frameworks based on simulation and approximate 
bisimulation relations \cite{tabuada2009verification,Girard2007,zamani2014symbolic}, 
our approach explicitly characterizes the perturbation bounds required for correctness preservation. Second, considering two predecessor operators defined for the nominal and perturbed symbolic models, 
we derive conditions under which the perturbation guarantees equality of the two predecessor operators. 

\subsection{Refinement procedure}

In the next lemma, we establish a result showing that the behavior of the perturbed concrete system is contained in that of the perturbed symbolic system, under appropriate conditions on the respective perturbation maps.

\begin{lemma}
\label{Proposition:behavior}

Consider the discrete-time control system \(\Sigma\) in~\eqref{eqn:system} and its symbolic model \(S_d(\Sigma) = (X_d, X_d^0, U_d, \Delta_d)\) constructed in Section~\ref{Def:Sd}. Let the maps \(\varepsilon: X \times U_d \to \mathbb{R}_{\geq 0}\) and \(\mu: X \times U_d \to \mathbb{R}_{\geq 0}\) satisfying \(\mu(x, u) \le\varepsilon(x, u)\) for all \((x, u) \in X \times U_d\). The systems \(S(\Sigma_{\mu})\) and \(S(\Sigma_{\varepsilon})\) represent perturbed versions of the system \(\Sigma\). Let \(S_d(\Sigma_{\varepsilon}) = (X_d, X_d^0, U_d,  \Delta_{d, \varepsilon})\) denote the symbolic model of the perturbed system \(\Sigma_{\varepsilon}\), and \(\mathcal{C}_{d, \varepsilon}\) be a controller for this system. We define a controller $\mathcal{C}:X\rightrightarrows U$ for the system \( S(\Sigma_{\mu}) \) as $\mathcal{C}(x) = \mathcal{C}_{d, \varepsilon}(Q(x))$ for $x \in X$. 
For any behavior $\sigma = ((x_0,u_0), (x_1,u_1), \ldots) \in \mathcal{H}(S^{\mathcal{C}}(\Sigma_{\mu}))$ we have that \(\sigma_d = ((q_0,u_0), (q_1,u_1), \ldots) \in \mathcal{H}(S^{\mathcal{C}_{d,\varepsilon}}_{d}(\Sigma_{\varepsilon}))\), with $q_i = Q(x_i)$ for all $i \in \N$.
\end{lemma}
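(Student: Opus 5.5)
The plan is to argue step by step along the behavior, showing that each concrete transition of $S^{\mathcal{C}}(\Sigma_{\mu})$ is matched by a symbolic transition of $S^{\mathcal{C}_{d,\varepsilon}}_d(\Sigma_{\varepsilon})$. We then check the initial condition and maximality separately.

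\textbf{Transitions.} Fix $i$ with $(x_i,u_i)$ and $x_{i+1}$ in $\sigma$, and set $q_i=Q(x_i)$, $q_{i+1}=Q(x_{i+1})$.
\begin{enumerate}
\item \emph{Input admissibility.} Since $u_i\in\mathcal{C}(x_i)=\mathcal{C}_{d,\varepsilon}(Q(x_i))=\mathcal{C}_{d,\varepsilon}(q_i)$, the input is allowed by the symbolic controller, and $u_i\in U_d$.
\item \emph{Transition membership.} We must show $q_{i+1}\in\Delta_{d,\varepsilon}(q_i,u_i)$. If $q_i=q_0$, this is immediate because $\Delta_{d,\varepsilon}(q_0,v)=X_d$.
\item \emph{Case $q_i\neq q_0$.} By the definition of $\Delta_\mu$, we can write $x_{i+1}=f(x_i,u_i,d)+e$ with $d\in D$ and $\|e\|\le\mu(x_i,u_i)$. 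Since $x_i\in q_i$, we have $f(x_i,u_i,d)\in f(q_i,u_i,D)\subseteq\cl(\overline{f}(q_i,u_i,D))$.
\item \emph{Bounding the perturbation.} We have $\|e\|\le\mu(x_i,u_i)\le\varepsilon(x_i,u_i)\le\sup_{x\in q_i}\varepsilon(x,u_i)=\varepsilon_d(q_i,u_i)$ by~\eqref{epsilon_d}. Hence $x_{i+1}\in\cl(\overline{f}(q_i,u_i,D))+\mathcal{B}_{\varepsilon_d(q_i,u_i)}(0)$.
\item \emph{Conclusion of the step.} Since $x_{i+1}\in q_{i+1}\subseteq\cl(q_{i+1})$, the intersection in the definition of $\Delta_{d,\varepsilon}$ is nonempty. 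This holds also when $x_{i+1}\notin X$, in which case $q_{i+1}=q_0$. Therefore $q_{i+1}\in\Delta_{d,\varepsilon}(q_i,u_i)=\Delta^{\mathcal{C}_{d,\varepsilon}}_{d,\varepsilon}(q_i,u_i)$.
\end{enumerate}

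One subtlety concerns states leaving $X$. The quantizer $Q$ is stated on $X$, so we implicitly extend it by $Q(x)=q_0$ for $x\notin X$. We also note that $\mathcal{C}(x)$ is empty there unless $\mathcal{C}_{d,\varepsilon}(q_0)\neq\emptyset$, which is consistent with this extension.

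\textbf{Initial condition.} We need $x_0\in X^0\cap\dom(\mathcal{C})$ to give $q_0=Q(x_0)\in X_d^0\cap\dom(\mathcal{C}_{d,\varepsilon})$. The domain part follows from $\mathcal{C}(x_0)=\mathcal{C}_{d,\varepsilon}(Q(x_0))\neq\emptyset$. For $Q(x_0)\in X_d^0$, I would assume the standard consistency $X^0\subseteq Q^{-1}(X^0_d)$ between concrete and symbolic initial sets, which the construction presumably intends.

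\textbf{Maximality.} This is the step I expect to be the main obstacle. If $\sigma$ is complete, so is $\sigma_d$, and it is trivially maximal. If $\sigma$ is finite and maximal, ending at $x_N$, then no input is enabled at $x_N$ in the controlled concrete system. Since $\Delta_\mu(x,u)\neq\emptyset$ always holds, this forces $\mathcal{C}(x_N)=\emptyset$, i.e.\ $\mathcal{C}_{d,\varepsilon}(q_N)=\emptyset$. Hence $\sigma_d$ cannot be extended either, because the symbolic controlled system blocks at $q_N$. Together with the transition argument, this gives $\sigma_d\in\mathcal{H}(S^{\mathcal{C}_{d,\varepsilon}}_d(\Sigma_\varepsilon))$.
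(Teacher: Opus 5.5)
Your proposal is correct and follows the paper's proof. The core is the same step-by-step check: $u_i\in\mathcal{C}(x_i)=\mathcal{C}_{d,\varepsilon}(q_i)$, and $\mu\le\varepsilon\le\varepsilon_d$ gives $x_{i+1}\in\cl(\overline{f}(q_i,u_i,D))+\mathcal{B}_{\varepsilon_d(q_i,u_i)}(0)$, a set that meets $\cl(q_{i+1})$. The paper omits the $q_i=q_0$ case, the initial-state consistency $X^0\subseteq Q^{-1}(X_d^0)$, and maximality; your treatment of these is extra and sound.
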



We derive the following auxiliary proposition, which enables the refinement of a controller synthesized for the disturbance-free abstraction into a controller for the concrete perturbed system.

\begin{proposition}
\label{Prop:refinement}
Consider the discrete-time control system $\Sigma$ and its perturbed version $\Sigma_\varepsilon$ in~\eqref{eqn:system}, for a perturbation map $\varepsilon: X \times U_d \rightarrow \mathbb{R}_{\geq 0}$ along with their corresponding symbolic models $S_d(\Sigma) = (X_d, X_d^0, U_d, \Delta_d)$ and $S_d(\Sigma_\varepsilon) = (X_d, X_d^0, U_d,  \Delta_{d, \varepsilon})$, constructed in Section~\ref{Def:Sd}. Let $\mathcal{C}_d$ and $\mathcal{C}_{d,\varepsilon}$ denote controllers synthesized for the symbolic models $S_d(\Sigma)$ and $S_d(\Sigma_\varepsilon)$, respectively, with respect to the same specification $\mathcal{H}_{Spec} \subseteq X_d^w \cup X_d^*$. We define the controller $\mathcal{C}: X\rightrightarrows U_d$ by $\mathcal{C}(x) = \mathcal{C}_{d}(Q(x)).$ If the following holds: $\forall q \in X_d, \quad \mathcal{C}_d(q) = \mathcal{C}_{d, \varepsilon}(q),$ then \(\mathcal{C}\) is a controller for the perturbed system \(S(\Sigma_{\mu})\) and the specification $Q^{-1}(\mathcal{H}_{Spec})$, for any map $\mu: X \times U_d \rightarrow \mathbb{R}_{\geq 0}$ satisfying $\mu(x, u) \le \varepsilon(x, u)$ for all $(x, u) \in X \times U_d$.
\end{proposition}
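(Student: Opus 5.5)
The plan is to reduce the statement to Lemma~\ref{Proposition:behavior}, using the hypothesis $\mathcal{C}_d=\mathcal{C}_{d,\varepsilon}$ to identify the refined controller with the one considered there.

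First, since $\mathcal{C}_d(q)=\mathcal{C}_{d,\varepsilon}(q)$ for all $q\in X_d$, the controller $\mathcal{C}(x)=\mathcal{C}_d(Q(x))$ coincides with $\mathcal{C}_{d,\varepsilon}(Q(x))$. This is exactly the controller built in Lemma~\ref{Proposition:behavior} for $S(\Sigma_\mu)$ with $\mu\le\varepsilon$. Before invoking it, I will check that $\mathcal{C}$ is a legitimate controller, i.e.\ $\mathcal{C}(x)\subseteq U^a(x)$ in $S(\Sigma_\mu)$. This should be immediate, because $\Delta_\mu(x,u)=f(x,u,D)+\mathcal{B}_{\mu(x,u)}(0)$ is nonempty whenever $D\neq\emptyset$. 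Any admissibility convention can be handled by the same argument used in the lemma.

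Second, take an arbitrary maximal behavior $\sigma=((x_0,u_0),(x_1,u_1),\ldots)\in\mathcal{H}(S^{\mathcal{C}}(\Sigma_\mu))$ with state trajectory $\sigma_x$. By Lemma~\ref{Proposition:behavior}, the quantized sequence $\sigma_d$ with $q_i=Q(x_i)$ lies in $\mathcal{H}(S_d^{\mathcal{C}_{d,\varepsilon}}(\Sigma_\varepsilon))$. Hence its state trajectory $(Q(x_0),Q(x_1),\ldots)$ belongs to $\mathcal{H}_x(S_d^{\mathcal{C}_{d,\varepsilon}}(\Sigma_\varepsilon))$. Because $\mathcal{C}_{d,\varepsilon}$ is a controller for $S_d(\Sigma_\varepsilon)$ and $\mathcal{H}_{Spec}$, this set is contained in $\mathcal{H}_{Spec}$. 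By the definition of $Q^{-1}(\mathcal{H}_{Spec})$, we then get $\sigma_x\in Q^{-1}(\mathcal{H}_{Spec})$. Since $\sigma$ was arbitrary, $\mathcal{H}_x(S^{\mathcal{C}}(\Sigma_\mu))\subseteq Q^{-1}(\mathcal{H}_{Spec})$, which is the claim.

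The main obstacle is the subtle point hidden in the lemma's conclusion: maximality must be preserved under quantization. If a concrete behavior is finite, its abstract image must also be maximal, so that it is genuinely in $\mathcal{H}$ and the specification applies to it. I intend to take this directly from Lemma~\ref{Proposition:behavior}, whose statement asserts membership in $\mathcal{H}$. If that turns out to be only partially covered, I would argue as follows. A concrete behavior terminates only when $\mathcal{C}(x_N)=\emptyset$, which means $\mathcal{C}_{d,\varepsilon}(Q(x_N))=\emptyset$, so the abstract behavior cannot be extended either. Here the equality $\mathcal{C}=\mathcal{C}_{d,\varepsilon}\circ Q$ is exactly what is needed. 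The containment $\Delta_\mu(x,u)\subseteq Q^{-1}(\Delta_{d,\varepsilon}(Q(x),u))$, which gives the step-by-step correspondence, is inherited from the lemma.
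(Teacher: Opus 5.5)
Your proposal is correct and matches the paper's proof: you identify $\mathcal{C}$ with $\mathcal{C}_{d,\varepsilon}\circ Q$ via the hypothesis, lift each maximal behavior of $S^{\mathcal{C}}(\Sigma_\mu)$ to one of $S_d^{\mathcal{C}_{d,\varepsilon}}(\Sigma_\varepsilon)$ using Lemma~\ref{Proposition:behavior}, and conclude from the correctness of $\mathcal{C}_{d,\varepsilon}$ and the definition of $Q^{-1}(\mathcal{H}_{Spec})$. Your extra check that maximality is preserved (a terminating concrete behavior ends at a state with $\mathcal{C}_{d,\varepsilon}(Q(x_N))=\emptyset$, so its abstract image cannot be extended either) is worthwhile, since the paper leaves this implicit in the lemma's statement and its proof only verifies the step-by-step conditions.
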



\subsection{Robust predecessor operator}

In this section, we characterize a bound on the system’s perturbation map for the predecessor operator, defined in~\eqref{Def:predecessor}, to preserve equality across a predefined list of sets, irrespective of whether the nominal or perturbed system is considered.


\begin{theorem}
\label{Thrm:Pre_mu}
Consider the discrete-time control system $\Sigma$ and its perturbed version $\Sigma_\varepsilon$ in~\eqref{eqn:system}, for a perturbation map $\varepsilon: X \times U_d \rightarrow \mathbb{R}_{\geq 0}$ along with their corresponding symbolic models $S_d(\Sigma) = (X_d, X_d^0, U_d, \Delta_d)$ and $S_d(\Sigma_\varepsilon) = (X_d, X_d^0, U_d,  \Delta_{d, \varepsilon})$, constructed in Section~\ref{Def:Sd}. Let $ L = \{Z_0, \dots, Z_m\} \subseteq (X_d \times U_d)^{m+1} $ a collection of sets. If the map $\varepsilon_d: X_d \times U_d \rightarrow \mathbb{R}_{\geq 0}$ as constructed in~\eqref{epsilon_d} satisfies for all $ (q, v) \in X_d  \times U_d, \quad \varepsilon_d(q, v) < \gamma(q, v, L),$ then we have for all $j \in \{0,1, 2, \dots , m\}$, $ \Pre_{\Delta_d}(Z_j) = \Pre_{ \Delta_{d, \varepsilon}}(Z_j).$
\end{theorem}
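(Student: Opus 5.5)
We prove the two inclusions $\Pre_{\Delta_{d,\varepsilon}}(Z_j) \subseteq \Pre_{\Delta_d}(Z_j)$ and $\Pre_{\Delta_d}(Z_j) \subseteq \Pre_{\Delta_{d,\varepsilon}}(Z_j)$ separately, for a fixed $j$ and a fixed pair $(q,v)\in X_d\times U_d$.

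The first inclusion is monotonicity of the abstraction in the perturbation. For $q\neq q_0$ we have $\cl(\overline{f}(q,v,D)) \subseteq \cl(\overline{f}(q,v,D))+\mathcal{B}_{\varepsilon_d(q,v)}(0)$, so every nominal successor is also a perturbed successor. For $q=q_0$ both relations equal $X_d$. Hence $\Delta_d(q,v)\subseteq \Delta_{d,\varepsilon}(q,v)$ for all $(q,v)$. Now take $(q,v)\in\Pre_{\Delta_{d,\varepsilon}}(Z_j)$, so that $\emptyset\neq\Delta_{d,\varepsilon}(q,v)\subseteq\pi_X(Z_j)$. Then $\Delta_d(q,v)\subseteq\pi_X(Z_j)$. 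For nonemptiness, we argue that $\Delta_d(q,v)\neq\emptyset$ always holds: $\cl(\overline{f}(q,v,D))$ is nonempty (assuming $D\neq\emptyset$) and is covered by the closures of the cells $\{q_i\}$, which partition $\R^n$. This gives the first inclusion. The convention on $\emptyset\times U$ is handled identically in both systems.

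The second inclusion is where the hypothesis is used. Let $(q,v)\in\Pre_{\Delta_d}(Z_j)$, so $\emptyset\neq\Delta_d(q,v)\subseteq\pi_X(Z_j)$.
\begin{itemize}
\item \emph{Case $q=q_0$.} Here $\Delta_{d,\varepsilon}(q_0,v)=X_d=\Delta_d(q_0,v)$, so there is nothing to prove.
\item \emph{Case $q\neq q_0$.} Since $Z_j\in L$ and $(q,v)\in\Pre_{\Delta_d}(Z_j)$, the set $\pi_X(Z_j)$ appears in the intersection defining $I(q,v,L)$. Thus $\Delta_d(q,v)\subseteq I(q,v,L)\subseteq\pi_X(Z_j)$. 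By hypothesis, $\gamma(q,v,L)=\eta(q,v,I(q,v,L))>\varepsilon_d(q,v)\ge0$. By definition~\eqref{Def:eta_a}, this forces the first branch, namely $\cl(\overline{f}(q,v,D))\subseteq\Int(Q^{-1}(I))$. Because $\varepsilon_d(q,v)$ is strictly below the supremum, there is some $\varepsilon'>\varepsilon_d(q,v)$ in the admissible set. Since the balls are nested, $\cl(\overline{f}(q,v,D))+\mathcal{B}_{\varepsilon_d(q,v)}(0)\subseteq\Int(Q^{-1}(I))$.
\end{itemize}

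It remains, in the second case, to turn this geometric inclusion into $\Delta_{d,\varepsilon}(q,v)\subseteq I$. Take $q'\in\Delta_{d,\varepsilon}(q,v)$. Then some point $y$ of the inflated set lies in $\cl(q')$, so $y\in\Int(Q^{-1}(I))$. The open neighbourhood $\Int(Q^{-1}(I))$ of $y$ meets $q'$, because $y$ is in the closure of $q'$. Since the cells are disjoint, this gives $q'\cap Q^{-1}(I)\neq\emptyset$, hence $q'\in I\subseteq\pi_X(Z_j)$. One subtlety must be checked: if the inflated set reaches outside $X$, then $q_0$ could become a successor. Membership in $\Int(Q^{-1}(I))$ handles this uniformly, whether or not $q_0\in I$. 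Nonemptiness of $\Delta_{d,\varepsilon}(q,v)$ follows from the first step. Therefore $(q,v)\in\Pre_{\Delta_{d,\varepsilon}}(Z_j)$.

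The main obstacle is this last step: passing from "the inflated reachable set lies in the interior of $Q^{-1}(I)$" to "every perturbed symbolic successor lies in $I$". This is exactly where the interior in the definition of $\eta$ and the strict inequality $\varepsilon_d<\gamma$ are needed. The interior handles boundary points shared by the closures of adjacent cells. The strict inequality guarantees that the inclusion still holds at the radius $\varepsilon_d(q,v)$ itself, even though $\gamma$ is only a supremum.
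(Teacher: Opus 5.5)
Your proof is correct and follows the same route as the paper. The inclusion $\Pre_{\Delta_{d,\varepsilon}}(Z_j)\subseteq\Pre_{\Delta_d}(Z_j)$ comes from $\Delta_d\subseteq\Delta_{d,\varepsilon}$, and the reverse inclusion comes from placing the inflated set $\cl(\overline{f}(q,v,D))+\mathcal{B}_{\varepsilon_d(q,v)}(0)$ inside $\Int(Q^{-1}(\cdot))$, so that any cell whose closure meets it lies in the target. The only difference is cosmetic: you work directly with $I(q,v,L)\subseteq\pi_X(Z_j)$, whereas the paper uses monotonicity of $\eta$ (Lemma~\ref{lem:2}) to pass to $\eta(q,v,\pi_X(Z_j))$; you also make explicit the $q_0$ case, the nonemptiness of successor sets, and the supremum step, which the paper leaves implicit.
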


We now prove that exceeding the bound defined by the map $\gamma$ violates the equality of the predecessors.

\begin{theorem}
\label{Thrm_pre_mu_neq}
In the setting of Theorem~\ref{Thrm:Pre_mu}, let $L = \{Z_0, \dots, Z_m\} \subseteq (X_d \times U_d)^{m+1}$, and assume there exists $(x^*, v^*) \in X \times U_d$ with $q^* = Q(x^*)$ such that$\varepsilon_d(q^*, v^*) > \gamma(q^*, v^*, L). $Then $\mathrm{Pre}_{\Delta_{d,\varepsilon}}(Z_{j^*}) \neq \mathrm{Pre}_{\Delta_d}(Z_{j^*})$ for some $j^* \in \{0, \dots, m\}$.
\end{theorem}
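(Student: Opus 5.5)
We argue by contraposition at the single pair $(q^*,v^*)$. We exhibit a symbolic state $q'$ that is a successor of $(q^*,v^*)$ in the perturbed abstraction $\Delta_{d,\varepsilon}$ but lies outside $\pi_X(Z_{j^*})$ for some $Z_{j^*}\in L$ with $(q^*,v^*)\in\Pre_{\Delta_d}(Z_{j^*})$. Throughout, write $I^*=I(q^*,v^*,L)$ and $F^*=\cl(\overline{f}(q^*,v^*,D))$. Since $x^*\in X$, we have $q^*\neq q_0$, so the first clause of the definition of $\Delta_{d,\varepsilon}$ in Section~\ref{Def:Sd} applies.

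\emph{Step 1 (the family defining $I^*$ is nonempty).} If no $Z_i\in L$ satisfied $(q^*,v^*)\in\Pre_{\Delta_d}(Z_i)$, the intersection defining $I^*$ would be empty. With the natural convention, this gives $I^*=X_d$ and $Q^{-1}(I^*)=\R^n$, because the cells $q_0,\dots,q_N$ cover $\R^n$. Then $\eta(q^*,v^*,I^*)=+\infty$, which contradicts $\varepsilon_d(q^*,v^*)>\gamma(q^*,v^*,L)$. Hence $\gamma(q^*,v^*,L)$ is finite, and at least one $Z_i$ with $\emptyset\neq\Delta_d(q^*,v^*)\subseteq\pi_X(Z_i)$ enters the intersection.

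\emph{Step 2 (a point of the enlarged reachable set outside $Q^{-1}(I^*)$).} This is the delicate step, because of the interior in the definition~\eqref{Def:eta_a} of $\eta$. Pick $\varepsilon'$ with $\gamma(q^*,v^*,L)<\varepsilon'<\varepsilon_d(q^*,v^*)$.
\begin{itemize}
\item If we are in the first case of~\eqref{Def:eta_a}, then $\varepsilon'$ exceeds the supremum. Hence $F^*+\mathcal{B}_{\varepsilon'}(0)\not\subseteq\Int(Q^{-1}(I^*))$.
\item If we are in the ``otherwise'' case, then already $F^*\not\subseteq\Int(Q^{-1}(I^*))$.
\end{itemize}
In either case there is a point $y\in F^*+\mathcal{B}_{\varepsilon'}(0)$ with $y\notin\Int(Q^{-1}(I^*))$. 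Therefore every neighbourhood of $y$ meets the complement of $Q^{-1}(I^*)$. Since $\mathcal{B}_{\varepsilon_d(q^*,v^*)-\varepsilon'}(y)\subseteq F^*+\mathcal{B}_{\varepsilon_d(q^*,v^*)}(0)$, we obtain a point $z$ in the enlarged set with $z\notin Q^{-1}(I^*)$. The cells $q_0,\dots,q_N$ partition $\R^n$, so $z\in q'$ for some $q'\in X_d\setminus I^*$. Because $z\in\cl(q')$ and $z$ lies in the enlarged set, the definition of $\Delta_{d,\varepsilon}$ gives $q'\in\Delta_{d,\varepsilon}(q^*,v^*)$.

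\emph{Step 3 (conclusion).} Since $q'\notin I^*$ and $I^*$ is the intersection of the sets $\pi_X(Z_i)$ over the nonempty family from Step 1, there is $j^*$ such that $(q^*,v^*)\in\Pre_{\Delta_d}(Z_{j^*})$ while $q'\notin\pi_X(Z_{j^*})$. Then $\Delta_{d,\varepsilon}(q^*,v^*)\not\subseteq\pi_X(Z_{j^*})$, so $(q^*,v^*)\notin\Pre_{\Delta_{d,\varepsilon}}(Z_{j^*})$. This yields $\Pre_{\Delta_{d,\varepsilon}}(Z_{j^*})\neq\Pre_{\Delta_d}(Z_{j^*})$.

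The main obstacle is Step 2. We must pass from ``not contained in the interior'' to an actual cell outside $I^*$ that the enlarged reachable set touches. The strict inequality $\varepsilon_d(q^*,v^*)>\gamma(q^*,v^*,L)$, used through the slack $\varepsilon_d(q^*,v^*)-\varepsilon'>0$, is exactly what handles boundary points. A secondary point to make explicit is the empty-intersection convention in Step 1.
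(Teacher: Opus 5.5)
Your proof is correct and follows the same route as the paper. You find a point of $\cl(\overline{f}(q^*,v^*,D))+\mathcal{B}_{\varepsilon_d(q^*,v^*)}(0)$ outside $\Int(Q^{-1}(I^*))$, extract a successor $q'\in\Delta_{d,\varepsilon}(q^*,v^*)$ with $q'\notin I^*$, and then pick $Z_{j^*}$ in the family defining $I^*$ with $q'\notin\pi_X(Z_{j^*})$. Your slack $\varepsilon'$, the empty-family check and the remark $q^*\neq q_0$ only make explicit what the paper compresses into ``$q'$ can be chosen outside $A^*$''. The slack is not strictly needed: a point outside $\Int(Q^{-1}(I^*))$ already lies in $\cl(q')$ for some cell $q'\notin I^*$, since the complement is a finite union of cells, and $\Delta_{d,\varepsilon}$ is defined through $\cl(q')$.
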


\begin{remark}
Theorems~\ref{Thrm:Pre_mu} and ~\ref{Thrm_pre_mu_neq} together show that $\gamma(\cdot, \cdot, L)$ is the sharp threshold for preservation of the predecessor operator across $L$: the equality $\mathrm{Pre}_{\Delta_d}(Z_j) = \mathrm{Pre}_{\Delta_{d,\varepsilon}}(Z_j)$ holds for every $Z_j \in L$ when $\varepsilon < \gamma$, and fails on at least one $Z_j$ as soon as $\varepsilon$ strictly exceeds $\gamma$ at any state-input pair. Consequently, having a system with perturbations exceeding $\gamma$  means we can no longer satisfy the refinement procedure in Proposition~\ref{Prop:refinement} and thus we cannot guarantee the correctness of the controller for the perturbed concrete system. 
\end{remark}


In the next sections, we follow the procedure illustrated in Fig.~\ref{fig:schema_of_the_paper}. We start from the disturbance-free system $S(\Sigma)$ and construct its symbolic model $S_d(\Sigma)$. We then synthesize a controller $\mathcal{C}_d$ for a given LTL specification. Using the same procedure we  theoretically construct a controller $\mathcal{C}_{d,\varepsilon}$ for the perturbed symbolic model $S_d(\Sigma_\varepsilon)$ for the same LTL specification. The next step consists in quantifying the bound on the perturbation map to preserve the equality of the predecessor operator and thus the equality $\mathcal{C}_{d,\varepsilon} = \mathcal{C}_d$. Then using Proposition~\ref{Prop:refinement}, we show that the controller $\mathcal{C}_d$ can be refined to the perturbed concrete system $S(\Sigma_\varepsilon)$ under the same condition on the perturbation.
In summary, this procedure allows us to determine, for a given controller synthesized for the system to satisfy a specification, the maximal robustness margin the system can tolerate while still satisfying the specification under the same controller.
\begin{figure}[htbp]
    \centering
    \includegraphics[width=0.5\textwidth]{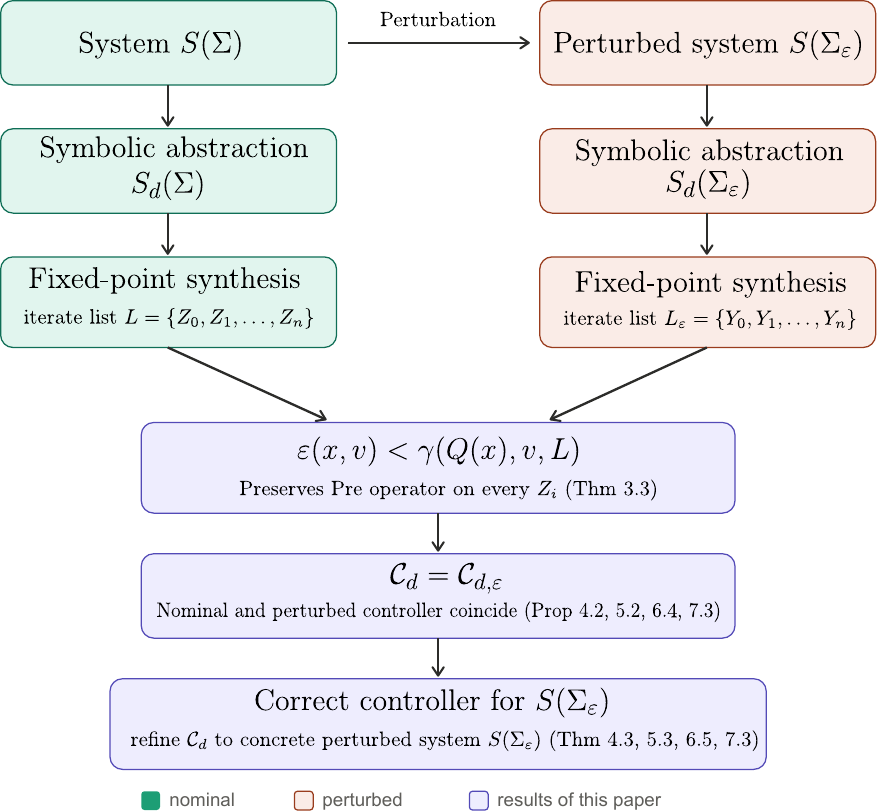}
\caption{Workflow for abstraction based  robust control synthesis. We synthesis the controller $\mathcal{C}_d$ for the nominal system and show the same controller can be refined for perturbed system under conditions on the perturbation. }
    \label{fig:schema_of_the_paper}
\end{figure}

\section{Safety controller}
\label{sec:4}

Consider the system $S(\Sigma) = (X, X_0, U, \Delta)$ in~\eqref{eqn:system}, the safety synthesis problem seeks a  controller $\mathcal{C}$ such that all closed-loop behaviors remain within a prescribed safe set $X_T \subseteq X$. This can be formally defined as follows. 

\begin{definition}
\label{Def:safety_controller}
Consider the system $S(\Sigma) = (X , X_0, U, \Delta)$ in~\eqref{eqn:system},  a controller $\mathcal{C} : X \rightrightarrows U$ and its domain $\dom(\mathcal{C})$. The controller $\mathcal{C}$ is a safety controller for the safe set $X_T \subseteq X$ if for any initial state $x_0 \in \dom(\mathcal{C})$, all maximal trajectories $\sigma=((x_0,u_0),(x_1,u_1),(x_2,u_2)\ldots) \in \mathcal{H}(S^{\mathcal{C}})$ are complete and satisfy, for all $ k \in \mathbb{N}, \quad \sigma_{x}(k)=x_k \in X_T.$
We denote the corresponding safety specification as $\mathcal{H}_{Spec} = \mathcal{H}_{Safety}^{X_T}= \{\sigma_x \in  X^w  \mid  \forall  k \in \mathbb{N},  \sigma_{x}(k)=x_k \in X_T\}.$
    
\end{definition}
The classical approach to synthesize a safety controller is the safety fixed-point Algorithm~\ref{alg:safe}. This algorithm takes as input a transition system \( S(\Sigma) \) and a safe set \( X_T \subseteq X \), and computes the set \( \Safe(X_T) \) and a sequence \(L=\{Z_i \subseteq X \times U\mid i \in \N \}\) containing the set of states and their admissible inputs computed at each iteration of the algorithm. The set  \( \Safe(X_T) \subseteq X \times U\) consists of all state-input pairs \( (x, u) \) such that every possible successor state under input \( u \) remains within \( X_T \). Formally, \( (x, u) \in \Safe(X_T) \) if and only if $\Delta(x, u) \subseteq \pi_X(\Safe(X_T))\subseteq X_T,$ where \( \pi_X(\Safe(X_T)) \) denotes the projection of \( \Safe(X_T) \) on the state space \( X \). 

Note that we have outlined a general structure for this algorithm, regardless whether the system being symbolic or concrete. However, the algorithm operates on the symbolic model $S_d(\Sigma)=(X_d, X_d^0, U_d, \Delta_d)$ constructed in Section~\ref{Def:Sd}, where the domain \( X_d \times U_d \) is finite.


\begin{algorithm}
\caption{Computation of $\Safe(X_T)$}
\label{alg:safe}
\begin{algorithmic}[1]
\STATE \textbf{Input:} $S(\Sigma)$, $W = X_T \times U$ where  $ X_T \subseteq X$
\STATE \textbf{Output:} $\Safe(X_T), L=\{Z_i \mid i \in \N \}$
\STATE $Z_0 := X \times U$
\REPEAT
    \STATE $Z_{i+1} := W \cap \Pre_{\Delta}(Z_i)$
\UNTIL{$Z_{i+1} = Z_i$}
\STATE $\Safe(X_T) := Z_i$
\end{algorithmic}
\end{algorithm}
The safety controller synthesis for the symbolic system $S_d(\Sigma)$ relies on the fixed-point computation of the monotone operator $G: 2^{X_d \times U_d} \rightarrow 2^{X_d \times U_d}$, defined as
$G(Z) = \mathrm{Pre}_{\Delta_d}(Z) \cap W,$ where $W = X_T \times U_d,$
and $X_T \subseteq X_d$ is the discrete safe set. The iterates $Z_i = \nu^i Z.\, G(Z)$, collected in the sequence $L = \{Z_0, Z_1, \ldots, Z_n\}$, converge in finite time to the maximal fixed-point $\mathrm{Safe}(X_T) = Z_n = \nu Z.\, G(Z)$, as established in Theorem~\ref{theorem:existence_fixed_point}. The safety controller $\mathcal{C}_d : X_d \rightrightarrows U_d$ is then constructed as follows:
\begin{equation}
\label{safety_controller_1}
\mathcal{C}_d(q) =
\begin{cases}
\begin{aligned}[t]
&\{ v \in U_d \mid (q, v) \in \nu Z . G(Z) \} \\ 
&\hspace{1.9cm} \text{if } q \in \pi_X(\nu Z . G(Z)), \\
&\emptyset \hspace{1.6cm}  \text{otherwise}.
\end{aligned}
\end{cases}
\end{equation}
This construction yields a correct-by-construction controller for the symbolic abstraction of the safety specification $\mathcal{H}_{\text{Safety}}^{X_T}$ \cite{tabuada2009verification}.
To account for perturbations, consider a function $\varepsilon: X \times U_d \rightarrow \mathbb{R}_{\geq 0}$ characterizing uncertainty, and let $S_d(\Sigma_\varepsilon) = (X_d, X_d^0, U_d,  \Delta_{d, \varepsilon})$ denote the symbolic abstraction of the perturbed system defined in Section~\ref{Def:Sd}. Applying the same fixed-point procedure to $S_d(\Sigma_\varepsilon)$ using the monotone operator $G_\varepsilon(Z) = \mathrm{Pre}_{\Delta_{d,\varepsilon}}(Z) \cap W$ yields the maximal fixed-point $\nu Y . G_\varepsilon(Y)$ and the corresponding sequence $L_\varepsilon = \{Y_0, Y_1, \dots, Y_m\}$. The safety controller for the perturbed abstraction is then given by \begin{equation}
\label{safety_controller_pert}
\mathcal{C}_{d, \varepsilon}(q) =
\begin{cases}
\begin{aligned}[t]
&\{ v \in U_d \mid (q, v) \in \nu Y . G_\varepsilon(Y) \} \\
& \hspace{1.9cm} \text{if } q \in \pi_X(\nu Y . G_\varepsilon(Y)), \\
&\emptyset  \hspace{1.6cm} \text{otherwise}.
\end{aligned}
\end{cases}
\end{equation}

In the subsequent result, we derive a sufficient condition on the perturbation map $\varepsilon$ under which the two controllers $\mathcal{C}_d$ and $\mathcal{C}_{d,\varepsilon}$ are guaranteed to be equivalent.

\begin{proposition}
\label{prop:safety_equiv}
Consider the discrete-time control system $\Sigma$ and its perturbed version $\Sigma_\varepsilon$ defined in~\eqref{eqn:system}, along with their corresponding symbolic models $S_d(\Sigma) = (X_d, X_d^0, U_d, \Delta_d)$ and $S_d(\Sigma_\varepsilon) = (X_d, X_d^0, U_d,  \Delta_{d, \varepsilon})$, constructed in Section~\ref{Def:Sd}. The perturbation map is given by $\varepsilon: X \times U_d \rightarrow \mathbb{R}_{\geq 0}$ and $X_T \subseteq X_d $ is the safe set. Let $\mathcal{C}_d$ and $\mathcal{C}_{d,\varepsilon}$ denote the safety controllers constructed in~\eqref{safety_controller_1} and~\eqref{safety_controller_pert} with the associated sequences $L = \{Z_0, Z_1, \ldots, Z_n\}$ and $L_\varepsilon = \{Y_0, Y_1, \ldots, Y_m\}$. If the perturbation map $\varepsilon: X \times U_d \rightarrow \mathbb{R}_{\geq 0}$ satisfies $\sup_{z\in Q(x)}\varepsilon(z,v) < \gamma(Q(x), v, L)$ for all $(x,v) \in X \times U_d$ then the following holds: $\forall q \in X_d, \quad \mathcal{C}_d(q) = \mathcal{C}_{d, \varepsilon}(q).$
\end{proposition}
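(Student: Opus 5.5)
The plan is to show by induction on the iteration index that the two fixed-point sequences coincide, i.e.\ $Y_i = Z_i$ for every $i$. It then follows that both sequences stabilize at the same index with the same limit $\nu Z.\,G(Z) = \nu Y.\,G_\varepsilon(Y)$. Since the controllers \eqref{safety_controller_1} and \eqref{safety_controller_pert} are read off from these fixed-points by the same formula, $\mathcal{C}_d(q)=\mathcal{C}_{d,\varepsilon}(q)$ for all $q\in X_d$.

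First I translate the hypothesis into the form required by Theorem~\ref{Thrm:Pre_mu}. For any $(q,v)\in X_d\times U_d$ with $q\neq\emptyset$, pick $x\in q$, so that $Q(x)=q$. The assumption then reads $\varepsilon_d(q,v)=\sup_{z\in q}\varepsilon(z,v)<\gamma(q,v,L)$, using the definition \eqref{epsilon_d}. The pair $q_0$ needs a separate remark: by construction $\Delta_d(q_0,v)=\Delta_{d,\varepsilon}(q_0,v)=X_d$ regardless of $\varepsilon$, so $\Pre_{\Delta_d}$ and $\Pre_{\Delta_{d,\varepsilon}}$ agree on pairs $(q_0,v)$ for every argument set. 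Hence the hypothesis of Theorem~\ref{Thrm:Pre_mu} holds, applied with the nominal collection $L=\{Z_0,\dots,Z_n\}$, or at least its conclusion holds at every relevant pair. This gives $\Pre_{\Delta_d}(Z_j)=\Pre_{\Delta_{d,\varepsilon}}(Z_j)$ for all $j\in\{0,\dots,n\}$.

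The induction then goes as follows. Base case: $Y_0=Z_0=X_d\times U_d$. Inductive step: if $Y_i=Z_i$ with $i\le n$, then
$Y_{i+1}=W\cap\Pre_{\Delta_{d,\varepsilon}}(Y_i)=W\cap\Pre_{\Delta_{d,\varepsilon}}(Z_i)=W\cap\Pre_{\Delta_d}(Z_i)=Z_{i+1}$.
The key subtlety is that the equality of predecessors is only guaranteed on members of the \emph{nominal} list $L$, not on arbitrary sets. The induction must therefore be phrased so that the argument of $\Pre_{\Delta_{d,\varepsilon}}$ is always some $Z_i\in L$. This is exactly what the inductive hypothesis supplies.

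For termination, $Z_{n+1}=Z_n$ by Theorem~\ref{theorem:existence_fixed_point}, and the induction gives $Y_{n+1}=Z_{n+1}=Z_n=Y_n$. So the perturbed iteration stops at the same index: it cannot stop earlier, since then $Z_i=Z_{i+1}$ would already hold for the nominal sequence. Thus $m=n$ and $L_\varepsilon=L$, and the controllers coincide.

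I expect the main obstacle to be the bookkeeping of the hypothesis rather than any real estimate. One part is justifying that strict inequality on all $(q,v)$ covers all pairs, including $q_0$, which is handled by the construction above. The other is making sure Theorem~\ref{Thrm:Pre_mu} is invoked on the fixed list $L$ before the induction begins. Invoking it on $L$ up front avoids any circularity with $L_\varepsilon$.
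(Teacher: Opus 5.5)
Your proof is correct and follows the paper's argument: an induction showing $Y_i = Z_i$ at every step, using Theorem~\ref{Thrm:Pre_mu} applied to the nominal list $L$ to replace $\Pre_{\Delta_{d,\varepsilon}}$ by $\Pre_{\Delta_d}$, which gives $m=n$ and identical controllers. Your separate treatment of the cell $q_0$, where $\Delta_d(q_0,v)=\Delta_{d,\varepsilon}(q_0,v)=X_d$, and your check that both iterations stop at the same index make explicit two points the paper's proof leaves implicit.
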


In this theorem, we have demonstrated that the controller of the symbolic model is equal to the controller of the perturbed symbolic model, provided that the perturbation is lower than a threshold map. This non-uniform maximum robustness margin map corresponds to the map $\gamma$ defined in~\eqref{Def:gamma_l} that can be explicitly computed using the output list containing the sets generated during the safety algorithm's iterations. In the next theorem which constitutes the main results for this subsection, we show how to refine this controller for the concrete perturbed system.

\begin{theorem}
\label{thm:safety_refine}
Consider the discrete-time control systems $\Sigma$ and its perturbed version $\Sigma_\varepsilon$ in~\eqref{eqn:system}, where \(\varepsilon: X \times U_d \rightarrow \mathbb{R}_{\geq 0}\) and the symbolic model $S_d(\Sigma)=(X_d, X_d^0, U_d, \Delta_d)$ constructed in Section~\ref{Def:Sd}. Let $\mathcal{C}_d$ denote the safety controller constructed in~\eqref{safety_controller_1} with the associated sequence $L = \{Z_0, Z_1, \ldots, Z_n\}$ and $X_T\subseteq X_d$ the safe set. 
We define the controller $\mathcal{C}: X\rightrightarrows U_d$ by $\mathcal{C}(x) = \mathcal{C}_{d}(Q(x))$. For any map \(\varepsilon: X \times U_d \rightarrow \mathbb{R}_{\geq 0}\) that satisfies $\sup_{z\in Q(x)}\varepsilon(z,v) < \gamma(Q(x), v, L)$ for all \((x, u) \in X \times U_d\), the controller \(\mathcal{C}\) is a safety controller for the concrete perturbed system \(S(\Sigma_{\varepsilon})\) and the safety specification $Q^{-1}(\mathcal{H}_{Safety}^{X_T})$.
\end{theorem}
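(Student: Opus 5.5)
The plan is to chain Proposition~\ref{prop:safety_equiv} with Proposition~\ref{Prop:refinement}, taking $\mu = \varepsilon$. First, the hypothesis $\sup_{z\in Q(x)}\varepsilon(z,v) < \gamma(Q(x),v,L)$ for all $(x,v)$ is exactly the hypothesis of Proposition~\ref{prop:safety_equiv}. By the definition~\eqref{epsilon_d} of $\varepsilon_d$, it is equivalent to $\varepsilon_d(q,v) < \gamma(q,v,L)$ for every $q \in X_d\setminus\{q_0\}$ and $v\in U_d$. Since the quantifier ranges over $x \in X$, the cell $q_0$ never appears. This is harmless: $\Delta_d(q_0,v)=\Delta_{d,\varepsilon}(q_0,v)=X_d$ independently of $\varepsilon$, so the predecessor operators agree on $q_0$ regardless.

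Proposition~\ref{prop:safety_equiv} then gives $\mathcal{C}_d(q) = \mathcal{C}_{d,\varepsilon}(q)$ for all $q\in X_d$. The mechanism behind it is Theorem~\ref{Thrm:Pre_mu}: it yields $\Pre_{\Delta_d}(Z_i)=\Pre_{\Delta_{d,\varepsilon}}(Z_i)$ for all $Z_i\in L$, and induction on $i$ gives $Y_i = Z_i$. Here $Y_0=Z_0=X_d\times U_d$, and $Y_{i+1}=W\cap \Pre_{\Delta_{d,\varepsilon}}(Y_i)=W\cap\Pre_{\Delta_d}(Z_i)=Z_{i+1}$. Hence both fixed points coincide, and so do the controllers~\eqref{safety_controller_1} and~\eqref{safety_controller_pert}.

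Next, $\mathcal{C}_{d,\varepsilon}$ is a correct-by-construction safety controller for $S_d(\Sigma_\varepsilon)$ and $\mathcal{H}^{X_T}_{Safety}$, being the output of the classical safety fixed-point on that abstraction. Likewise $\mathcal{C}_d$ is one for $S_d(\Sigma)$. So both hypotheses of Proposition~\ref{Prop:refinement} hold with the same specification. Applying it with $\mu=\varepsilon$ (trivially $\mu\le\varepsilon$) shows that $\mathcal{C}(x)=\mathcal{C}_d(Q(x))$ satisfies $\mathcal{H}_x(S^{\mathcal{C}}(\Sigma_\varepsilon))\subseteq Q^{-1}(\mathcal{H}^{X_T}_{Safety})$.

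The main obstacle is matching this inclusion to Definition~\ref{Def:safety_controller}, which additionally requires every maximal trajectory from $\dom(\mathcal{C})$ to be complete. I plan to obtain completeness through Lemma~\ref{Proposition:behavior}, again with $\mu=\varepsilon$. Any finite maximal behavior $\sigma$ of $S^{\mathcal{C}}(\Sigma_\varepsilon)$ ends at some $x_N$, and its quantized image is a behavior of $S_d^{\mathcal{C}_{d,\varepsilon}}(\Sigma_\varepsilon)$. Each $(q_i,u_i)$ with $u_i\in\mathcal{C}_{d,\varepsilon}(q_i)$ lies in the fixed point $Z_n$, so $\emptyset \ne \Delta_{d,\varepsilon}(q_i,u_i)\subseteq \pi_X(Z_n)$. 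The concrete successor $x_{N}$ therefore falls in a cell $q_N \in \pi_X(Z_n)=\dom(\mathcal{C}_d)$. Hence $\mathcal{C}(x_N)\ne\emptyset$, and since $\Delta_\varepsilon(x_N,u)\supseteq f(x_N,u,D)\neq\emptyset$, the behavior extends, contradicting maximality. This gives completeness, and $Q(x_k)\in X_T$ then gives $x_k\in Q^{-1}(X_T)$ for all $k$.
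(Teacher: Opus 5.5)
Your proposal is correct and follows the paper's proof, which likewise obtains $\mathcal{C}_d=\mathcal{C}_{d,\varepsilon}$ from Proposition~\ref{prop:safety_equiv} and then applies Proposition~\ref{Prop:refinement} with $\mu=\varepsilon$. Your separate completeness argument and your remark on $q_0$ are extra care that the paper omits; they are correct, though not strictly needed, since completeness is already implicit in the paper's inclusion: $Q^{-1}(\mathcal{H}_{Safety}^{X_T})$ contains only infinite sequences.
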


In the following, we present a simplified expression of the admissible robustness margin for safety specifications. 

\begin{proposition}
\label{prop:Safety_explicit}
Consider the discrete-time control system $\Sigma$ in~\eqref{eqn:system}, its symbolic model $S_d(\Sigma) = (X_d, X_d^0, U_d, \Delta_d)$ constructed in Section~\ref{Def:Sd}, and the safety controller $\mathcal{C}_d$ defined in~\eqref{safety_controller_1} with associated iterate sequence $L = \{Z_0, Z_1, \ldots, Z_n\}$, where $Z_n = \nu Z.\, G(Z)$.
If $(x, v) \in  Q^{-1}(Z_n)$, then $\gamma(Q(x), v,L) = \eta(Q(x), v, \dom(\mathcal{C}_d))$. 
\end{proposition}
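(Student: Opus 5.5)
The plan is to show that the target set appearing inside $\gamma$ collapses to the domain of the controller. By~\eqref{Def:gamma_l}, $\gamma(q,v,L)=\eta(q,v,I(q,v,L))$. It therefore suffices to prove that, for $q=Q(x)$ and $(q,v)\in Z_n$, we have $I(q,v,L)=\pi_X(Z_n)$, and that $\pi_X(Z_n)=\dom(\mathcal{C}_d)$. The second identity is immediate from~\eqref{safety_controller_1}: $\mathcal{C}_d(q)\neq\emptyset$ exactly when some $(q,v)\in\nu Z.\,G(Z)=Z_n$, i.e.\ when $q\in\pi_X(Z_n)$. Note also that $x\in Q^{-1}(Z_n)$ is read as $(Q(x),v)\in Z_n$.

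\textbf{Step 1: the iterates are nested.} I would show $Z_0\supseteq Z_1\supseteq\cdots\supseteq Z_n$ by induction. The base case $Z_1\subseteq X_d\times U_d=Z_0$ is trivial. For the inductive step, if $Z_{i}\subseteq Z_{i-1}$, then monotonicity of $\Pre_{\Delta_d}$ (inherited from monotonicity of $\pi_X$ and of inclusion) gives $Z_{i+1}=W\cap\Pre_{\Delta_d}(Z_i)\subseteq W\cap\Pre_{\Delta_d}(Z_{i-1})=Z_i$. Projecting, $\pi_X(Z_n)\subseteq\pi_X(Z_i)$ for every $i$. This is the only step with real content, and I expect it to be the main (mild) obstacle: one must handle the convention $\Pre_\Delta(\emptyset\times U)=\emptyset\times U$, which is itself consistent with monotonicity.

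\textbf{Step 2: $Z_n$ is among the sets contributing to $I$.} Since $Z_n$ is a fixed point, $(q,v)\in Z_n=W\cap\Pre_{\Delta_d}(Z_n)$. Hence $(q,v)\in\Pre_{\Delta_d}(Z_n)$, i.e.\ $\emptyset\neq\Delta_d(q,v)\subseteq\pi_X(Z_n)$. Therefore $\pi_X(Z_n)$ is one of the sets intersected in $I(q,v,L)$, which gives $I(q,v,L)\subseteq\pi_X(Z_n)$.

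\textbf{Step 3: conclusion.} Every set $\pi_X(Z_i)$ entering the intersection contains $\pi_X(Z_n)$ by Step 1, so $I(q,v,L)\supseteq\pi_X(Z_n)$. Combining with Step 2 yields $I(q,v,L)=\pi_X(Z_n)=\dom(\mathcal{C}_d)$. Substituting into~\eqref{Def:gamma_l} gives $\gamma(Q(x),v,L)=\eta(Q(x),v,\dom(\mathcal{C}_d))$, as claimed.
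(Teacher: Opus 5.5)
Your proof is correct and follows essentially the same route as the paper: nestedness of the safety iterates, then the fixed-point identity $Z_n=\Pre_{\Delta_d}(Z_n)\cap W$, which puts $\pi_X(Z_n)$ into the family defining $I(q,v,L)$, hence $I(q,v,L)=\pi_X(Z_n)=\dom(\mathcal{C}_d)$. The only cosmetic difference is that the paper also uses monotonicity of $\Pre_{\Delta_d}$ to note that every $Z_i$ enters the intersection; your argument shows this is not needed, since it suffices that $\pi_X(Z_n)$ enters and every set $\pi_X(Z_i)$ contains it.
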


Proposition~\ref{prop:Safety_explicit} gives a simple and interpretable form
for the safety margin: the intersection over the entire iterate list $L$ in
the definition of $\gamma$ collapses to the single fixed-point set $Z_n$, so
the margin reduces to $\eta(Q(x),v,\dom(\mathcal{C}_d))$. The admissible
perturbation at a state is thus governed only by the distance from its
over-approximated reachable set to the boundary of the safe domain. Specifically, for any state within the safety set (the controller’s domain), any perturbation that keeps the successor states inside this safety set can be tolerated.


The reachability, persistence, and recurrence subsections that
follow adopt the same organization in this subsection. 


\section{Reachability controller}
\label{sec:5}
Let $S = (X, X_0, U, \Delta)$ be the transition system
in~\eqref{eqn:system} and let $X_T \subseteq X$ be a target
set. The reachability synthesis problem consists of finding a
controller $\mathcal{C}$ such that every behavior of the
closed-loop system $S^\mathcal{C}$ reaches $X_T$ in finite time.
No requirement is imposed on the system's behavior thereafter.
\begin{definition}
Consider the system $S(\Sigma)$ in~\eqref{eqn:system},  a controller $\mathcal{C} : X \rightrightarrows U$ and its domain $\dom(\mathcal{C})$. The controller $\mathcal{C}$ is a reachability controller for the target set $X_T \subseteq X$ if for any initial state $x_0 \in \dom(\mathcal{C})$, for all maximal trajectories $\sigma=((x_0,u_0),(x_1,u_1)\ldots )$ of $\mathcal{H}(S^\mathcal{C})$ there exist
$  k \in \mathbb{N}$ such that $\sigma_{x}(k)=x_k \in X_T$. We denote the corresponding reachability specification as $\mathcal{H}_{Spec} = \mathcal{H}_{Reach}^{X_T} = \{\sigma_x \in X^* \cup X^w\mid  \exists  k \in \mathbb{N}, \quad \sigma_{x}(k)=x_k \in X_T\}.$
\end{definition}
To synthesize the reachability controller we rely on Algorithm~\ref{alg:reach}. This algorithm computes \(\Reach(X_T)\), which includes all states, with the corresponding admissible inputs, from which one can reach $X_T$. Furthermore, the algorithm records the sequence of state-input pairs \(L=\{Z_i \subseteq  X \times U\mid i \in \N \}\) computed at each iteration of the algorithm.


\begin{algorithm}
\caption{Computation of $\Reach(X_T)$}
\label{alg:reach}
\begin{algorithmic}[1]
\STATE \textbf{Input:} $S$, $W = X_T \times U$ where  $ X_T \subseteq X$
\STATE \textbf{Output:} $\Reach(X_T), L=\{Z_i \mid i \in \N \}$
\STATE $Z_0 := \emptyset \times U$
\REPEAT
    \STATE $Z_{i+1} := W \cup \Pre_{\Delta}(Z_i)$
\UNTIL{$Z_{i+1} = Z_i$}
\STATE $\Reach(X_T) := Z_i$ 
\end{algorithmic}
\end{algorithm}

Let \( S_d(\Sigma) = (X_d, X_d^0, U_d, \Delta_d) \) denote the symbolic abstraction of \( \Sigma \) constructed in Section~\ref{Def:Sd}. The monotone operator associated with the reachability specification is \(G(Z) = \Pre_{\Delta_d}(Z) \cup W, \quad \text{with } W = X_T \times U_d.
\) The fixed-point iterations (cf. Section~\ref{Sec:fixed_point}) are given by $
Z_i = \mu^i Z . G(Z)$, and converge to the least fixed-point \( \Reach(X_T) = \mu Z . G(Z) \). Let \( L = \{Z_0, Z_1, \ldots, Z_n\} \subseteq X_d \times U_d \) denote the sequence recording these iteration sets, where $Z_n = \mu Z . G(Z)$. Define the index function \( j_L : X \to \mathbb{N} \cup \{\infty\} \) as
\begin{equation}
\label{eqn:j_L}
j_L(x) = \inf \{ i \in \mathbb{N} \mid x \in \pi_X(Z_i) \},
\end{equation}
with \( \inf \emptyset = \infty \). The reachability controller is given by
\begin{equation}
\label{def:construction_reach_contr}
\mathcal{C}_d(q) =
\begin{cases}
\{ v \in U_d \mid (q, v) \in Z_{j_L(q)} \}, & j_L(q) < \infty, \\
\emptyset, & \text{otherwise}.
\end{cases}
\end{equation}
For a perturbation map $\varepsilon : X \times U_d \to \mathbb{R}_{\geq 0}$, consider the perturbed symbolic model $S_d(\Sigma_\varepsilon) = (X_d, X_d^0, U_d,  \Delta_{d, \varepsilon})$ constructed in Section~\ref{Def:Sd}. Analogously, the reachability controller for $S_d(\Sigma_\varepsilon)$ is defined by
\begin{equation}
\label{def:construction_reach_contr_perturbed}
\mathcal{C}_{d,\varepsilon}(q) =
\begin{cases}
\{ v \in U_d \mid (q, v) \in Y_{j_{L_\varepsilon}(q)} \}, & j_{L_\varepsilon}(q) < \infty, \\
\emptyset, & \text{otherwise}.
\end{cases}
\end{equation}
where $L_\varepsilon = \{Y_0, Y_1, \ldots, Y_m\}$ is the analogous iterate of $S_d(\Sigma_\varepsilon)$. By construction, $\mathcal{C}_{d,\varepsilon}$ is a reachability controller for $S_d(\Sigma_\varepsilon)$ and specification~$\mathcal{H}^{X_T}_{\mathrm{Reach}}$.
Next, we provide a bound on \( \varepsilon \) ensuring the equivalence between \( \mathcal{C}_d \) and \( \mathcal{C}_{d,\varepsilon} \). 
\begin{proposition}
\label{prop:reach_equiv}
Consider the discrete-time control system \( \Sigma \) and its perturbed version \( \Sigma_\varepsilon \) in~\eqref{eqn:system}, with symbolic models \( S_d(\Sigma) \) and \( S_d(\Sigma_\varepsilon) \) defined in Section~\ref{Def:Sd}. Let \( \mathcal{C}_d \) and \( \mathcal{C}_{d,\varepsilon} \) denote the corresponding reachability controllers in~\eqref{def:construction_reach_contr} and~\eqref{def:construction_reach_contr_perturbed}, respectively, with associated sequences \( L = \{Z_0,\dots,Z_n\} \) and \( L_\varepsilon = \{Y_0,\dots,Y_m\} \). If
\(\sup_{z\in Q(x)}\varepsilon(z,v) < \gamma(Q(x), v, L), \quad \forall (x,v) \in X \times U_d,\)
then \(\forall q \in X_d, \quad \mathcal{C}_d(q) = \mathcal{C}_{d,\varepsilon}(q).
\)
\end{proposition}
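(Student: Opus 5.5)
The plan is to show by induction on $i$ that the two fixed-point iterations produce the same sets, $Y_i = Z_i$ for all $i$. Then $m=n$ and $L_\varepsilon = L$, so $j_{L_\varepsilon} = j_L$, and~\eqref{def:construction_reach_contr} and~\eqref{def:construction_reach_contr_perturbed} coincide at every $q \in X_d$.

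First I translate the hypothesis into the form needed by Theorem~\ref{Thrm:Pre_mu}. By~\eqref{epsilon_d}, $\varepsilon_d(q,v) = \sup_{z\in q}\varepsilon(z,v)$. For $q \neq q_0$, pick any $x\in q$, so that $Q(x)=q$. The assumption $\sup_{z\in Q(x)}\varepsilon(z,v) < \gamma(Q(x),v,L)$ then reads $\varepsilon_d(q,v) < \gamma(q,v,L)$ for all $(q,v)$ with $q\neq q_0$.

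The cell $q_0$ needs a separate argument, since it lies outside $X$ and the hypothesis says nothing about it. Here both relations are equal to $X_d$ by construction, so $\Delta_d(q_0,v)=\Delta_{d,\varepsilon}(q_0,v)$, and membership of $(q_0,v)$ in $\Pre$ agrees for both relations with no condition on $\varepsilon$. I therefore go through the proof of Theorem~\ref{Thrm:Pre_mu} pairwise: for each $(q,v)$ it only uses the inequality at that pair. This gives
\[
\Pre_{\Delta_d}(Z_j)=\Pre_{\Delta_{d,\varepsilon}}(Z_j) \quad \text{for all } Z_j\in L .
\]
The main obstacle is making this pairwise use rigorous. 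If needed, I would restate Theorem~\ref{Thrm:Pre_mu} pointwise: whenever $\varepsilon_d(q,v)<\gamma(q,v,L)$, we have $(q,v)\in\Pre_{\Delta_d}(Z_j)$ if and only if $(q,v)\in\Pre_{\Delta_{d,\varepsilon}}(Z_j)$. The argument is short.
\begin{itemize}
\item Since $\Delta_d(q,v)\subseteq\Delta_{d,\varepsilon}(q,v)$, the perturbed $\Pre$ is contained in the nominal one.
\item Conversely, suppose $\emptyset\neq\Delta_d(q,v)\subseteq\pi_X(Z_j)$. Then $\pi_X(Z_j)\supseteq I(q,v,L)$.
\item Because $\varepsilon_d(q,v)<\gamma(q,v,L)$ and $\gamma(q,v,L)>0$, the definition~\eqref{Def:eta_a} of $\eta$ gives $\cl(\overline f(q,v,D))+\mathcal{B}_{\varepsilon_d}(0)\subseteq \Int(Q^{-1}(I))$.
\item Any cell $q'$ whose closure meets this inflated set must therefore meet $Q^{-1}(I)$. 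Using that the cells partition the space, this forces $q'\in I\subseteq\pi_X(Z_j)$.
\end{itemize}

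The induction itself runs as follows. The base case holds since $Y_0=Z_0=\emptyset\times U_d$. If $Y_i=Z_i$ and $i\le n-1$, then $Z_i\in L$, so
\[
Y_{i+1}=W\cup\Pre_{\Delta_{d,\varepsilon}}(Y_i)=W\cup\Pre_{\Delta_d}(Z_i)=Z_{i+1}.
\]
Once $i=n$, we have $Y_{n+1}=W\cup\Pre_{\Delta_d}(Z_n)=Z_{n+1}=Z_n=Y_n$. Hence the perturbed iteration stops at the same index, $m=n$ and $L_\varepsilon=L$. This also covers termination, because the stopping test $Y_{i+1}=Y_i$ matches $Z_{i+1}=Z_i$ at each step.

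Finally, $j_{L_\varepsilon}(q)=j_L(q)$ for every $q$, and $Y_{j(q)}=Z_{j(q)}$, so $\mathcal{C}_{d,\varepsilon}(q)=\mathcal{C}_d(q)$ for all $q\in X_d$. This includes the case $j=\infty$, where both controllers are empty.
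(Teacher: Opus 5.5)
Your proposal is correct and follows the paper's proof. Like the paper, you argue by induction on the reachability iterates, using Theorem~\ref{Thrm:Pre_mu} to equate the two predecessor operators on each $Z_i\in L$, which gives $L_\varepsilon=L$, $j_{L_\varepsilon}=j_L$, and hence identical controllers. You also supply three details the paper leaves implicit, and each is handled correctly: the pointwise form of Theorem~\ref{Thrm:Pre_mu}, the separate treatment of $q_0$ (not covered by a hypothesis quantified over $X\times U_d$), and the check that both iterations stop at the same index so that $m=n$.
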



In the next theorem, we derive a reachability controller for the perturbed continuous system using the synthesized controller for the disturbance-free abstraction.
\begin{theorem}
\label{thm:reach_refine}
Consider the setting of Proposition~\ref{prop:reach_equiv}. We define the controller $\mathcal{C}: X\rightrightarrows U_d$ as $\mathcal{C}(x)= C_{d}(Q(x))$. For any map \(\varepsilon: X \times U_d \rightarrow \mathbb{R}_{\geq 0}\) satisfying $\sup_{z\in Q(x)}\varepsilon(z,v) < \gamma(Q(x), v, L)$ and for all \((x, v) \in X \times U_d\), the controller \(\mathcal{C}\) is a reachability controller for the concrete perturbed system \(S(\Sigma_{\varepsilon})\) and the reachability specification $Q^{-1}(\mathcal{H}_{Reach}^{X_T})$.
\end{theorem}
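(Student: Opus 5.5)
This follows the template the paper sets up in Section~\ref{Sec:3}: combine Proposition~\ref{prop:reach_equiv} with the refinement result of Proposition~\ref{Prop:refinement}.

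First, I fix a perturbation map $\varepsilon$ satisfying $\sup_{z\in Q(x)}\varepsilon(z,v) < \gamma(Q(x),v,L)$ for all $(x,v)\in X\times U_d$. By definition~\eqref{epsilon_d}, the left-hand side is exactly $\varepsilon_d(Q(x),v)$. Since every $q\in X_d\setminus\{q_0\}$ is $Q(x)$ for some $x\in X$, this says $\varepsilon_d(q,v)<\gamma(q,v,L)$ for every such $(q,v)$. For $q_0$, the transition relations $\Delta_d(q_0,\cdot)$ and $\Delta_{d,\varepsilon}(q_0,\cdot)$ coincide by construction, so nothing needs checking there. This places us exactly in the hypothesis of Proposition~\ref{prop:reach_equiv}, which I take as given. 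It yields $\mathcal{C}_d(q)=\mathcal{C}_{d,\varepsilon}(q)$ for all $q\in X_d$, where $\mathcal{C}_{d,\varepsilon}$ is the controller~\eqref{def:construction_reach_contr_perturbed} built from the iterates $L_\varepsilon$ of the perturbed abstraction.

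Second, I record that $\mathcal{C}_{d,\varepsilon}$ is a reachability controller for $S_d(\Sigma_\varepsilon)$ and $\mathcal{H}^{X_T}_{\mathrm{Reach}}$. The paper states this directly after~\eqref{def:construction_reach_contr_perturbed}. If a justification is wanted, it is the standard argument on the index $j_{L_\varepsilon}$, which strictly decreases along closed-loop transitions. Indeed, if $v\in\mathcal{C}_{d,\varepsilon}(q)$ with $q\notin X_T$, then $(q,v)\in\Pre_{\Delta_{d,\varepsilon}}(Y_{j-1})$ with $j=j_{L_\varepsilon}(q)$. Hence every successor $q'$ is nonempty and satisfies $j_{L_\varepsilon}(q')\le j-1$. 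Any maximal behavior therefore hits $X_T$ within $j_{L_\varepsilon}(q_0)$ steps, and cannot block before doing so because the successor set is nonempty.

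Third, I apply Proposition~\ref{Prop:refinement} with the specification $\mathcal{H}^{X_T}_{\mathrm{Reach}}$, the controllers $\mathcal{C}_d$ and $\mathcal{C}_{d,\varepsilon}$, and the choice $\mu=\varepsilon$, so the condition $\mu\le\varepsilon$ holds trivially. The equality $\mathcal{C}_d=\mathcal{C}_{d,\varepsilon}$ from the first step is exactly its hypothesis. The conclusion is that $\mathcal{C}(x)=\mathcal{C}_d(Q(x))$ is a controller for $S(\Sigma_\varepsilon)$ and $Q^{-1}(\mathcal{H}^{X_T}_{\mathrm{Reach}})$. In other words, $\mathcal{H}_x(S^{\mathcal{C}}(\Sigma_\varepsilon))\subseteq Q^{-1}(\mathcal{H}^{X_T}_{\mathrm{Reach}})$.

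The main subtle point is matching this to the definition of a reachability controller. That definition asks that every maximal trajectory from $\dom(\mathcal{C})$ reaches $X_T$. Here $\dom(\mathcal{C})=Q^{-1}(\dom(\mathcal{C}_d))$, and the trajectory-level inclusion from the third step gives exactly this, with $X_T$ read as $Q^{-1}(X_T)$. I would also check that a concrete behavior cannot exit $X$ into $q_0$ and escape before reaching the target. This is handled by Lemma~\ref{Proposition:behavior}: the symbolic image of such a behavior is a behavior of $S_d^{\mathcal{C}_{d,\varepsilon}}(\Sigma_\varepsilon)$, and the index-decrease argument of the second step shows it reaches $X_T$ before leaving $\dom(\mathcal{C}_{d,\varepsilon})$.
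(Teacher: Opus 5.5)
Your proposal is correct and follows the paper's proof. Like the paper, you use Proposition~\ref{prop:reach_equiv} to get $\mathcal{C}_d = \mathcal{C}_{d,\varepsilon}$, and then Proposition~\ref{Prop:refinement} with $\mu = \varepsilon$ to transfer correctness to $S(\Sigma_\varepsilon)$; your index-decrease argument that $\mathcal{C}_{d,\varepsilon}$ is a reachability controller for $S_d(\Sigma_\varepsilon)$ (which the paper only asserts "by construction") is a sound extra detail rather than a different route.
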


In the following, we present a simplified expression of the admissible robustness margin for reachability specifications.

\begin{proposition}
\label{Prop:reach_caracterization}
Consider the discrete-time control system $\Sigma$ in~\eqref{eqn:system}, its symbolic model $S_d(\Sigma) = (X_d, X_d^0, U_d, \Delta_d)$ constructed in Section~\ref{Def:Sd}, and the reachability controller $\mathcal{C}_d$ for the target set $X_T \subseteq X_d$ defined in~\eqref{def:construction_reach_contr} with associated iterate sequence $L = \{Z_0, Z_1, \ldots, Z_n\}$, where $Z_n = \mu Z.\, G(Z)$. If $(x, v) \in  Q^{-1}(Z_n)$, then for $q=Q(x)$, we have $\gamma(q, v,L) \;=\; \eta(\;q,\;\; v,  \pi_X( Z_{j_L(q)-1}))$. 
\end{proposition}

Intuitively, for any $(q,v) \in Z^\mu_j$, all of its successors reach the lower layer $Z_{j-1}^\mu$, bringing the system closer to the target set $X_T = \pi_X(Z_1)$. As a result, computing the robustness margin based on this lower layer preserves the guaranteed progression toward $X_T$.



\section{Persistence controller}
\label{sec:6}
In this section, we consider the synthesis problem that consists in determining the robust controller such that all the behaviors of the derived system $S^\mathcal{C}$ reach the set $X_T$ and remain there for all future time and determining the maximum robustness margin for the robust controller. We first provide a formal definition of the persistence specification.
\begin{definition}
\label{Def:persistence}
Consider the system $S(\Sigma)$ in~\eqref{eqn:system},  a controller $\mathcal{C} : X \rightrightarrows U$ and its domain $\dom(\mathcal{C})$ as defined in Section~\ref{control_system}. The controller $\mathcal{C}: X\rightrightarrows U$ is a persistence controller for the target set $X_T$ if for any initial state $x_0 \in \dom(\mathcal{C})$, all maximal trajectories $\sigma \in \mathcal{H}(S^\mathcal{C}) $ are complete and satisfy $$
\exists k \in \N, \space \forall m \geq k, \quad \sigma_x(m) = x_m \in X_T.
$$

We denote the corresponding persistence specification as $\mathcal{H}_{Spec} = \mathcal{H}_{Pers}^{X_T} = \{\sigma_x  \in X^w \mid
   \exists k \in \N, \space \forall m \geq k, \quad \sigma_x(m) = x_m \in X_T\}.$

\end{definition}
The controller synthesis for the persistence problem is done using the fixed-point Algorithm~\ref{alg:persistence}. This algorithm calculates the winning set denoted by \(Z_{\infty}\), which characterizes the maximal set of state-input pairs $(x, u)$ from which the system can be steered into $X_T$ in finite time and confined there indefinitely. Additionally, the algorithm records the intermediate sets generated during each iteration leading to the computation of the final winning set. Consider the symbolic model $S_d(\Sigma)=(X_d, X_d^0, U_d, \Delta_d)$ constructed in Section~\ref{Def:Sd} and let $L^\mu = \{Z_0^\mu, Z_1^\mu, \dots, Z_n^\mu\}$, with $Z_n^\mu =Z_{\infty}$, be the outer iterations list produced by Algorithm~\ref{alg:persistence} for $S_d(\Sigma)$. These sequences provide valuable insight into the convergence behavior of the algorithm and will be used as theoretical tools in the proofs.
The winning set can written using $\mu$-calculus as $Z_{\infty}=\mu Z^\mu . \nu Z^\nu .\left(\Pre_{\Delta_d} \left(Z^\nu\right) \cap W\right) \cup \Pre_{\Delta_d}\left(Z^\mu\right), $
with $ W = X_T \times U_d$. The iterations used in the controller synthesis correspond to the outer loop iterations, satisfying the formula: $Z^\mu_{i}=\mu^i Z^\mu . \nu Z^\nu .\left(\Pre_{\Delta}\left(Z^\nu\right) \cap W\right) \cup \Pre_{\Delta}\left(Z^\mu\right),$ where $Z^\mu_0 = \emptyset \times U $.

\begin{algorithm}
\caption{Computation of $Z_{\infty}$}
\label{alg:persistence}
\begin{algorithmic}[1]
\STATE \textbf{Input:} $S(\Sigma), W = X_T \times U$ where  $ X_T \subseteq X$
\STATE \textbf{Output:} $\Pers(X_T) , L^{\nu}=\{Z^{\nu}_i \mid i \in \N \},  L^{\mu}=\{Z^{\mu}_i \mid i \in \N \}$
\STATE $Z^{\nu}_0 = X \times U$ 
\STATE $Z^{\mu}_0 :=  \emptyset \times U$
\REPEAT
    \REPEAT
        \STATE $Z^{\nu}_{j+1}:=\left(\Pre\left(Z_j^{\nu}\right) \cap W \right) \cup \Pre(Z^{\mu}_i)$
    \UNTIL  $Z^{\nu}_{j} = Z^{\nu}_{j+1}$
    \STATE $Z^{\mu}_{i+1} := Z^{\nu}_j$
\UNTIL $Z^{\mu}_{i+1} = Z^{\mu} _i$
\STATE  $Z_{\infty}:= Z_i^{\mu}$ 
\end{algorithmic}
\end{algorithm}

The persistence controllers for the symbolic model $S_d(\Sigma)$ and its perturbed version $S_d(\Sigma_\varepsilon)$ with respect to the target set $X_T$ are constructed by the same index-based rule as the reachability controllers in~\eqref{def:construction_reach_contr} and~\eqref{def:construction_reach_contr_perturbed}, with the reachability iterate sequences $L$ and $L_\varepsilon$ replaced by the outer persistence iterate sequences
$L^\mu$ and $L^\mu_\varepsilon$ produced by Algorithm~\ref{alg:persistence}.
Specifically,
\begin{equation}
\label{eq:persistence_controller}
\mathcal{C}_{d}(q) =
\begin{cases}
\{ v \in U_d \mid (q,v) \in Z^{\mu}_{j_{L^{\mu}}(q)} \}
  & \text{if } j_{L^{\mu}}(q) < \infty, \\
\emptyset & \text{otherwise,}
\end{cases}
\end{equation}
where $j_{L^\mu}(q) = \inf\{i \in \mathbb{N} \mid
q \in \pi_X(Z^\mu_i)\}$, and $\mathcal{C}_{d,\varepsilon}(q) $ is defined analogously to $\mathcal{C}_{d}(q)$, using the list $L^\mu_\varepsilon$.
By construction, $\mathcal{C}_d$ and $\mathcal{C}_{d,\varepsilon}$ are persistence controllers for $S_d(\Sigma)$ and $S_d(\Sigma_\varepsilon)$, respectively, both enforcing specification~$\mathcal{H}^{X_T}_{\mathrm{Pers}}$~\cite{tabuada2009verification,maler1995synthesis}.

The following two lemmas establish building blocks for the proof of the subsequent results, where the proofs can be found in the appendix.

\begin{lemma}
\label{lemma_inclusion_oprators}
Consider the discrete-time control systems $\Sigma$ and its symbolic model $S_d(\Sigma)=(X_d, X_d^0, U_d, \Delta_d)$ constructed in Section~\ref{Def:Sd}. Let $G, G_\varepsilon: 2^{X_d \times U_d} \to 2^{X_d \times U_d}$ be monotone operators. Define the sequences $Z_0 = Z_0^\varepsilon = X_d \times U_d,$ $\nu^{i+1}Z . G(Z_i)=  G(\nu^{i}Z . G(Z))$ and $\nu^{i+1}Z . G_\varepsilon(Z)= G_\varepsilon(\nu^iZ . G_\varepsilon(Z)).$ as in Section~\ref{Sec:fixed_point}. If $G_\varepsilon(Z) \subseteq G(Z)$ hold for all $Z \subseteq X_d \times U_d$,  then $Z_i^\varepsilon \subseteq Z_i$ for all $i \in \mathbb{N}$, in particular $\nu Z . G_\varepsilon(Z) \subseteq \nu Z . G(Z)$.
\end{lemma}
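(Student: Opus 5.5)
The plan is a straightforward induction on the iteration index $i$. Write $Z_i = \nu^i Z.\,G(Z)$ and $Z_i^\varepsilon = \nu^i Z.\,G_\varepsilon(Z)$. We will show that $Z_i^\varepsilon \subseteq Z_i$ for every $i \in \N$. The inclusion of the maximal fixed-points will then follow from Theorem~\ref{theorem:existence_fixed_point}.

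\emph{Base case.} By definition $Z_0^\varepsilon = Z_0 = X_d \times U_d$, so the inclusion holds trivially.

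\emph{Inductive step.} Assume $Z_i^\varepsilon \subseteq Z_i$. Two ingredients are used in turn.
\begin{itemize}
\item The pointwise hypothesis $G_\varepsilon(Z) \subseteq G(Z)$, applied at $Z = Z_i^\varepsilon$, gives $Z_{i+1}^\varepsilon = G_\varepsilon(Z_i^\varepsilon) \subseteq G(Z_i^\varepsilon)$.
\item Monotonicity of $G$, together with the induction hypothesis, gives $G(Z_i^\varepsilon) \subseteq G(Z_i) = Z_{i+1}$.
\end{itemize}
Chaining the two yields $Z_{i+1}^\varepsilon \subseteq Z_{i+1}$. The monotonicity of $G$ is the only nontrivial ingredient, and it is exactly what lets us compare $G$ evaluated on two different sets. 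Only the monotonicity of $G$ is needed here, not that of $G_\varepsilon$.

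\emph{Passing to the fixed-points.} Since $X_d$ and $U_d$ are finite and both operators are monotone, Theorem~\ref{theorem:existence_fixed_point} provides integers $n_1, n_1^\varepsilon$ with $\nu Z.\,G(Z) = Z_{n_1}$ and $\nu Z.\,G_\varepsilon(Z) = Z^\varepsilon_{n_1^\varepsilon}$. The iterates are constant from these indices onward. Taking $N = \max(n_1, n_1^\varepsilon)$ and applying the inductive inclusion at $i = N$ gives
\begin{equation*}
\nu Z.\,G_\varepsilon(Z) = Z_N^\varepsilon \subseteq Z_N = \nu Z.\,G(Z).
\end{equation*}

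An alternative route is to use the characterization of the maximal fixed-point: $\hat Z^\varepsilon = G_\varepsilon(\hat Z^\varepsilon) \subseteq G(\hat Z^\varepsilon)$, so $\hat Z^\varepsilon$ is a post-fixed set of $G$ and hence $\hat Z^\varepsilon \subseteq \nu Z.\,G(Z)$. This could replace the index argument.

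The only point requiring care is to state explicitly that each sequence stabilizes, so that comparing the iterates at a common large index is legitimate.
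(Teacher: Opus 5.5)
Your proof is correct and follows the paper's argument. Both use induction on $i$ with a trivial base case, chain the pointwise bound $G_\varepsilon \subseteq G$ with monotonicity, and then compare the iterates at a common index after both sequences have stabilized. The only difference is the order of the chain: the paper writes $G_\varepsilon(Z_i^\varepsilon) \subseteq G_\varepsilon(Z_i) \subseteq G(Z_i)$, which uses monotonicity of $G_\varepsilon$, whereas you write $G_\varepsilon(Z_i^\varepsilon) \subseteq G(Z_i^\varepsilon) \subseteq G(Z_i)$, which uses monotonicity of $G$; both are valid since both operators are assumed monotone.
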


\begin{lemma}
\label{Lemma_persistence}
Consider the discrete-time control systems $\Sigma$ and its symbolic model $S_d(\Sigma)=(X_d, X_d^0, U_d, \Delta_d)$ constructed in Section~\ref{Def:Sd}. Let $L^\mu = \{Z_0^\mu, Z_1^\mu, \dots, Z_n^\mu\}$, with $Z_n^\mu =Z_{\infty}$, be the outer iterations list produced by Algorithm~\ref{alg:persistence} for $S_d(\Sigma)$ and $X_T \subseteq X_d$. We have $Z^\mu_{i+1}
= (\Pre_{\Delta_d}(Z^\mu_{i+1}) \cap W) \cup \Pre_{\Delta_d}(Z^\mu_i),$ where $W =X_T \times U$ for $i \in 1, \dots, n-1$.
\end{lemma}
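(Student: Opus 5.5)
We have to prove Lemma~\ref{Lemma_persistence}: each outer iterate $Z^\mu_{i+1}$ is a fixed point of the inner operator parametrized by $Z^\mu_i$. The plan is to read this directly off the structure of Algorithm~\ref{alg:persistence} and then use Theorem~\ref{theorem:existence_fixed_point}.

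\textbf{Step 1: the inner operator is monotone.} Fix $i\in\{0,\dots,n-1\}$ and define on the finite lattice $2^{X_d\times U_d}$ the map
\[
H_i(Z) = \bigl(\Pre_{\Delta_d}(Z)\cap W\bigr)\cup \Pre_{\Delta_d}(Z^\mu_i).
\]
The operator $\Pre_{\Delta_d}$ is monotone: if $Z\subseteq Z'$, then $\pi_X(Z)\subseteq\pi_X(Z')$, so $\emptyset\neq\Delta_d(q,v)\subseteq\pi_X(Z)$ implies $\emptyset\neq\Delta_d(q,v)\subseteq\pi_X(Z')$. Intersecting with the fixed set $W$ and taking the union with the fixed set $\Pre_{\Delta_d}(Z^\mu_i)$ both preserve monotonicity, so $H_i$ is monotone.

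\textbf{Step 2: the inner loop computes the greatest fixed point of $H_i$.} The inner repeat-loop starts from $Z^\nu_0=X_d\times U_d$ and iterates $Z^\nu_{j+1}=H_i(Z^\nu_j)$. These are exactly the iterates $\nu^j Z.\,H_i(Z)$ of Section~\ref{Sec:fixed_point}. By Theorem~\ref{theorem:existence_fixed_point}, which applies because $X_d$ and $U_d$ are finite, the loop stops at some $j$ with $Z^\nu_{j+1}=Z^\nu_j$, and this set equals $\nu Z.\,H_i(Z)$. The algorithm then assigns $Z^\mu_{i+1}:=Z^\nu_j$. The stopping condition gives
\[
Z^\mu_{i+1}=Z^\nu_j=Z^\nu_{j+1}=H_i(Z^\nu_j)=H_i(Z^\mu_{i+1}),
\]
which is exactly the claimed identity.

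\textbf{Main obstacle.} The substantive mathematics is trivial; the difficulty is bookkeeping. First, the inner iterates must be read as restarting from $Z^\nu_0=X_d\times U_d$ at each outer iteration $i$, which matches the $\mu$-calculus formula $Z^\mu_{i+1}=\nu Z^\nu.\,H_i(Z^\nu)$. The argument is unaffected even if the inner loop is instead warm-started from the previous value: the loop only terminates at a set $Z$ with $Z=H_i(Z)$, and that equality is all we use. Second, the identity holds for every $i$ at which the outer loop executes, so the stated range $i\in\{1,\dots,n-1\}$ is covered. It also holds at $i=0$, where $\Pre_{\Delta_d}(\emptyset\times U_d)=\emptyset\times U_d$ by the paper's convention.
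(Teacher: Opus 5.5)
Your proof is correct and matches the paper's argument. Like the paper, you read $Z^\mu_{i+1}=Z^\nu_j=Z^\nu_{j+1}$ off the inner loop's assignment and stopping condition, then substitute into the update on line~7. Your extra steps (monotonicity of $H_i$, and Theorem~\ref{theorem:existence_fixed_point} to ensure termination and identify the result as $\nu Z.\,H_i(Z)$) are not needed for the identity itself, but they correctly show the inner loop is well defined.
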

In the following theorem, we provide an explicit bound on the perturbation $\varepsilon$ under which both controllers $\mathcal{C}_d$ and $\mathcal{C}_{d,\varepsilon}$ are equivalent.


\begin{proposition}
\label{prop:pers_equiv}
Consider the discrete-time control system $\Sigma$ and its perturbed version $\Sigma_\varepsilon$ in~\eqref{eqn:system} for a perturbation map $\varepsilon: X \times U_d \rightarrow \mathbb{R}_{\geq 0}$, along with their corresponding symbolic models $S_d(\Sigma) = (X_d, X_d^0, U_d, \Delta_d)$ and $S_d(\Sigma_\varepsilon) = (X_d, X_d^0, U_d,  \Delta_{d, \varepsilon})$, constructed in Section~\ref{Def:Sd}. The perturbation map is given by $\varepsilon: X \times U_d \rightarrow \mathbb{R}_{\geq 0}$. Let \( \mathcal{C}_d \) and \( \mathcal{C}_{d,\varepsilon} \) denote the corresponding persistence controllers in~\eqref{eq:persistence_controller} with associated sequences \(L^\mu = \{Z_0^\mu, Z_1^\mu, \dots, Z_n^\mu\} \) and \( L^\mu_{\varepsilon}  = \{Y_0^\mu, Y_1^\mu, \dots, Y_m^\mu\}\). If the perturbation map $\varepsilon: X \times U_d \rightarrow \mathbb{R}_{\geq 0}$ satisfies $\sup_{z\in Q(x)}\varepsilon(z,v) < \gamma(Q(x), v, L^\mu)$ for all  \((x,v) \in X \times U_d\) then $\forall q \in X_d, \quad \mathcal{C}_d(q) = \mathcal{C}_{d, \varepsilon}(q).$
\end{proposition}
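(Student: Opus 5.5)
The aim is to show $Z^\mu_i = Y^\mu_i$ for all $i$, by induction on the outer index $i$. Since the controllers in~\eqref{eq:persistence_controller} are determined by the lists and the index maps $j_{L^\mu}$, $j_{L^\mu_\varepsilon}$, equality of the lists gives $\mathcal{C}_d=\mathcal{C}_{d,\varepsilon}$. The hypothesis gives $\varepsilon_d(q,v)<\gamma(q,v,L^\mu)$ for all $(q,v)$. Theorem~\ref{Thrm:Pre_mu} therefore yields $\Pre_{\Delta_d}(Z^\mu_j)=\Pre_{\Delta_{d,\varepsilon}}(Z^\mu_j)$ for every $j\in\{0,\dots,n\}$. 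We also use throughout that $\Delta_d(q,v)\subseteq\Delta_{d,\varepsilon}(q,v)$, since the inflated over-approximation contains the nominal one. Hence $\Pre_{\Delta_{d,\varepsilon}}(Z)\subseteq\Pre_{\Delta_d}(Z)$ for every $Z$, provided nonemptiness of successors is preserved, which it is.

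\emph{Base and induction.} The base case $Z^\mu_0=Y^\mu_0=\emptyset\times U_d$ holds by definition. Assume $Y^\mu_i=Z^\mu_i$. Then $Y^\mu_{i+1}=\nu Z.\,G_\varepsilon(Z)$ with $G_\varepsilon(Z)=(\Pre_{\Delta_{d,\varepsilon}}(Z)\cap W)\cup\Pre_{\Delta_{d,\varepsilon}}(Z^\mu_i)$. The nominal counterpart is $Z^\mu_{i+1}=\nu Z.\,G(Z)$ with $G(Z)=(\Pre_{\Delta_d}(Z)\cap W)\cup\Pre_{\Delta_d}(Z^\mu_i)$.

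\emph{Inclusion $Y^\mu_{i+1}\subseteq Z^\mu_{i+1}$.} The inclusion $\Pre_{\Delta_{d,\varepsilon}}\subseteq\Pre_{\Delta_d}$ gives $G_\varepsilon(Z)\subseteq G(Z)$ for all $Z$. Lemma~\ref{lemma_inclusion_oprators} then gives $\nu Z.\,G_\varepsilon(Z)\subseteq\nu Z.\,G(Z)$.

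\emph{Inclusion $Z^\mu_{i+1}\subseteq Y^\mu_{i+1}$.} By maximality of the greatest fixed point, it suffices to show $Z^\mu_{i+1}\subseteq G_\varepsilon(Z^\mu_{i+1})$. Lemma~\ref{Lemma_persistence} gives $Z^\mu_{i+1}=(\Pre_{\Delta_d}(Z^\mu_{i+1})\cap W)\cup\Pre_{\Delta_d}(Z^\mu_i)$. Both $Z^\mu_{i+1}$ and $Z^\mu_i$ belong to $L^\mu$, so Theorem~\ref{Thrm:Pre_mu} lets us replace each nominal $\Pre$ with the perturbed one. This yields $Z^\mu_{i+1}=G_\varepsilon(Z^\mu_{i+1})$, i.e.\ $Z^\mu_{i+1}$ is a fixed point of $G_\varepsilon$ and so is contained in its maximal fixed point $Y^\mu_{i+1}$.

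Lemma~\ref{Lemma_persistence} covers $i\in\{1,\dots,n-1\}$. The first outer step ($i=0$) is handled the same way, using that $Z^\mu_1$ is by construction the greatest fixed point of $G$ with $Z^\mu_0=\emptyset\times U_d$. Termination aligns as well: once $Z^\mu_{n+1}=Z^\mu_n$, the same argument gives $Y^\mu_{n+1}=Y^\mu_n$, so $m=n$ and the lists coincide.

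\emph{Main obstacle.} The delicate step is the reverse inclusion. It relies on the intermediate inner-loop sets $Z^\nu_j$ not being in $L^\mu$, so Theorem~\ref{Thrm:Pre_mu} cannot be applied to them directly. The proof avoids these sets by using the fixed-point characterization of Lemma~\ref{Lemma_persistence}, which only involves sets that lie in $L^\mu$. Care is also needed with the convention $\Pre(\emptyset\times U)=\emptyset\times U$ in the base step, and with the fact that $\gamma$ takes intersections only over those $Z_i$ whose $\Pre$ contains $(q,v)$. Pairs outside a given $\Pre_{\Delta_d}(Z_j)$ stay outside the perturbed one because of the monotone inclusion of successor sets.
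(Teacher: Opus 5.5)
Your proposal is correct and follows the paper's proof essentially step for step. It uses induction on the outer index and obtains $Y^\mu_{i+1}\subseteq Z^\mu_{i+1}$ from $\Pre_{\Delta_{d,\varepsilon}}\subseteq\Pre_{\Delta_d}$ via Lemma~\ref{lemma_inclusion_oprators}; for the reverse inclusion, it combines Lemma~\ref{Lemma_persistence} with Theorem~\ref{Thrm:Pre_mu} to show $Z^\mu_{i+1}$ is a fixed point of the perturbed operator and then invokes maximality. Your explicit handling of the first outer step $i=0$ and of the stopping indices $m=n$ also covers two points the paper's write-up passes over quickly.
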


The next theorem is the main result of this section. We derive a persistence controller for the perturbed continuous system using the controller of the nominal symbolic model.

\begin{theorem}
\label{thm:pers_refine}
Consider the setting of Proposition~\ref{prop:pers_equiv}.We define the controller $\mathcal{C}: X\rightrightarrows U_d$ for the continuous-space system as $\mathcal{C}(x)= C_{d}(Q(x))$. For any map \(\varepsilon: X \times U_d \rightarrow \mathbb{R}_{\geq 0}\) that satisfies $\sup_{z\in Q(x)}\varepsilon(z,v) < \gamma(Q(x), v, L^\mu)$ for all \((x, v) \in X \times U_d\), the controller \(\mathcal{C}\) is a persistence controller for the concrete perturbed system \(S(\Sigma_{\varepsilon})\) and the persistence specification $Q^{-1}(\mathcal{H}_{Pers}^{X_T})$.
\end{theorem}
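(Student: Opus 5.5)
The proof will follow the same template as Theorems~\ref{thm:safety_refine} and~\ref{thm:reach_refine}: combine the controller-equivalence result with the refinement result. The first step is to apply Proposition~\ref{prop:pers_equiv}. The hypothesis $\sup_{z\in Q(x)}\varepsilon(z,v) < \gamma(Q(x), v, L^\mu)$ for all $(x,v)\in X\times U_d$ is exactly the condition $\varepsilon_d(q,v) < \gamma(q,v,L^\mu)$ for all $(q,v)\in X_d\times U_d$, by the definition of $\varepsilon_d$ in~\eqref{epsilon_d}. Proposition~\ref{prop:pers_equiv} then gives $\mathcal{C}_d(q)=\mathcal{C}_{d,\varepsilon}(q)$ for every $q\in X_d$.

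The second step is to check that $\mathcal{C}_{d,\varepsilon}$, built from $L^\mu_\varepsilon$ by the index rule~\eqref{eq:persistence_controller}, is a persistence controller for $S_d(\Sigma_\varepsilon)$ and $\mathcal{H}^{X_T}_{\mathrm{Pers}}$. This holds by construction, as stated right after~\eqref{eq:persistence_controller}, citing the standard correctness of the $\mu\nu$ fixed point. Both $\mathcal{C}_d$ and $\mathcal{C}_{d,\varepsilon}$ are therefore controllers for the same specification $\mathcal{H}^{X_T}_{\mathrm{Pers}}$ on their respective symbolic models, and they coincide. Proposition~\ref{Prop:refinement}, applied with $\mu=\varepsilon$ (so that $\mu\le\varepsilon$ holds trivially), then shows that $\mathcal{C}(x)=\mathcal{C}_d(Q(x))$ is a controller for $S(\Sigma_\varepsilon)$ and the specification $Q^{-1}(\mathcal{H}^{X_T}_{\mathrm{Pers}})$. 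In other words, $\mathcal{H}_x(S^{\mathcal{C}}(\Sigma_\varepsilon))\subseteq Q^{-1}(\mathcal{H}^{X_T}_{\mathrm{Pers}})$.

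The final step is to match this conclusion with Definition~\ref{Def:persistence}. That definition also requires that maximal trajectories starting in $\dom(\mathcal{C})$ be complete. I would obtain completeness from Lemma~\ref{Proposition:behavior}: every behavior of $S^{\mathcal{C}}(\Sigma_\varepsilon)$ projects, through $Q$, onto a behavior of $S_d^{\mathcal{C}_{d,\varepsilon}}(\Sigma_\varepsilon)$. Since $\mathcal{H}^{X_T}_{\mathrm{Pers}}\subseteq X_d^w$ contains only infinite sequences, the symbolic image is infinite, and so is the concrete behavior. The one point needing care is non-blocking at the concrete level: for $x\in\dom(\mathcal{C})$ and $u\in\mathcal{C}(x)$, the set $\Delta_\varepsilon(x,u)\supseteq f(x,u,D)$ must be nonempty. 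This follows from $D\neq\emptyset$.

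I also need every concrete successor to remain in $\dom(\mathcal{C})$. Take $x'\in\Delta_\varepsilon(x,u)$. By Lemma~\ref{Proposition:behavior}, $Q(x')\in\Delta_{d,\varepsilon}(Q(x),u)$. Because $\mathcal{C}_{d,\varepsilon}$ is a persistence controller with complete behaviors, $Q(x')\in\dom(\mathcal{C}_{d,\varepsilon})=\dom(\mathcal{C}_d)$, and hence $x'\in\dom(\mathcal{C})$.

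Finally, membership in $Q^{-1}(\mathcal{H}^{X_T}_{\mathrm{Pers}})$ means $Q(x_m)\in X_T$ for all $m\ge k$, which is the persistence property for the continuous trajectory with target $Q^{-1}(X_T)$. I expect the main obstacle to be this completeness and domain-invariance bookkeeping, since Proposition~\ref{Prop:refinement} only gives trajectory inclusion. If that route stalls, I would argue invariance directly from Lemma~\ref{Lemma_persistence}: $(q,v)\in Z^\mu_{i+1}$ implies $(q,v)\in\Pre_{\Delta_d}(Z^\mu_{i+1})$, so successors stay in $\pi_X(Z^\mu_n)=\dom(\mathcal{C}_d)$. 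Then Theorem~\ref{Thrm:Pre_mu} transfers this inclusion to $\Delta_{d,\varepsilon}$.
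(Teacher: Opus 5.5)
Your proposal is correct and follows the paper's proof: Proposition~\ref{prop:pers_equiv} gives $\mathcal{C}_d=\mathcal{C}_{d,\varepsilon}$, and Proposition~\ref{Prop:refinement} with $\mu=\varepsilon$ then transfers correctness to $S(\Sigma_\varepsilon)$. Your extra completeness and domain-invariance argument is sound but not needed on top of Proposition~\ref{Prop:refinement}: since $Q^{-1}(\mathcal{H}^{X_T}_{\mathrm{Pers}})$ contains only infinite sequences, the trajectory inclusion it yields already forces completeness, and your check really re-verifies the maximality claim of Lemma~\ref{Proposition:behavior} that the paper leaves implicit.
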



In the following, we present a simplified expression of the admissible robustness margin for persistence specifications. 
\begin{proposition}
\label{Prop:pers_caracterization}
Consider the setting of Proposition~\ref{prop:pers_equiv}. For $(x, v) \in Q^{-1}(Z^\mu_n)$ with $q= Q(x)$ and $j = j_{L^\mu}(q)$, we have
\[
I\!\left(q, v, L^\mu\right)
=
\begin{cases}
Z^\mu_{j_{}}, & \text{if }  \,j_{} > 1 \text{ and } \\
&\hspace{1cm} (q,v) \notin \Pre_{\Delta_d}(Z^\mu_{j-1}) , \\
Z^\mu_{j_{}-1}, &\text{if }  \, j_{} > 1 \text{ and }\\
& \hspace{1cm } (q,v) \in \Pre_{\Delta_d}(Z^\mu_{j-1}), \\
Z^\mu_{1}, &\text{if } \, j_{} = 1.
\end{cases}
\]
which simplifies the expression of the robustness margin $\gamma(q, v,  L^\mu) = \eta(q, v, I(q,v,  L^\mu))$ for each case.
\end{proposition}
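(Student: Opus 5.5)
The plan is to use the fact that the outer iterates of Algorithm~\ref{alg:persistence} are nested, which turns the intersection defining $I(q,v,L^\mu)$ into a single ``smallest containing layer''. I then locate that layer using the minimality of $j=j_{L^\mu}(q)$ together with Lemma~\ref{Lemma_persistence}. As in the statement, I identify $Z^\mu_i$ with its projection $\pi_X(Z^\mu_i)$ when writing $I(q,v,L^\mu)$.

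\emph{Step 1 (monotonicity).} The outer map $Z^\mu \mapsto \nu Z^\nu.\,(\Pre_{\Delta_d}(Z^\nu)\cap W)\cup\Pre_{\Delta_d}(Z^\mu)$ is monotone, since $\Pre_{\Delta_d}$ is monotone and so are the greatest fixed point of the inner operator and the union. Its least-fixed-point iterates starting from $Z^\mu_0=\emptyset\times U_d$ are therefore nondecreasing: $Z^\mu_0\subseteq Z^\mu_1\subseteq\cdots\subseteq Z^\mu_n$. This is the standard induction, and Lemma~\ref{lemma_inclusion_oprators} gives the same kind of comparison argument. Consequently the family $\{\pi_X(Z^\mu_i)\mid \Delta_d(q,v)\subseteq\pi_X(Z^\mu_i)\}$ is a chain closed upward. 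Hence $I(q,v,L^\mu)=\pi_X(Z^\mu_{k})$, where $k$ is the least index with $\Delta_d(q,v)\subseteq\pi_X(Z^\mu_k)$.

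\emph{Step 2 (upper bound $k\le j$).} I take $(q,v)\in Z^\mu_j$; this is the pair actually used by the controller~\eqref{eq:persistence_controller}. By Lemma~\ref{Lemma_persistence},
\[
(q,v)\in\bigl(\Pre_{\Delta_d}(Z^\mu_j)\cap W\bigr)\cup\Pre_{\Delta_d}(Z^\mu_{j-1}) .
\]
In either case $\emptyset\neq\Delta_d(q,v)\subseteq\pi_X(Z^\mu_j)$, using $Z^\mu_{j-1}\subseteq Z^\mu_j$. Thus $k\le j$, and $k=j-1$ holds exactly when $(q,v)\in\Pre_{\Delta_d}(Z^\mu_{j-1})$.

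\emph{Step 3 (lower bound $k\ge j-1$).} Suppose $\Delta_d(q,v)\subseteq\pi_X(Z^\mu_i)$ for some $i\le j-2$. Then $(q,v)\in\Pre_{\Delta_d}(Z^\mu_i)$. The outer update contains $\Pre_{\Delta_d}(Z^\mu_i)$ as a union term, so $(q,v)\in Z^\mu_{i+1}$. This gives $q\in\pi_X(Z^\mu_{i+1})$ with $i+1<j$, contradicting the minimality of $j$. Combining with Step 2 yields the two cases for $j>1$.

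For $j=1$, we have $Z^\mu_0=\emptyset\times U_d$ and $\Delta_d(q,v)\neq\emptyset$. Hence no $i=0$ term occurs, and $I=\pi_X(Z^\mu_1)$ by Step 2. The formula for $\gamma$ then follows directly from~\eqref{Def:gamma_l}.

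The main obstacle is the hypothesis $(x,v)\in Q^{-1}(Z^\mu_n)$, which only places $(q,v)$ in the final layer, not necessarily in $Z^\mu_j$. A pair $(q,v)$ could first appear at some index $\ell>j$ even though $q$ already appears at $j$. I would first try to argue that only controller pairs $v\in\mathcal{C}_d(q)$ matter, since these are exactly the pairs with $(q,v)\in Z^\mu_j$. Failing that, I would restate the result with $j$ replaced by the least $\ell$ such that $(q,v)\in Z^\mu_\ell$. Steps 2 and 3 go through verbatim with $\ell$ in place of $j$: Step 3 again contradicts minimality, now of $\ell$.
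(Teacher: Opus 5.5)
Your proof is correct and follows essentially the paper's route: nestedness of the outer iterates, the fixed-point identity of Lemma~\ref{Lemma_persistence} to get $\Delta_d(q,v)\subseteq\pi_X(Z^\mu_j)$ (or $\subseteq\pi_X(Z^\mu_{j-1})$), and minimality of $j$ to exclude lower layers. The paper arranges this as three cases, whereas you phrase it as the bounds $j-1\le k\le j$.

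Your caveat about the hypothesis is also well founded. The paper's proof in fact starts from $(x,v)\in Q^{-1}(Z^\mu_{j_{L^\mu}(q)})$, i.e.\ it silently makes the same restriction to controller pairs that you propose. For a pair $(q,v)$ that first appears at a later index, the formula as stated can fail. Your variant, with $j$ replaced by the least $\ell$ such that $(q,v)\in Z^\mu_\ell$, holds for all $(q,v)\in Z^\mu_n$.
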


\begin{remark}
Recall that the sets $Z_i^\mu$ are ordered by inclusion. At a given pair $(q,v)$, three situations may occur. If $(q,v)$ belongs to $\Pre_{\Delta_d}(Z_{j-1}^\mu)$, then all its successors can be steered into a lower layer $Z_{j-1}^\mu$, meaning that the controller has the option to ``descend'' closer to the target set $X_T$. Thus, we calculate the robustness margin based on this lower layer. Otherwise, if $(q,v) \notin \Pre_{\Delta_d}(Z_{j-1}^\mu)$, such a descent is no longer possible, and the system is constrained to remain within the current layer $Z_j^\mu$, and we have $(q,v) \in \Pre_{\Delta_d}(Z^\mu_j) \cap W \subseteq W$
which means the successor states of $(q,v)$ reach the target $X_T$. Finally, if $(q,v) \in Z^\mu_{1}$, then $Z^\mu_{1} = Z^\mu_{1} \cap W \subseteq W$ is the maximal fixed-point contained in $W$. Consequently, any trajectory inside this set satisfies the safety specification with respect to $X_T$, and thus remains in $X_T$ for all future times. 
\end{remark}



\section{Recurrence controller}
\label{sec:7}
Let $S(\Sigma)=\left(X, X_0, U, \Delta\right)$ be a transition system in~\eqref{eqn:system} and let $X_T \subseteq X$ be a target set. We consider the synthesis problem that consists in determining a controller $\mathcal{C}$, such that all the behaviors of the corresponding control system $S^\mathcal{C}(\Sigma)$ can reach the set $X_T$ infinitely many times.
\begin{definition}
\label{Def:recurrence}
Consider the system $S(\Sigma)$ in~\eqref{eqn:system},  a controller $\mathcal{C} : X \rightrightarrows U$ and its domain $\dom(\mathcal{C})$ as defined in Section~\ref{control_system}. A controller $\mathcal{C} : X \rightrightarrows U$ for the system $S(\Sigma)$ is a recurrence controller for target set $X_T\subseteq X$ if for any initial state $x_0 \in \dom(\mathcal{C})$, all maximal trajectories $\sigma \in \mathcal{H}(S^\mathcal{C}) $ are complete and satisfy:
$$
\forall k \in \N, \quad \exists m \geq k, \quad \sigma_x(m) = x_m \in X_T.
$$
We denote the corresponding recurrence specification as $ \mathcal{H}_{Spec} = \mathcal{H}_{Rec}^{X_T} = \{\sigma_x  \in  X^w \mid 
   \forall k \in \N, \quad \exists m \geq k, \quad \sigma_x(m) = x_m \in X_T\}.$

\end{definition}
The solution to the recurrence problem is obtained via the fixed-point Algorithm~\ref{alg:recurrence}, which computes the set $Z_{\infty}= \Rec(X_T)$.  Let $L^\nu = \{Z_0^\nu, Z_1^\nu, \dots, Z_n^\nu\}$ and $L^{\mu,i} = \{Z_0^{\mu,i}, Z_1^{\mu,i}, \dots, Z_{m(i)}^{\mu,i}\}$, for $i = 0, \dots, n$, be the outer and inner iteration lists produced by Algorithm~\ref{alg:recurrence} for the symbolic model $S_d(\Sigma)$ constructed in Section~\ref{Def:Sd} and target set $X_T \subseteq X_d$, where $m(i)$ is the minimal integer satisfying $ Z_{m(i)}^{\mu,i} = Z_{m(i)+1}^{\mu,i}$ and $Z_n^\nu = Z_{\infty}$. Define the combined list
\begin{equation}
\label{combination_list}
L^\mu = \bigcup_{i=0}^{n} \{ Z_0^{\mu,i}, Z_1^{\mu,i}, \dots, Z_{m(i)}^{\mu,i} \}.
\end{equation}
The final set computed by the algorithm admits the nested fixed-point characterization
$Z_{\infty} \quad = \nu Z^\nu .\,  \mu Z^\mu.  ( \Pre_{\Delta_d}(Z^\nu)    \cap W ) \cup \Pre_{\Delta_d}(Z^\mu),$
where \(W = X_T \times U_d\). Consider the list  \(  L^{\mu,n} =  \{ Z_0^{\mu,n}, Z_1^{\mu,n}, \dots, Z_{m(n)}^{\mu,n} \}\) the sequence corresponding to the last inner iterations of Algorithm~\ref{alg:recurrence}. This list will be used for the definition of the recurrence controller, and to simplify the notation, we note it as \begin{equation}
\label{recurr_steps}
L=   \{ Z_0, Z_1, \dots, Z_{m(n)} \}
\end{equation}.

\begin{algorithm}
\caption{Computation  of $\Rec(X_T)$}
\label{alg:recurrence}
\begin{algorithmic}[1]
\STATE \textbf{Input:} $S(\Sigma)$, $W = X_T \times U$ where  $ X_T \subseteq X$
\STATE \textbf{Output:} $\Rec(X_T), L^{\nu}=\{Z^{\nu}_i \mid i = 0,\dots,n\},  L^{\mu}=\{\{Z^{\mu,i}_j \mid j = 0,\dots,m(i)\} \mid i = 0,\dots,n\}$
\STATE $Z^{\nu}_0 \gets X \times U$ 
\STATE $i \gets 0$
\REPEAT
    \STATE $Z^{\mu,i}_0 \gets \emptyset \times U$
    \STATE $j \gets 0$
    \REPEAT
        \STATE $Z^{\mu,i}_{j+1} \gets \left(\Pre_{\Delta}\left(Z^{\nu}_i\right) \cap W \right) \cup \Pre_{\Delta}(Z^{\mu,i}_j)$
        \STATE $j \gets j + 1$
   \UNTIL{$Z^{\mu,i}_{j} = Z^{\mu,i}_{j-1}$}
    \STATE $Z^{\nu}_{i+1} \gets Z^{\mu,i}_{j}$
    \STATE $i \gets i + 1$
\UNTIL{$Z^{\nu}_{i} = Z^{\nu}_{i-1}$}
\STATE $\Rec(X_T) \gets Z^{\nu}_i$
\end{algorithmic}
\end{algorithm}

The recurrence controller for $S_d(\Sigma)$ uses the last inner iterate sequence $L$ of Algorithm~\ref{alg:recurrence}, and is defined by:
\begin{equation}
\label{eq:recurrence_controller}
\mathcal{C}_d(q) =
\begin{cases}
\{ v \in U_d \mid (q,v) \in Z_{j_L(q)} \} & \text{if } j_L(q) < \infty, \\
\emptyset & \text{otherwise},
\end{cases}
\end{equation}
where $j_{L}(q) = \inf\{i \in \mathbb{N} \mid
q \in \pi_X(Z_i)\}$.  We define $\mathcal{C}_{d,\varepsilon}(q) $ analogously to $\mathcal{C}_{d}(q)$, using the list $L_\varepsilon$ which is the final inner iterate sequence of Algorithm~\ref{alg:recurrence} applied to $S_d(\Sigma_\varepsilon)$.
By construction, $\mathcal{C}_d$ and $\mathcal{C}_{d,\varepsilon}$ are recurrence controllers for $S_d(\Sigma)$ and $S_d(\Sigma_\varepsilon)$, respectively, both enforcing specification~$\mathcal{H}^{X_T}_{\mathrm{Rec}}$~\cite{tabuada2009verification,maler1995synthesis}. 
We are now ready to derive an explicit bound on the perturbation $\varepsilon$ under which the two controllers $\mathcal{C}_d$ and $\mathcal{C}_{d,\varepsilon}$ are equivalent.
\begin{proposition}
\label{prop:recu_equiv}
Consider the discrete-time control systems $\Sigma$ and $\Sigma_\varepsilon$
in~\eqref{eqn:system}, with symbolic models $S_d(\Sigma) = (X_d, X_d^0,
U_d, \Delta_d)$ and $S_d(\Sigma_\varepsilon) = (X_d, X_d^0, U_d,
 \Delta_{d, \varepsilon})$ constructed in Section~\ref{Def:Sd}, where
$\varepsilon : X \times U_d \to \mathbb{R}_{\geq 0}$ is the perturbation
map. Let $\mathcal{C}_d$ and $\mathcal{C}_{d,\varepsilon}$ be the recurrence
controllers synthesized by Algorithm~\ref{alg:recurrence} for $S_d(\Sigma)$
and $S_d(\Sigma_\varepsilon)$, respectively, with respect to the target set
$X_T \subseteq X_d$. Denote by $L^\nu = \{Z_0^\nu, \ldots, Z_n^\nu\}$ and
$L^{\mu,i} = \{Z_0^{\mu,i}, \ldots, Z_{m(i)}^{\mu,i}\}$, $i = 0, \ldots,
n$, the outer and inner iterate lists produced by Algorithm~\ref{alg:recurrence}
for $S_d(\Sigma)$, and let $L^\mu$ and $L$ be as defined
in~\eqref{combination_list} and~\eqref{recurr_steps}. The corresponding
lists for $S_d(\Sigma_\varepsilon)$ are denoted $L^\nu_\varepsilon =
\{Y_0^\nu, \ldots, Y_{n_1}^\nu\}$, $L^{\mu,i}_\varepsilon = \{Y_0^{\mu,i},
\ldots, Y_{m_\varepsilon(i)}^{\mu,i}\}$, $i = 0, \ldots, n_1$, where
$m_\varepsilon(i) \in \mathbb{N}$ is the smallest integer such that
$Y_{m_\varepsilon(i)}^{\mu,i} = Y_{m_\varepsilon(i)+1}^{\mu,i}$, and
$L^\mu_\varepsilon$, $L_\varepsilon$ are defined analogously. If $\sup_{z\in Q(x)}\varepsilon(z,v) < \gamma(Q(x), v, L^\mu), \quad \forall\, (x,v) \in X \times U_d,$ then the following holds: \(
\forall q \in X_d, \quad \mathcal{C}_d(q) = \mathcal{C}_{d, \varepsilon}(q).
\)
\end{proposition}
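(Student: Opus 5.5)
The plan is to show by induction over the nested iterations of Algorithm~\ref{alg:recurrence} that the perturbed run reproduces the nominal run exactly: $Y^\nu_i = Z^\nu_i$ for all $i$, $n_1=n$, and $Y^{\mu,i}_j = Z^{\mu,i}_j$ for all inner indices, with $m_\varepsilon(i)=m(i)$. Once this holds, $L_\varepsilon = L$, so $j_{L_\varepsilon}=j_L$, and~\eqref{eq:recurrence_controller} gives $\mathcal{C}_d=\mathcal{C}_{d,\varepsilon}$ directly.

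\textbf{Step 1: preserving the predecessor operator.} The hypothesis $\sup_{z\in Q(x)}\varepsilon(z,v)<\gamma(Q(x),v,L^\mu)$ for all $(x,v)$ is exactly $\varepsilon_d(q,v)<\gamma(q,v,L^\mu)$ for all $(q,v)\in X_d\times U_d$, by the definition~\eqref{epsilon_d} of $\varepsilon_d$. Theorem~\ref{Thrm:Pre_mu} applied to the collection $L^\mu$ then yields
\[
\Pre_{\Delta_d}(Z)=\Pre_{\Delta_{d,\varepsilon}}(Z) \quad \text{for every } Z\in L^\mu .
\]

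\textbf{Step 2: the outer iterates lie in $L^\mu$.} The iteration also uses $\Pre(Z^\nu_i)$, so I need each outer iterate in $L^\mu$. For $i\ge 1$ this holds because $Z^\nu_i = Z^{\mu,i-1}_{m(i-1)}$ is the last inner iterate of round $i-1$, which belongs to $L^\mu$. For $i=0$, $Z^\nu_0 = X_d\times U_d$ is not obviously in $L^\mu$. I would check directly that $\Pre_{\Delta_d}(X_d\times U_d)=\Pre_{\Delta_{d,\varepsilon}}(X_d\times U_d)$: both equal the set of $(q,v)$ with a nonempty successor set. Since $\Delta_d(q,v)\subseteq\Delta_{d,\varepsilon}(q,v)$, and nonemptiness of $\Delta_{d,\varepsilon}(q,v)$ while $\Delta_d(q,v)=\emptyset$ seems impossible when $\overline f(q,v,D)\ne\emptyset$, the two sets agree. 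Degenerate cases where $\overline f(q,v,D)=\emptyset$ would need a short separate argument.

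\textbf{Step 3: double induction.} The outer induction is on $i$, with hypothesis $Y^\nu_i=Z^\nu_i$. Inside it, an inner induction on $j$ starts from $Y^{\mu,i}_0=Z^{\mu,i}_0=\emptyset\times U_d$. Assuming $Y^{\mu,i}_j=Z^{\mu,i}_j$, the update rule gives
\[
Y^{\mu,i}_{j+1}=(\Pre_{\Delta_{d,\varepsilon}}(Z^\nu_i)\cap W)\cup\Pre_{\Delta_{d,\varepsilon}}(Z^{\mu,i}_j).
\]
By Steps 1--2 this equals $Z^{\mu,i}_{j+1}$, because both $Z^\nu_i$ and $Z^{\mu,i}_j$ lie in $L^\mu$ (or are the full set when $i=0$). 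Since the two sequences coincide term by term, their first repetition index coincides, so $m_\varepsilon(i)=m(i)$ and $Y^\nu_{i+1}=Z^\nu_{i+1}$. The outer termination index therefore also coincides, giving $n_1=n$.

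\textbf{Main obstacle.} The delicate point is that the inner step uses $\Pre(Z^{\mu,i}_j)$ up to $j=m(i)$. The stopping test compares $Z_{m(i)+1}$ with $Z_{m(i)}$, so computing $Z_{m(i)+1}$ requires $\Pre(Z_{m(i)})$, which is fine because $Z_{m(i)}\in L^\mu$. Every set whose predecessor is taken therefore lies in $L^\mu\cup\{X_d\times U_d\}$. I also need to confirm that the convention $\Pre(\emptyset\times U)=\emptyset\times U$ is shared by both systems, which holds by definition, and that the definitions of $L_\varepsilon$ and $L$ as the last inner lists match. Given the equality of all iterates, this last check is immediate.
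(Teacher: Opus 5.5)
Your proof is correct and follows the paper's argument: a nested induction on the outer and inner iterations of Algorithm~\ref{alg:recurrence}, applying Theorem~\ref{Thrm:Pre_mu} to the collection $L^\mu$ to get $Y^{\mu,i}_j = Z^{\mu,i}_j$ and $Y^\nu_i = Z^\nu_i$, hence $L_\varepsilon = L$ and equal controllers. Your Step~2 repairs a point the paper glosses over: it asserts $Z^\nu_i \in L^\mu$ for every $i$, which generally fails for $Z^\nu_0 = X_d \times U_d$. The degenerate case you flag is immediate, because if $\overline{f}(q,v,D) = \emptyset$ then its sum with $\mathcal{B}_{\varepsilon_d(q,v)}(0)$ is also empty, so $\Delta_d(q,v)$ and $\Delta_{d,\varepsilon}(q,v)$ are both empty.
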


Note that for calculating the robustness margin we use the list $L^\mu$ whereas for the controller definition we use the list $L$. The next theorem is the main result of this section, we derive a recurrence controller for the concrete system $S(\Sigma_{\varepsilon})$.
\begin{theorem}
\label{thm:recu_refine}
Consider the setting of Proposition
~\ref{prop:recu_equiv}. Let $\mathcal C_d$ be the recurrence controller in \eqref{eq:recurrence_controller}, and define $\mathcal{C}(x)= C_{d}(Q(x))$. For any map \(\varepsilon: X \times U_d \rightarrow \mathbb{R}_{\geq 0}\) that satisfies {$\sup_{z\in Q(x)}\varepsilon(z,v) < \gamma(Q(x), v, L^\mu)$} for all \((x, v) \in X \times U_d\), the controller \(\mathcal{C}\) is a recurrence controller for the concrete perturbed system \(S(\Sigma_{\varepsilon})\) and the recurrence specification $Q^{-1}(\mathcal{H}_{Rec}^{X_T})$.
\end{theorem}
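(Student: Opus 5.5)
The argument follows the template used for Theorems~\ref{thm:safety_refine}, \ref{thm:reach_refine} and \ref{thm:pers_refine}: reduce to Proposition~\ref{Prop:refinement}, using Proposition~\ref{prop:recu_equiv} to supply its hypothesis.

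First, the standing assumption $\sup_{z\in Q(x)}\varepsilon(z,v) < \gamma(Q(x),v,L^\mu)$ for all $(x,v)\in X\times U_d$ is, by definition~\eqref{epsilon_d}, exactly the condition $\varepsilon_d(q,v)<\gamma(q,v,L^\mu)$ for every $(q,v)\in X_d\times U_d$ with $q\neq q_0$. Here we use that every $q\in X_d\setminus\{q_0\}$ is nonempty and equals $Q(x)$ for some $x\in X$.

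For $q_0$, note that $\Delta_d(q_0,v)=\Delta_{d,\varepsilon}(q_0,v)=X_d$ independently of $\varepsilon$. Hence the predecessor operators agree at $q_0$ automatically, and this pair does not need the bound.

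With this in hand, Proposition~\ref{prop:recu_equiv} applies directly and yields $\mathcal{C}_d(q)=\mathcal{C}_{d,\varepsilon}(q)$ for all $q\in X_d$. I expect no obstacle here, because all the work of propagating equality of $\Pre$ through the nested $\nu\mu$ iterations is already contained in Proposition~\ref{prop:recu_equiv}. That propagation rests on Theorem~\ref{Thrm:Pre_mu} applied to the combined list $L^\mu$, which contains every inner iterate and hence every $Z^\nu_{i}=Z^{\mu,i-1}_{m(i-1)}$.

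Next, we invoke Proposition~\ref{Prop:refinement} with $\mu=\varepsilon$ and $\mathcal{H}_{Spec}=\mathcal{H}^{X_T}_{Rec}$. Its hypothesis is that $\mathcal{C}_d$ and $\mathcal{C}_{d,\varepsilon}$ are controllers for $S_d(\Sigma)$ and $S_d(\Sigma_\varepsilon)$ respectively for the same specification. This holds by construction of~\eqref{eq:recurrence_controller} and its perturbed analogue, citing the standard correctness of the Büchi fixed-point controller. The conclusion is that $\mathcal{C}(x)=\mathcal{C}_d(Q(x))$ satisfies $\mathcal{H}_x(S^{\mathcal{C}}(\Sigma_\varepsilon))\subseteq Q^{-1}(\mathcal{H}^{X_T}_{Rec})$.

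The remaining step, which I regard as the main point requiring care, is to upgrade this trajectory inclusion to the notion of recurrence controller in Definition~\ref{Def:recurrence}. That definition demands, in addition, that all maximal trajectories starting in $\dom(\mathcal{C})$ be complete. I would argue this via Lemma~\ref{Proposition:behavior}: any maximal behavior of $S^{\mathcal{C}}(\Sigma_\varepsilon)$ projects under $Q$ to a behavior of $S_d^{\mathcal{C}_{d,\varepsilon}}(\Sigma_\varepsilon)$. Since $\mathcal{H}^{X_T}_{Rec}\subseteq X_d^w$, every maximal symbolic behavior is infinite, and the recurrence property on $(Q(x_k))_k$ translates to $x_k\in Q^{-1}(X_T)$ infinitely often.

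For completeness, I would argue directly that a finite concrete behavior cannot be maximal. At any $x_k$ with $Q(x_k)\in\dom(\mathcal{C}_d)$, the controller offers some $v$ with $(Q(x_k),v)\in\Pre_{\Delta_{d,\varepsilon}}(\cdot)$. The concrete successor set $f(x_k,v,D)+\mathcal{B}_{\varepsilon(x_k,v)}(0)$ is nonempty, since $D\neq\emptyset$. Each of its points lies in a cell belonging to $\Delta_{d,\varepsilon}(Q(x_k),v)\subseteq\dom(\mathcal{C}_{d})$, so the behavior extends. This contradicts maximality of any finite prefix, so every maximal behavior is complete, as required.
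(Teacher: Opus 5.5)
Your proposal is correct and follows the paper's proof. The paper likewise invokes Proposition~\ref{prop:recu_equiv} to get $\mathcal{C}_d=\mathcal{C}_{d,\varepsilon}$, then applies Proposition~\ref{Prop:refinement} with $\mu=\varepsilon$. Your additional remarks are sound and fill in points the paper leaves implicit: the pairs $(q_0,v)$ need no bound because $\Delta_d(q_0,v)=\Delta_{d,\varepsilon}(q_0,v)$, and your direct argument shows maximal concrete behaviors are complete, whereas Lemma~\ref{Proposition:behavior} only asserts (without proving) that the projected symbolic behavior is maximal.
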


\section{Comparison of Robustness Margins}
\label{sec:9}
{In this section, }we show within the controller’s domain, the robustness margin for the four specifications studied in the previous sections is larger than the inherent robustness margin proposed in~\cite{aitsi2025symbolic}. Figure~\ref{fig:2} illustrates that the difference between the two robustness measures in the case of safety can be substantial, thus highlighting the added value and effectiveness of the proposed approach.

\begin{proposition}
\label{prop:unified_comparison}
Let $\Phi \in \{\Safe, \Reach, \Pers, \Rec\}$ be one of the four
specifications of Sections~\ref{sec:4}--\ref{sec:7}, with target set
$X_T \subseteq X_d$, and let $\mathcal{C}_d$ be the corresponding controller
synthesized for the symbolic model $S_d(\Sigma)$. Denote by $Z_\infty$ the
winning set returned by the associated fixed-point algorithm, so that
$\pi_X(Z_\infty) = \dom(\mathcal{C}_d)$, and by $L_\Phi$ the iterate list of the fixed-point algorithm governing the robustness margin of $\mathcal{C}_d$. Then, for all $(x,v) \in Q^{-1}(Z_\infty)$,
\begin{equation}
\label{eqn:unified_comparison}
\eta\big(Q(x), v, \Delta_d(Q(x), v)\big)
\;\le\;
\gamma\big(Q(x), v, L_\Phi\big).
\end{equation}
\end{proposition}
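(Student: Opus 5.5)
The plan is to reduce the claim to a monotonicity property of $\eta$ in its third argument, combined with the fact that $\Delta_d(q,v) \subseteq I(q,v,L_\Phi)$ for every $(q,v)$ in the winning set.

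\textbf{Step 1 (monotonicity of $\eta$).} If $A \subseteq B \subseteq X_d$, then $\eta(q,v,A) \le \eta(q,v,B)$. Indeed, $Q^{-1}(A) \subseteq Q^{-1}(B)$, so $\Int(Q^{-1}(A)) \subseteq \Int(Q^{-1}(B))$. If $\cl(\overline f(q,v,D)) \not\subseteq \Int(Q^{-1}(A))$, then $\eta(q,v,A)=0$ and there is nothing to prove. Otherwise every $\varepsilon$ admissible in the supremum defining $\eta(q,v,A)$ is also admissible for $B$, so the supremum for $B$ is at least that for $A$. This follows directly from \eqref{Def:eta_a}.

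\textbf{Step 2 (the successors lie in $I$).} Fix $(x,v) \in Q^{-1}(Z_\infty)$ and set $q=Q(x)$. By definition, $I(q,v,L_\Phi)$ is the intersection of those $\pi_X(Z_i)$, $Z_i \in L_\Phi$, with $\Delta_d(q,v) \subseteq \pi_X(Z_i)$. Each such set contains $\Delta_d(q,v)$, so the intersection does too: $\Delta_d(q,v) \subseteq I(q,v,L_\Phi)$.

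We still need this intersection to be over a nonempty family; otherwise the intersection convention (e.g. all of $X_d$) must be fixed. So we show that some $Z_i \in L_\Phi$ satisfies $(q,v) \in \Pre_{\Delta_d}(Z_i)$. We check this per specification, using $(q,v) \in Z_\infty$:
\begin{itemize}
\item Safety: $Z_\infty = Z_n = W \cap \Pre_{\Delta_d}(Z_n)$, so $(q,v) \in \Pre_{\Delta_d}(Z_n)$.
\item Reachability and persistence: $Z_\infty$ is a fixed point of an operator containing $\Pre_{\Delta_d}(Z_\infty)$ as one term, with $Z_\infty \in L_\Phi$ (Lemma~\ref{Lemma_persistence} for persistence). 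If $(q,v) \in W \setminus \Pre_{\Delta_d}(\cdot)$ in the reachability case, one falls back on the convention $\Pre_{\Delta_d}(\emptyset\times U)=\emptyset\times U$, or on the form of Proposition~\ref{Prop:reach_caracterization}.
\item Recurrence: $(q,v)$ lies in some $Z^{\mu,n}_{j}$ with $j\ge1$. Hence it lies either in $\Pre_{\Delta_d}(Z^\nu_n)$, and $Z^\nu_n = Z^{\mu,n-1}_{m(n-1)} \in L^\mu$, or in $\Pre_{\Delta_d}(Z^{\mu,n}_{j-1})$ with $Z^{\mu,n}_{j-1} \in L^\mu$.
\end{itemize}

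\textbf{Step 3 (conclusion).} Apply Step 1 with $A=\Delta_d(q,v)$ and $B=I(q,v,L_\Phi)$. This gives $\eta(q,v,\Delta_d(q,v)) \le \eta(q,v,I(q,v,L_\Phi)) = \gamma(q,v,L_\Phi)$ by \eqref{Def:gamma_l}.

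The main obstacle is the bookkeeping in Step 2. For each $\Phi$ we must exhibit a member of $L_\Phi$ whose $\Pre$ contains $(q,v)$, and handle degenerate cases, such as target pairs in reachability or the case $j=1$, consistently with the intersection convention. Where a specification-specific characterization is available (Propositions~\ref{prop:Safety_explicit}, \ref{Prop:reach_caracterization}, \ref{Prop:pers_caracterization}), I would use it directly: it identifies $I$ as a single set $\pi_X(Z_k)$ containing $\Delta_d(q,v)$, and the monotonicity argument then applies immediately.
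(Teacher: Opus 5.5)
Your argument is the paper's argument. You reduce to $\Delta_d(q,v)\subseteq I(q,v,L_\Phi)$ via monotonicity of $\eta$ (Lemma~\ref{lem:2}, which you prove and the paper omits). You then check, specification by specification, that the family defining $I$ is nonempty because $(q,v)\in\Pre_{\Delta_d}(Z_\infty)$. For recurrence you pass through the inner iterates $Z^{\mu,n}_j$, whereas the paper simply uses $Z_\infty=\Pre_{\Delta_d}(Z_\infty)$ with $Z_\infty\in L^\mu$; both work. Grouping reachability with persistence is slightly misleading, though. For persistence both terms of the fixed-point equation lie in $\Pre_{\Delta_d}(Z_\infty)$, so $Z_\infty=\Pre_{\Delta_d}(Z_\infty)$. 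For reachability the $W$ term does not, and that is exactly where the exceptional case comes from.

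That exceptional case is the one step that fails as written. Take $(q,v)\in W$ with $(q,v)\notin\Pre_{\Delta_d}(Z_i)$ for every $Z_i\in L$. An example is $q\in X_T$ with an input whose successors leave $\pi_X(Z_n)$; such pairs lie in $Z_\infty$, and $v\in\mathcal{C}_d(q)$. Neither of your fallbacks helps:
\begin{itemize}
\item The convention $\Pre_{\Delta_d}(\emptyset\times U)=\emptyset\times U$ only says $\Pre_{\Delta_d}(\emptyset)=\emptyset$, so it never places $(q,v)$ in the family.
\item Proposition~\ref{Prop:reach_caracterization} at $j_L(q)=1$ points to $\pi_X(Z_0)=\emptyset$, i.e.\ $\gamma=\eta(q,v,\emptyset)=0$. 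That would make the inequality \emph{false} whenever $\eta(q,v,\Delta_d(q,v))>0$.
\end{itemize}
Since the family really is empty here, the claim rests on the convention for the empty intersection. With $\bigcap\emptyset=X_d$ (which you mention only in passing), $\Delta_d(q,v)\subseteq I$ holds trivially and Step~1 finishes. This is also the convention consistent with Theorem~\ref{Thrm:Pre_mu}: because $\Pre_{\Delta_{d,\varepsilon}}(Z_j)\subseteq\Pre_{\Delta_d}(Z_j)$, no perturbation at such a pair can break equality of the predecessors. Otherwise you must restrict the statement to $(q,v)\notin W$. The paper's proof has the same hole, since it dismisses $(q,v)\in W$ with ``no succession required.'' So this is a shared weakness, not a divergence in approach.
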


\begin{figure}[htbp]
    \centering
    \includegraphics[width=0.5\textwidth]{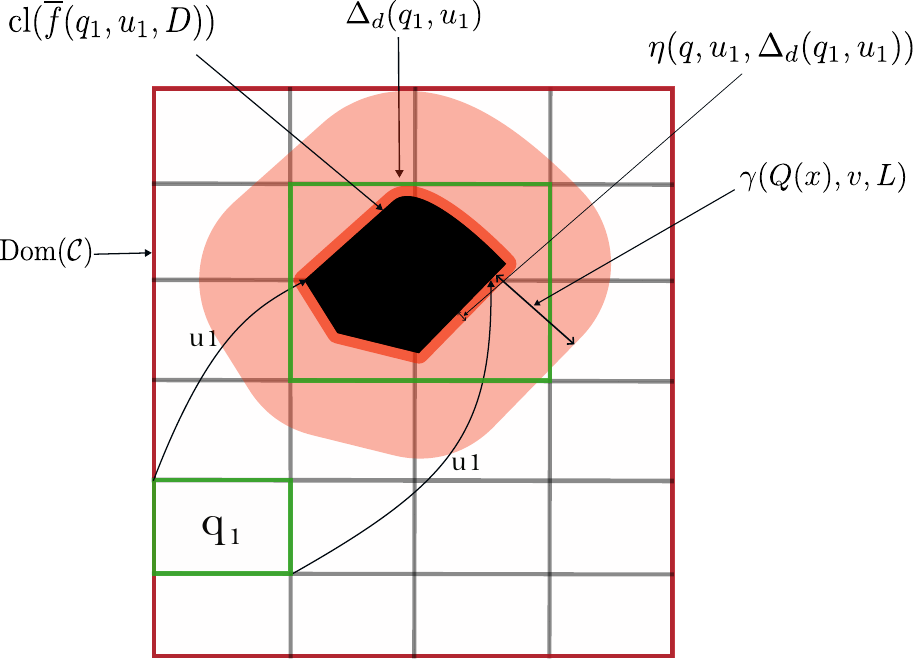}
\caption{Illustration of Proposition~\ref{prop:unified_comparison} where we compare two robustness margin for the overapproximation of the reachable set $\cl(\overline{f}(q_1, u_1, D)$, represented by the dark region, for the state \( q_1 \) under the input $u_1$. We can clearly notice the difference between the abstraction-based margin $\eta(q_1,u_1,\Delta_d(q_1,u_1))$ (dark orange) as in~\cite{aitsi2025symbolic} and the specification-based margin for the safety specification $ \gamma(q_1, u_1,L)$ (light orange) for the safety controller $\mathcal{C}$ with its domain of definition $\dom(\mathcal{C})$.}
    \label{fig:2}
\end{figure}
\section{Numerical example}
\label{sec:10}

In this section, we present numerical examples to illustrate the theoretical part of our work. We consider two examples, a first example of a double integrator with a safety specification and a mobile robot example with an LTL specification.
\subsection{Double integrator}
\label{sec:11}

We study a double-integrator system governed by
\[
\left\{\begin{array}{l}
x^1_{k+1} = x^1_k + \tau x^2_k + \dfrac{\tau^2}{2} u_k + \dfrac{\tau^2}{2} w^1_k, \\
x^2_{k+1} = x^2_k + \tau u_k + \tau w^2_k,
\end{array}\right.
\]
where the state vector \(x=(x^1,x^2)\in X=[0,6]\times[-6,6]\) consists of the position and velocity components. 
The control input \(u\in[-1,1]\), while the disturbance vector 
\(w=(w^1,w^2)\in[-0.01,0.01]\times[-0.01,0.01]\) represents additive perturbations. 
The sampling period is \(\tau=0.5\) seconds. We borrowed the same configuration as in \cite{aitsi2025symbolic} to be able to compare and show that our new results allow larger robustness margins. A symbolic abstraction of the system is constructed using the procedure described in Section~\ref{Def:Sd}. To this end, the state space \(X\) is uniformly partitioned into \(N_{x_1}=80\) intervals along the first coordinate and \(N_{x_2}=160\) intervals along the second coordinate. 
Similarly, the control input set is discretized into \(N_{u_1}=5\) as in \cite{aitsi2025symbolic}. 
The transition relation of the resulting symbolic model is then obtained via reachability analysis. The exact reachable sets can be computed by evaluating the dynamics at the extreme values of the state and disturbance variables 
(see, e.g., \cite{althoff2021set}) because we are dealing with a linear dynamics. Next, the safety fixed-point Algorithm~\ref{alg:safe} is applied to synthesize the maximal safety controller \(\mathcal{C}\) for the symbolic model \(S_d(\Sigma)\) with respect to the safe set \(X_T = [0,6]\times[-6,6]\), which yields the list of sets generated at each iteration $L$.  Figure~\ref{fig:input_heatmap} shows the deterministic controller \(\overline{\mathcal{C}}(x)=\argmax_{u \in \mathcal{C}(x)}\gamma(Q(x),u,L)\), which assigns to each state $x$ the control input that maximizes the robustness margin. Based on these controller domain, the admissible robustness margin \(\gamma: X_d \times U_d \times 2^{X_d \times U_d} \rightarrow \mathbb{R}_{\geq 0}\) is computed according to Proposition~\ref{prop:Safety_explicit}. In this example, the admissible robustness margin lies in the interval \([0.0,\, 1.5]\). Table~\ref{tab:my_table} reports the maximum admissible robustness margin over all state and input obtained for various choices of state and input discretization parameters using the notion of robustness margin in \cite{aitsi2025symbolic} and the robustness margin defined in our paper. We have two main observations. First, refining the discretization reduces the free maximum robustness margin in \cite{aitsi2025symbolic}, highlighting the inherent trade-off between abstraction accuracy and robustness. In contrast, our robustness margin exhibits the opposite trend. This trade-off is no longer limiting, and finer abstractions yield improved robustness. Second, our robustness margin is $7$ times higher than the robustness margin in~\cite{aitsi2025symbolic} in the worst case and $48$ times higher in the optimal case.
\begin{figure}[htbp]
        \centering
        \includegraphics[scale=0.48]{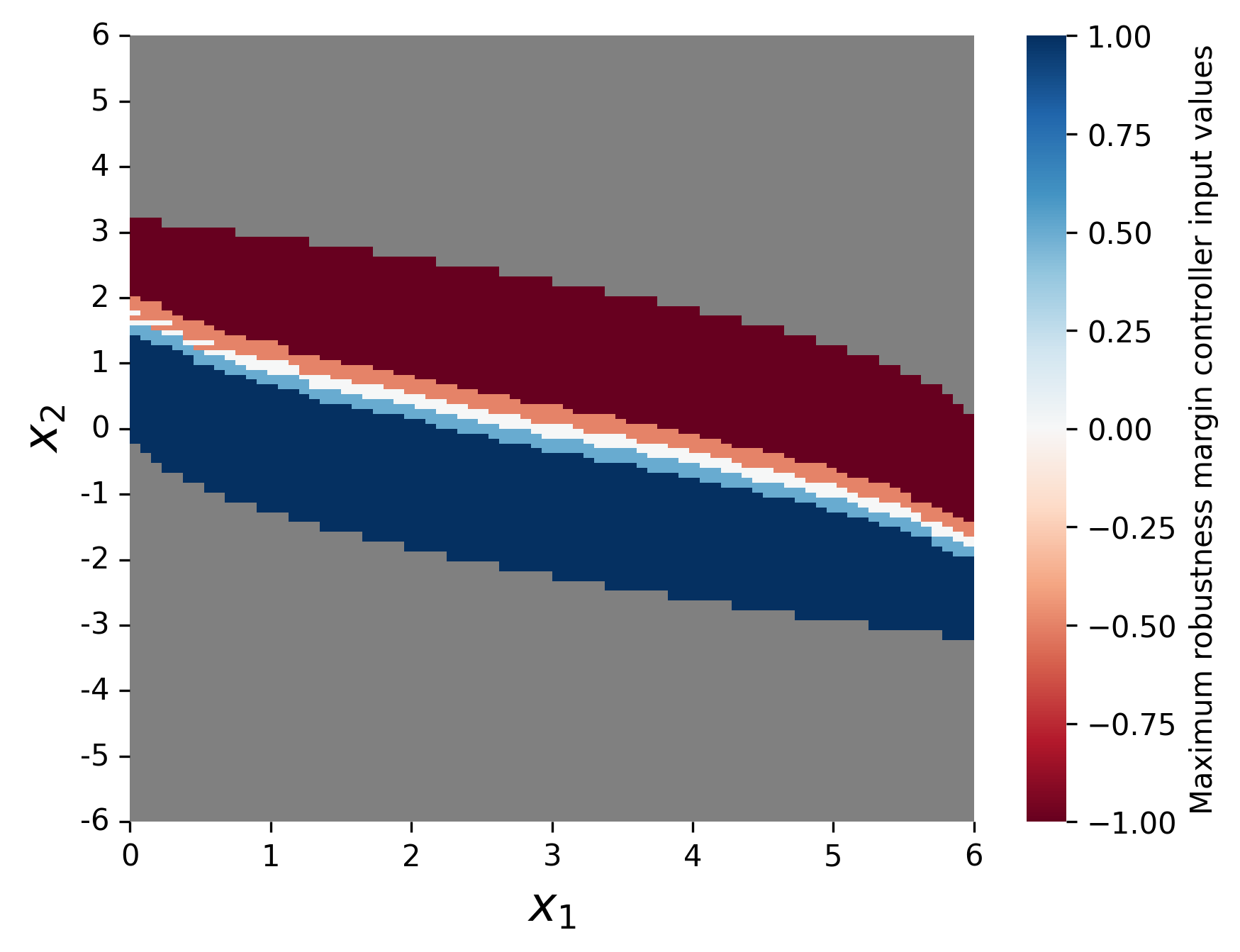}
        \caption{The input value maximizing the robustness margin for each state.}
        \label{fig:input_heatmap}
\end{figure}
The heatmap in the right of Figure~\ref{fig:two_heatmap} illustrates the corresponding maximal admissible robustness margin $ \gamma(Q(x),\overline{\mathcal{C}}(x),\mathrm{dom}(\overline{\mathcal{C}}))$ for each state \(x\in\mathrm{dom}(\overline{\mathcal{C}})\). As expected, the robustness values increase as we move farther from the border of the safety domain. The heatmap at the left shows the maximal values using the robustness in~\cite{aitsi2025symbolic}. These values are significantly smaller (dark red) compared to our robustness margins that can be seen in the dark blue on the heatmap at the right.
\begin{figure*}[htbp]
        \centering
        \includegraphics[width=13cm, height=5cm]{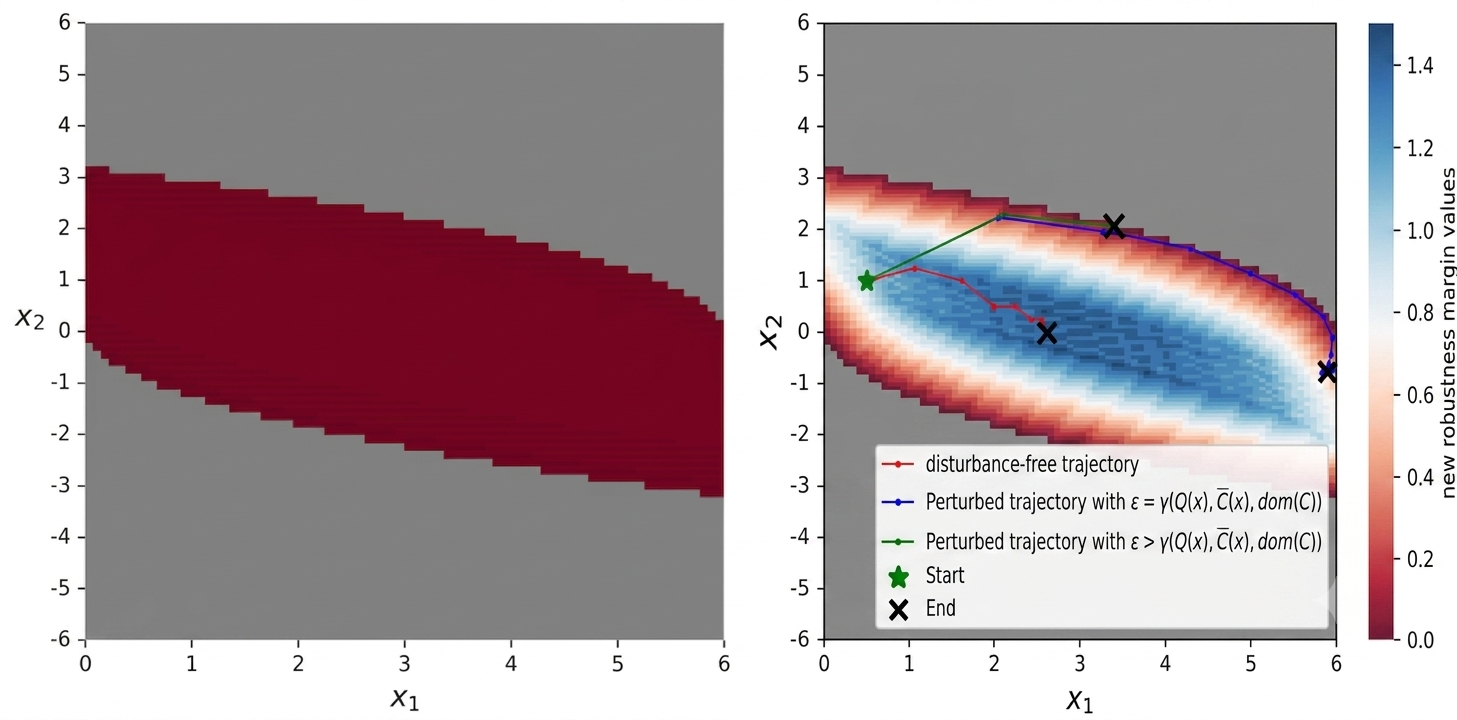}
        \caption{Two heatmaps with the same graded colorbar. The left heatmap shows the maximal free admissible robustness margin (computed as in \cite{aitsi2025symbolic}) for each state. The right heatmap shows the admissible robustness margin for the maximal controller described in Section~\ref{sec:11} along with trajectories under different perturbations.}
        \label{fig:two_heatmap}
 \end{figure*}

Finally, the right panel of Figure~\ref{fig:two_heatmap} illustrates closed-loop trajectories starting from the initial condition $x_0=(0.5,1)$. The red curve corresponds to the disturbance-free closed-loop system, while the blue curve represents the trajectory of the system under maximal additive disturbances.
In these cases, the perturbed trajectory remains within the safe set, in agreement with Theorem~\ref{thm:safety_refine}. The green curve illustrates the evolution of the disturbed system when the disturbance exceeds the maximal robustness margins that go beyond the controller domain after two time steps illustrating the tightness of our results.


\begin{table}[htbp]
\centering
\scriptsize
\setlength{\tabcolsep}{4pt}
\renewcommand{\arraystretch}{0.95}
\begin{tabular}{@{} c l cccc @{}}
\toprule
$N_u$ & Source & $N_x{=}1600$ & $N_x{=}3200$ & $N_x{=}6400$ & $N_x{=}12800$ \\
\midrule
\multirow{3}{*}{3}
 & Paper \cite{aitsi2025symbolic} & 0.118  & 0.044 & 0.025 & 0.031 \\
 & This paper               & 0.899  & 0.975 & 1.35  & 1.5   \\
 &                          & ${\times}7\uparrow$ & ${\times}22\uparrow$ & ${\times}54\uparrow$ & ${\times}48\uparrow$ \\
\addlinespace[3pt]
\multirow{3}{*}{5}
 & Paper \cite{aitsi2025symbolic} & 0.118  & 0.056 & 0.056 & 0.031 \\
 & This paper               & 0.899  & 0.975 & 1.35  & 1.5   \\
 &                          & ${\times}7\uparrow$ & ${\times}17\uparrow$ & ${\times}24\uparrow$ & ${\times}48\uparrow$ \\
\addlinespace[3pt]
\multirow{3}{*}{10}
 & Paper \cite{aitsi2025symbolic} & 0.129  & 0.055 & 0.055 & 0.047 \\
 & This paper               & 0.899  & 0.975 & 1.35  & 1.5   \\
 &                          & ${\times}6\uparrow$ & ${\times}17\uparrow$ & ${\times}24\uparrow$ & ${\times}31\uparrow$ \\
\bottomrule
\end{tabular}
\caption{Maximum robustness over all states and inputs, computed using the
robustness notion from~\cite{aitsi2025symbolic} and that of
Proposition~\ref{prop:Safety_explicit}, for the safety specification under
different discretizations $N_x = N_{x_1} \times N_{x_2}$ (the total number of discrete states) and $N_u$ (the number of symbolic inputs).}
\label{tab:my_table}
\end{table}


\subsection{Mobile robot}
\label{sec:12}
In this example, we study a unicycle-type mobile robot. The system state consists of the planar position
$\left(x^1, x^2\right) \in [0,10] \times [0,10]$ and the orientation angle $x^3 \in [-\pi,\pi]$.
The robot evolves according to
\begin{equation}
\label{diff_drive}
\left\{
\begin{aligned}
x^1_{k+1} &= x^1_k + \tau\big(u^1_k \cos(x^3_k) + w^1_k\big), \\
x^2_{k+1} &= x^2_k + \tau\big(u^1_k \sin(x^3_k) + w^2_k\big), \\
x^3_{k+1} &= x^3_k + \tau\big(u^2_k + w^3_k\big) \; (\mathrm{mod}\; 2\pi),
\end{aligned}
\right.
\end{equation}
where the control input $(u^1, u^2) \in [0.25,1] \times [-1,1]$ corresponds to the linear and angular velocities, respectively. The terms $w^i_k \in [-0.05,0.05]$, for $i=1,2,3$, model bounded disturbances, and the sampling period is set to $\tau = 1$. To formalize the specification, we first introduce five disjoint regions within the domain \( X = [0, 10] \times [0, 10] \times [-\pi, \pi] \). These regions: center $C_1$, center $C_2$, center $C_3$, Obstacle $R_1$ and obstacle $R_2$, are shown in Figure~\ref{fig:combined}, and their numerical values are given in Table~\ref{fig:regions}. To construct the symbolic model, the state space \( {X} \) is uniformly partitioned using a Cartesian grid with 
\( N_{x_1} = 120 \), \( N_{x_2} = 100 \), and \( N_{x_3} = 60 \) subdivisions along the respective dimensions as described in Section~\ref{Def:Sd}. The control inputs are uniformly discretized into \( N_{u_1} = 3 \) and \( N_{u_2} = 5 \) values. The transition relation of the symbolic model is computed via reachability analysis. Specifically, the reachable sets are over-approximated using the growth-bounds-based approach in~\cite{ReissigWeberRungger17}. We consider a combined specification formed through the sequential composition of three basic specifications: two persistence specifications followed by one recurrence specification, with the additional requirement that the system avoids obstacle regions \(R_1\) and \(R_2\) at all times. We first synthesis a safety specification over $X \backslash \{R_1 , R_2\}$ to remove the non safe states. Beginning from an initial state \(x_0\), the system must satisfy persistence within region \(C_2\) for $100$ time steps. The final state of this phase then serves as the initial condition for the second phase, which requires persistence in region \(C_1\) for another $100$ steps. The resulting endpoint state initiates the third and final phase, which requires the system to satisfy the recurrence for region \(C_3\) over $100$ steps. For each specification, we construct the nondeterministic controllers \(\mathcal{C}_1\), \(\mathcal{C}_2\), and \(\mathcal{C}_3\) using Theorems~\ref{thm:pers_refine} and~\ref{thm:recu_refine}, with their corresponding lists $L_1, L_2$ and $L_3$. Subsequently, we deduce the maximal robustness controllers defined by \(\overline{\mathcal{C}}_i(x) =\argmax_{u \in \mathcal{C}_i(x)}\gamma(Q(x),u,L_i)\). The margins are computed leveraging the robustness margin characterizations as given in Propositions~\ref{Prop:pers_caracterization} for the persistence specification.
\begin{table}[h!]
\centering
\renewcommand{\arraystretch}{1.5}
\begin{tabular}{|c|c|}
\hline
\textbf{Region} & \textbf{Coordinates} \((x^1, x^2, x^3)\) \\
\hline
Center $C_1$& \([0.3, 3] \times [8, 9.8] \times [-\pi, \pi]\) \\
Center $C_2$ & \([4, 6.5] \times [0.2, 2] \times [-\pi, \pi]\) \\
Center $C_3$ & \([7.3, 9.9] \times [8, 9.8] \times [-\pi, \pi]\) \\
Obstacle $R_1$ & \([2, 3] \times [0, 7] \times [-\pi, \pi]\) \\
Obstacle $R_2$ & \([6, 7] \times [3, 10] \times [-\pi, \pi]\) \\
\hline
\end{tabular}
\caption{Regions defining the specification.}
\label{fig:regions}
\end{table}

Figure~\ref{fig:combined} shows the combined specification for the initial state $x_0 = (1, 2)$ and displays two distinct closed-loop trajectories starting from the same initial condition. 
\begin{figure}[htbp]
    \centering
    \includegraphics[width=0.4\textwidth]{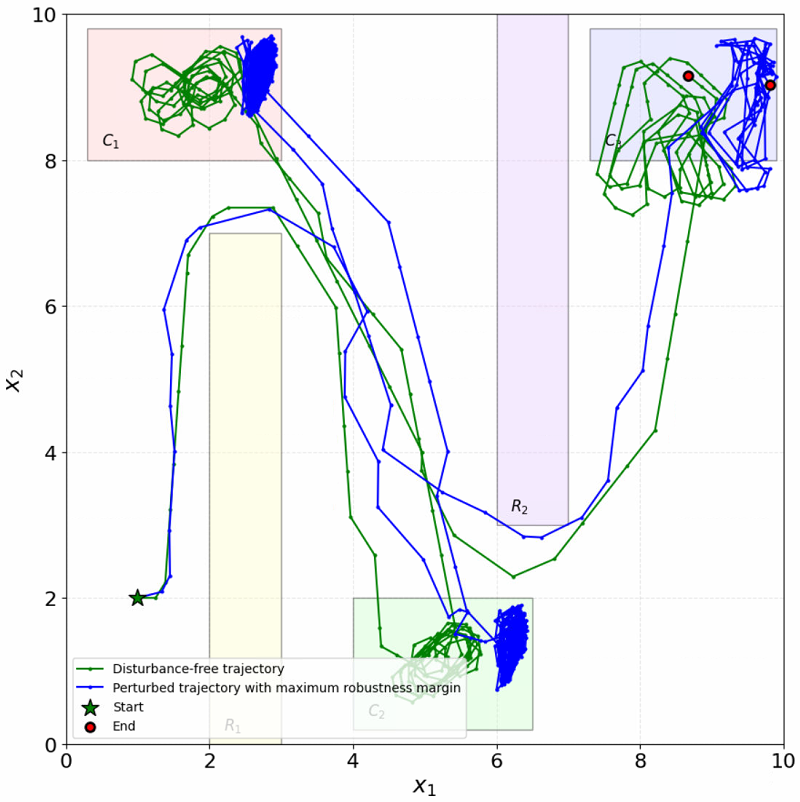}
\caption{Two controlled trajectories of the system in~\eqref{diff_drive} starting from $x(0)=(1, 2)$ over $300$ time steps, both satisfying the specification described in mobile robot Section~\ref{sec:12}. Green: evolution without disturbances ($\varepsilon = 0$). Blue: evolution with disturbance at the maximum robustness margin.}
    \label{fig:combined}
\end{figure}
The green trajectory represents the nominal case, where the system evolves without any external disturbance. The blue trajectory corresponds to the robust case, where at each step the system is subjected to the maximal admissible disturbance, calculated respectively via Proposition~\ref{Prop:pers_caracterization} for the persistence phases, implemented as the perturbation \(\gamma_i(Q(x), \overline{\mathcal{C}}_i(x), L_i)\). One can see that the control objective is achieved, which is consistent with Theorem~\ref{thm:pers_refine} and Theorem~\ref{thm:recu_refine}. Two video simulations of the robot satisfying the specification for the nominal system, implemented using PyBullet~\cite{pybullet}, are available at the following links: \href{https://youtu.be/mNZ_SWpXOAQ}{https://youtu.be/h9OsOwvBguk} and \href{https://youtu.be/8zZ4-2YEVyw}{https://youtu.be/4O9C\_uTQX7g}. The first video shows the nominal case without perturbations, while the second illustrates the scenario with disturbances corresponding to external slippery road condition.


\begin{figure}[htbp]
    \centering
    \includegraphics[width=0.4\textwidth]{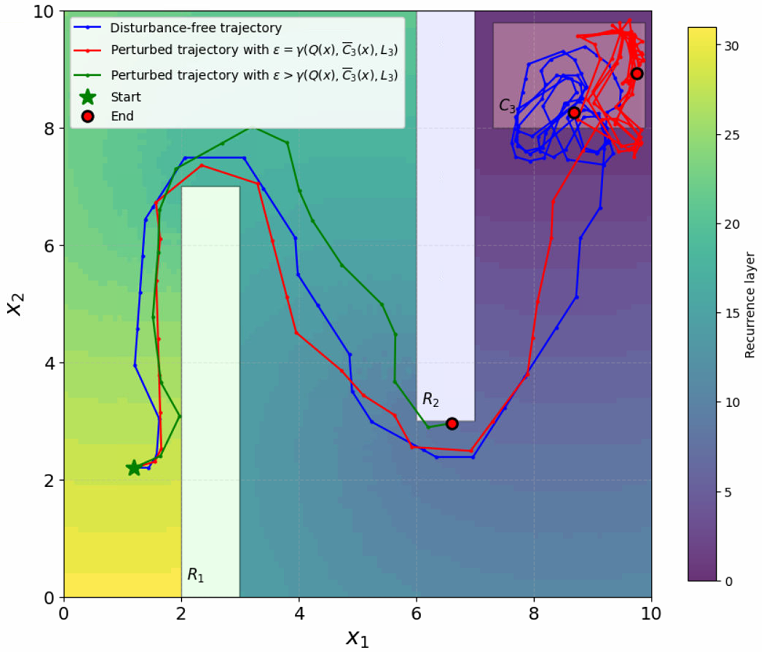}
\caption{Three trajectories of the system in~\eqref{diff_drive} starting from $x_0 = (1.2, 2.2)$ over $100$ time steps under
different perturbations.}
    \label{fig:recurrence}
\end{figure}
In Figure~\ref{fig:recurrence}, we focus only on the recurrence specification over the center $C_3$. The graded color layers represent the sets \(L_3 = \{Z_0, Z_2, \dots, Z_{32}\}\), where each color corresponds to the region \(Z_{i+1} \backslash Z_{i}\). The darkest violet denotes set \(Z_0\), which must be visited infinitely often, while the lightest yellow corresponds to the layer \(Z_{32} \backslash Z_{31}\). Here $Z_{32}$ is the fixed-point of the recurrence algorithm where $Z_i \in Z_{32}$ for $i = 0,1 \dots 31.$ We simulate three closed-loop trajectories from the initial state \(x_0=(1.2, 2.2)\) under the maximal robustness controller \(\overline{\mathcal{C}}_3\). The first (blue) trajectory corresponds to nominal control without disturbances. The second (red) trajectory shows the system under maximal robustness perturbation given by \(\gamma(Q(x), \overline{\mathcal{C}}_3(x), L_3)\) for each state $x \in X$. Both trajectories satisfy the specification: they avoid obstacles \(R_1\) and \(R_2\), reach the center \(C_3\), and revisit it infinitely often (simulated up to $100$ steps). Additionally, each state \(x_k \in Z_i\) transitions to the next layer satisfying \(x_{k+1} \in Z_{i-1}\). In contrast, the green trajectory illustrates a case where the applied disturbance exceeds the computed maximal robustness margin. This violation causes the system to fail the specification by entering in obstacle $R_2$.  The sharp transition from satisfaction to violation at this threshold demonstrates the tightness of our result. The admissible robustness margin lies in the range \([0.0, 0.7]\), with a mean robustness margin in each direction equal to \(0.0794\). The computation times are: abstraction construction ($2$\,min\,$2$\,s), recurrence fixed-point iteration ($2$\,min\,$4.68$\,s), controller synthesis ($1$\,min\,$18.37$\,s) using a processor $2.20$GHz Intel(R) Xeon(R) Platinum $56$ cores CPU.

\section{Conclusion}
We introduce a new concept of specification-based maximal robustness margin for discrete-time dynamical systems. This quantity characterizes the largest admissible disturbance under which the system still satisfies a given temporal specification. The key innovation is the specification-dependent robustness margin that ensures fixed-point computations remain invariant under perturbations. By computing margins that preserve the predecessor operator used in the fixed-point synthesis algorithm, we derive non-uniform, state- and input-dependent robustness bounds tailored to safety, reachability, persistence, and recurrence. Theoretical guarantees show our margins strictly exceed previously known bounds, reducing conservatism. Our method enables the synthesis of controllers that are robust to disturbances below the characterized margin, guaranteeing specification satisfaction for both the nominal and perturbed systems. Simulation results on two numerical examples confirm the practical effectiveness of the approach. Future work will focus on extending these results to stochastic systems with stochastic uncertainties.

\bibliographystyle{IEEEtran}
\bibliography{sources}

\appendix
We have the following auxiliary result for which the proof is straightforward, and thus omitted.
\begin{lemma}
\label{lem:2}
Consider the discrete-time control system $\Sigma$ in (\ref{eqn:system}) and its associated symbolic model $S_d(\Sigma)=(X_d, X_d^0, U_d, \Delta_d)$ constructed in Section~\ref{Def:Sd}. Let $(q,v) \in X_d \times U_d$ and $A \subseteq B \subseteq  X_d$. We have that $\eta(q,v,A) \leq \eta(q,v,B)$.
\end{lemma}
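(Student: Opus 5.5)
The claim is a monotonicity property of $\eta$ in its set argument. The plan is to compare the two feasible sets in definition~\eqref{Def:eta_a} and, separately, to handle the case split in that definition. Throughout, write $R = \cl(\overline{f}(q,v,D))$. The first step is to note that the inverse quantization map is monotone. Since $Q^{-1}(A) = \bigcup_{q' \in A} Q^{-1}(q')$ and $A \subseteq B$, we get $Q^{-1}(A) \subseteq Q^{-1}(B)$. The interior operator is also monotone, so $\Int(Q^{-1}(A)) \subseteq \Int(Q^{-1}(B))$.

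Next I split according to the two branches of~\eqref{Def:eta_a}. If $R \not\subseteq \Int(Q^{-1}(A))$, then $\eta(q,v,A) = 0$. Since $\eta$ takes values in $\mathbb{R}_{\geq 0}$, we have $\eta(q,v,B) \geq 0$, and the inequality holds trivially. If instead $R \subseteq \Int(Q^{-1}(A))$, then the first step gives $R \subseteq \Int(Q^{-1}(B))$ as well, so both quantities are given by the supremum branch. In that case I would show inclusion of the feasible sets:
\[
E_A = \{\varepsilon \ge 0 \mid R + \mathcal{B}_\varepsilon(0) \subseteq \Int(Q^{-1}(A))\} \subseteq E_B,
\]
where $E_B$ is defined in the same way with $B$ in place of $A$. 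This follows directly from the inclusion of the interiors. It then remains to conclude $\sup E_A \le \sup E_B$.

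The only mildly delicate point is checking that the suprema are well defined and that the comparison is valid even when the sets could be unbounded. Both $E_A$ and $E_B$ contain $0$, because $R + \mathcal{B}_0(0) = R$. Hence both are nonempty, and the supremum over a larger subset of the extended reals is at least as large, so $\sup E_A \le \sup E_B$ holds in all cases. This gives $\eta(q,v,A) \le \eta(q,v,B)$. Nothing here needs compactness of $R$ or any structure of the partition beyond the definition of $Q^{-1}$, so I do not expect a real obstacle.
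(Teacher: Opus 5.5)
Your proposal is correct. The paper omits this proof as straightforward, and your argument is exactly the routine verification intended: monotonicity of $Q^{-1}$ and $\Int$, the case split on the two branches of \eqref{Def:eta_a}, and inclusion of the feasible sets (both containing $0$) giving $\sup E_A \le \sup E_B$, with the extended-real remark covering a possibly infinite supremum.
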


\subsection{Proof of Lemma~\ref{Proposition:behavior}}
\begin{proof}
To show the result, we show that for all \(i \in \N\) we have $u_i \in  \mathcal{C}_{d, \varepsilon}(q_i)$ and $q_{i+1} \in \Delta_{d, \varepsilon}(q_i, u_i)$.
Let \(i \in \N\), first we have \(u_i \in \mathcal{C}(x_i) = \mathcal{C}_{d, \varepsilon}(Q(x_i)) = \mathcal{C}_{d, \varepsilon}(q_i)\) therefore the first condition is satisfied.
For the second condition, we need to show that \(q_{i+1} \in  \Delta_{d, \varepsilon}(q_i, u_i)\) which is equivalent to proving that \(\{\cl(\overline{f}(q_i,u_i,D))+\mathcal{B}_{\varepsilon_d(q_i,u_i)}(0)\} \cap \cl(q_{i+1}) \neq \emptyset\).
Using the fact that \( \mu(x_i, u_i) \le \varepsilon(x_i, u_i) \leq \sup_{x\in q_i} \varepsilon(x,u_i) = \varepsilon_d(q_i, u_i)\) where $\varepsilon_d$ defined in~\eqref{epsilon_d}, one gets \( \cl(\overline{f}(x_i,u_i,D))+\mathcal{B}_{\mu(x_i,u_i)}(0) \subseteq \cl(\overline{f}(q_i,u_i,D))+\mathcal{B}_{\varepsilon_d(q_i,u_i)}(0)\). By definition, we have \(x_{i+1} \in f(x_i,u_i,D)+\mathcal{B}_{\mu(x_i,u_i)}(0)\), which implies \(x_{i+1} \in \cl(\overline{f}(q_i,u_i,D))+\mathcal{B}_{\varepsilon_d(q_i,u_i)}(0)\). Moreover, since \(x_{i+1} \in Q(x_{i+1}) = q_{i+1}\), it implies \(\cl(q_{i+1}) \cap \{ \cl(\overline{f}(q_i,u_i,D))+\mathcal{B}_{\varepsilon_d(q_i,u_i)}(0)\}   \neq \emptyset\). Therefore, \(q_{i+1} \in  \Delta_{d, \varepsilon}(q_i, u_i)\), which concludes the proof.
\end{proof}

\subsection{Proof of Proposition~\ref{Prop:refinement}}

\begin{proof}
Let \(\sigma = ((x_0,u_0), (x_1,u_1), \ldots) \in \mathcal{H}(S^\mathcal{C}(\Sigma_{\mu}))\) where \(x_0 \in \dom(\mathcal{C})\). Since $\mathcal{C}(x)=\mathcal{C}_d(Q(x))=\mathcal{C}_{d,\varepsilon}(Q(x))$ for all $x\in X$, one gets from Lemma~\ref{Proposition:behavior} the existence of a behavior \(\sigma_d = ((q_0,u_0), (q_1,u_1), \ldots) \in  \mathcal{H}(S^{\mathcal{C}_{d,\varepsilon}}_{d}(\Sigma_{\varepsilon}))\) with $q_i = Q(x_i)$ for all $i \in \N$.
Moreover, we know that $x_0 \in \dom(\mathcal{C})$, thus \(q_0=Q(x_0) \in \dom(\mathcal{C}_{d, \varepsilon})\). Therefore, from the fact that $\mathcal{C}_{d,\varepsilon}$ is a controller for the system $S_d(\Sigma_\varepsilon)$ and specification $\mathcal{H}_{Spec}$ it follows $\sigma_{d,x}=(q_0,q_1,q_2,\ldots) \in \mathcal{H}_{Spec}$. Hence, we deduce that $\sigma_x=(x_0,x_1,x_2,\ldots) \in Q^{-1}(\mathcal{H}_{Spec})$. Hence, \(\mathcal{C}\) is a controller for the perturbed system \(S(\Sigma_{\mu})\) and the specification $Q^{-1}(\mathcal{H}_{Spec})$.
\end{proof}

\subsection{Proof of Theorem~\ref{Thrm:Pre_mu}}

\begin{proof}
Consider $j \in \{0,1,2,\ldots,m\}$, we first prove the first inclusion \( \Pre_{\Delta_d}(Z_j) \subseteq  \Pre_{ \Delta_{d, \varepsilon}}(Z_j) \). 
To simplify the notation, we denote $X_j = \pi_X(Z_j) \subseteq X_d$.
Consider \( (q, v) \in \Pre_{\Delta_d}(Z_j) \), which implies that 
\begin{equation}
\label{eqn:inclusion}
 \Delta_d(q,v)  \subseteq X_j.
\end{equation}
By definition of $I$ in~\eqref{Def:gamma_l}, the inclusion
\eqref{eqn:inclusion} gives $(q,v) \in \Pre_{\Delta_d}(Z_j)$ with $Z_j \in L$,
hence
\begin{equation}
\label{eqn:proofthm8}
   \bigcap\limits_{i=0}^m \{\pi_X({Z}_i)  \mid Z_i \in {L}, \;(q, v) \in Pre_{\Delta_d}(Z_i) \}   \subseteq {X}_j.
\end{equation}
Using the definition of the perturbed symbolic model, we have $\varepsilon_d(q, v) = \sup_{x\in q} \varepsilon(x,v) <  \gamma(Q(x),v,L)= \gamma(q,v,{L})$, and according to the definition of the maps $\eta$ and $\gamma$ in \eqref{Def:eta_a} and \eqref{Def:gamma_l}, respectively, we deduce that $ \varepsilon_d(q, v) < \gamma(q, v, {L})  = \eta(q, v,   I(q, v, L)  ) \leq \eta(q,v, {X}_j)$. Where the last inequality comes from~\eqref{eqn:proofthm8} and Lemma~\ref{lem:2}.
Moreover, we have $\cl(\overline{f}(q,v,D)) \subseteq  \Int(Q^{-1}({X}_j))$ due to Equation~\eqref{eqn:inclusion}.
Hence, one gets from the definition of the map $\eta$ in~\eqref{Def:eta_a} and from $\varepsilon_d(q, v) < \eta(q,v, {X}_j)$ that
\begin{equation}
\label{eqn:3}
\cl(\overline{f}(q,v,D))+\mathcal{B}_{\varepsilon_d(q,v)}(0) \subseteq  \Int(Q^{-1}({X}_j)).
 \end{equation}
Now consider $q' \in X_d$ such that $\cl(q')\cap\{  \cl(\overline{f}(q,v,D))+\mathcal{B}_{\varepsilon_d(q,v)}(0)\}\neq \emptyset$. We have from (\ref{eqn:3}) that $ \{  \cl(\overline{f}(q,v,D))+\mathcal{B}_{\varepsilon_d(q,v)}(0)\} \subseteq \Int(Q^{-1}({X}_j))=\Int(\bigcup\limits_ {q_1\in X_j)}\{q_1\})$. Hence, it follows that $\cl(q') \cap \Int(\bigcup\limits_ {q_1\in X_j}\{q_1\}) \neq \emptyset$, which in turn implies that $q' \in X_j$. Therefore, \( (q, v) \in  \Pre_{ \Delta_{d, \varepsilon}}(Z_j) \). Now, we prove that \(  \Pre_{ \Delta_{d, \varepsilon}}(Z_j) \subseteq \Pre_{\Delta_d}(Z_j) \). For any \( (q, v) \in  \Pre_{ \Delta_{d, \varepsilon}}(Z_j) \), we have $  \Delta_{d, \varepsilon}(q, v) \subseteq \pi_X(Z_j)$, and since $\Delta_{d}(q, v) \subseteq \Delta_{d, \varepsilon}(q, v)  $ the result follows directly.
\end{proof}

\subsection{Proof of Theorem~\ref{Thrm_pre_mu_neq}}

\begin{proof}
Set $A^* = I(q^*, v^*, L)$, so that $\gamma(q^*, v^*, L) = \eta(q^*, v^*, A^*)$. By the definition of $\eta$, $\varepsilon_d(q^*, v^*) > \gamma(q^*, v^*, L)$  implies $\mathrm{cl}(\overline f(q^*, v^*, D)) + \mathcal B_{\varepsilon_d(q^*, v^*)}(0) \not\subseteq \mathrm{Int}(Q^{-1}(A^*)).$ Pick $y$ in the left-hand set but not in $\mathrm{Int}(Q^{-1}(A^*))$. Since $\{q : q \in X_d\}$ partitions $\mathbb R^n$, there exists $q' \in X_d$ with $y \in \mathrm{cl}(q')$, and $q'$ can be chosen outside $A^*$. The relation $y \in \mathrm{cl}(\overline f(q^*, v^*, D)) + \mathcal B_{\varepsilon_d(q^*, v^*)}(0)$ together with $y \in \mathrm{cl}(q')$ gives, by definition of $\Delta_{d,\varepsilon}$, $q' \in \Delta_{d,\varepsilon}(q^*, v^*).$ Since $q' \notin A^* = \bigcap\limits_{i=0}^m \{\pi_X({Z}_i)  \mid Z_i \in {L}, \; (q^*, v^*) \in Pre_{\Delta_d}(Z_i) \} $, there exists $j^* \in \{0, \dots, m\}$ with $(q^*, v^*) \in \mathrm{Pre}_{\Delta_d}(Z_{j^*})$ and $q' \notin \pi_X(Z_{j^*})$. Hence $\Delta_{d,\varepsilon}(q^*, v^*) \not\subseteq \pi_X(Z_{j^*})$, so $(q^*, v^*) \notin \mathrm{Pre}_{\Delta_{d,\varepsilon}}(Z_{j^*})$, which conclude the results.
\end{proof}

\subsection{Proof of Proposition~\ref{prop:safety_equiv}}

\begin{proof}
To show the result, we first show by recurrence that \( Z_i = Y_i \) for all \( i \in \mathbb{N} \). We have that \( Z_0 = Y_0 = X_d \times U_d \). Now consider \( i \in \mathbb{N} \), assume \( Z_i = Y_i \). It follows that \( Z_{i+1} = Y_{i+1} = \Pre_{\Delta_d}(Z_i) \cap W = \Pre_{ \Delta_{d, \varepsilon}}(Y_i) \cap W \), using the fact that \( \Pre_{\Delta_d}(Z_i) = \Pre_{ \Delta_{d, \varepsilon}}(Y_i) \) as per Theorem~\ref{Thrm:Pre_mu}.
By induction, we conclude that \( Z_i = Y_i \) for all \( i \in \mathbb{N} \). Thus, $m = n$ and consequently \( \pi_X(Z_n) = \pi_X(Y_m) = \dom(\mathcal{C}_d) = \dom(\mathcal{C}_{d, \varepsilon}) \). Moreover, for every \( q \in \dom(\mathcal{C}_d) \), \( \mathcal{C}_d(q) = \mathcal{C}_{d, \varepsilon}(q) = \{ v \in U \mid (q,v) \in Z_n \}, \)
otherwise \( q \notin \pi_X(Z_n) \), implies \( \mathcal{C}_d(q) = \mathcal{C}_{d, \varepsilon}(q) = \emptyset \).
\end{proof}

\subsection{Proof of theorem~\ref{thm:safety_refine}}

\begin{proof}
Consider a map \(\varepsilon: X \times U_d \rightarrow \mathbb{R}_{\geq 0}\) satisfying $\sup_{z\in Q(x)}\varepsilon(z,v) < \gamma(Q(x), v, L)$ and the symbolic model of the perturbed system \(S_{d}(\Sigma_{\varepsilon})=(X_d, X_d^0, U_d,  \Delta_{d, \varepsilon})\). From Proposition~\ref{prop:safety_equiv} where $\mathcal{C}_{d, \varepsilon}$ is a safety controller for \(S_{d}(\Sigma_{\varepsilon})\) and safe set $X_T$, we have $\mathcal{C}_d(q) = \mathcal{C}_{d, \varepsilon}(q)$. Hence, it follows from Proposition~\ref{Prop:refinement} that the controller \(\mathcal{C}\) is a controller for the perturbed system \(S(\Sigma_{\varepsilon})\) and the safety specification $Q^{-1}(\mathcal{H}_{Safety}^{X_T})$.
\end{proof}

\subsection{Proof of Proposition~\ref{prop:Safety_explicit}}
\begin{proof}
From the definition of the monotone function $G$ of the safety specification, we have $Z_n \subseteq Z_{n-1} \dots \subseteq Z_0$. Moreover, since the $\Pre_{\Delta_d}$ operator is monotone, we have $ \Pre_{\Delta_d}(Z_n) \subseteq \Pre_{\Delta_d}(Z_{n-1}) \dots \subseteq \Pre_{\Delta_d}(Z_0)$. Using the fact that $Z_n = \Pre_{\Delta_d}(Z_n) \cap W$ one gets $Z_n \subseteq \Pre_{\Delta_d}(Z_n)$. Therefore, for $(q,v) \in Z_n$, we have $\bigcap\limits_{i=1}^n  \{ Z_i \in L \mid   (q, v) \in \operatorname{Pre}_{\Delta_d}(Z_i)  \}  = Z_n$. Hence, $\gamma(Q(x), v, L) = \eta(Q(x), v, \pi_X(Z_n)) = \eta(Q(x), v,\dom(\mathcal{C}_d))$.
\end{proof}

\subsection{Proof of Proposition~\ref{prop:reach_equiv}}

\begin{proof}
We show by induction that \( Z_i = Y_i \) for all \( i \in \mathbb{N} \). The base case holds since \( Z_0 = Y_0 = \emptyset \times U_d \). Assume \( Z_i = Y_i \). Then $Z_{i+1} = \Pre_{\Delta_d}(Z_i) \cup W$ and $Y_{i+1} = \Pre_{ \Delta_{d, \varepsilon}}(Y_i) \cup W.$ By Theorem~\ref{Thrm:Pre_mu}, \( \Pre_{\Delta_d}(Z_i) = \Pre_{ \Delta_{d, \varepsilon}}(Y_i) \), hence \( Z_{i+1} = Y_{i+1} \). Thus \( Z_i = Y_i \) for all \( i \), implying \( n = m \) and \( L = L_\varepsilon \). Consequently, \( j_L(q) = j_{L_\varepsilon}(q) \) for all \( q \in X_d \), which yields $\mathcal{C}_d(q) = \mathcal{C}_{d,\varepsilon}(q).$
\end{proof}

\subsection{Proof of Proposition~\ref{Prop:reach_caracterization}}

\begin{proof}
From the definition of the monotone function $G$ of the reachability , we have $Z_0 \subseteq Z_{1} \dots \subseteq Z_n$. Moreover, since the $\Pre_{\Delta_d}$ is monotone, we have
\begin{equation}
\label{eqn:Pre_reach}
  \Pre_{\Delta_d}(Z_0) \subseteq \Pre_{\Delta_d}(Z_{1}) \dots \subseteq \Pre_{\Delta_d}(Z_n).
\end{equation}
Now consider $(x, v) \in  Q^{-1}(Z_n)$ which implies $(q, v)  = (Q(x), v) \in  Z_n$. We have that $(q, v) \in Z_{j_L(q)}= \Pre_{\Delta_d}(Z_{j_L(q)-1}) \\ \cup W$ with $W = X_T \times U_d$ and $j_L(q) \in \{2,\ldots,n\}$ as the objective is to reach $Z_1 = W$. Thus,  $(q, v) \in \Pre_{\Delta_d}(Z_{j_L(q) -1})$. By the minimality definition of $j_L(q)$ in (\ref{eqn:j_L}), for all $i< j_L(q)$, $(q, v) \not\in Z_i= \Pre_{\Delta_d}(Z_{i-1}) \cup W$, which implies $(q, v) \not\in \Pre_{\Delta_d}(Z_{i-1})$.
Hence, it follows from (\ref{eqn:Pre_reach}) for $(q, v)  \in Z_n$ that  $ \bigcap\limits_{i=0}^n  \{ Z_i \in L \mid   (q, v) \in \Pre_{\Delta_d}(Z_i)  \}  = Z_{j_L(q)-1}$. Hence, one gets that $ \gamma(q, v, L) = \eta(q, v, \pi_X(Z_{j_L(q) -1})).$ That concludes the result.
\end{proof}

\subsection{Proof of Theorem~\ref{thm:reach_refine}}

\begin{proof}
Consider a map \(\varepsilon: X \times U_d \rightarrow \mathbb{R}_{\geq 0}\) satisfying  $\sup_{z\in Q(x)}\varepsilon(z,v) < \gamma(Q(x), v, L)$ and the symbolic model of the perturbed system \(S_{d}(\Sigma_{\varepsilon})=(X_d, X_d^0, U_d, \Delta_{d,\varepsilon_d})\). Let $\mathcal{C}_{d, \varepsilon}$ the reachability controller for \(S_{d}(\Sigma_{\varepsilon})\) and the target set $X_T$, from Proposition~\ref{prop:reach_equiv} we have $\mathcal{C}_d = \mathcal{C}_{d, \varepsilon}$. Hence, it follows from Proposition \ref{Prop:refinement} that the controller \(\mathcal{C}\) is a reachability controller for the perturbed system \(S(\Sigma_{\varepsilon})\) and the reachability specification $Q^{-1}(\mathcal{H}_{Reach}^{X_T})$.
\end{proof}

\subsection{Proof of Proposition~\ref{prop:pers_equiv}}

\begin{proof}
Let us prove by induction that $Z_i^\mu = Y_i^\mu$ for all $ i \in \N$.
Base case: by definition, \(Z^\mu_0 = Y^\mu_0 = \emptyset \times U_d\).
Suppose \(Z^\mu_i = Y^\mu_i \) for some \(i \ge 1\). From Lemma~\ref{Lemma_persistence} we have $Z^\mu_{i+1} = (\Pre_{\Delta_{d}}(Z^\mu_{i+1}) \cap W) \cup \Pre_{\Delta_{d}}(Z^\mu_i)$
and  $Y^\mu_{i+1} = (\Pre_{ \Delta_{d, \varepsilon}}(Y^\mu_{i+1}) \cap W) \cup \Pre_{ \Delta_{d, \varepsilon}}(Z^\mu_i)$. Since $Z^\mu_j \in L^\mu$ for all $j\in \N$ and $\sup_{z\in Q(x)}\varepsilon(z,v) < \gamma(Q(x), v, L^\mu)$ for all $(x,v)$, Theorem~\ref{Thrm:Pre_mu} implies:
\( \Pre_{\Delta_d}(Z^\mu_i) = \Pre_{ \Delta_{d, \varepsilon}}(Z^\mu_i)\) and $\Pre_{\Delta_d}(Z^\mu_{i+1}) = \Pre_{ \Delta_{d, \varepsilon}}(Z^\mu_{i+1})$. Thus
\begin{equation}
\label{Z_mu}
Z^\mu_{i+1} = \big( \Pre_{ \Delta_{d, \varepsilon}}(Z^\mu_{i+1}) \cap W \big) \cup \Pre_{ \Delta_{d, \varepsilon}}(Z^\mu_i)
\end{equation}
Both $Z^\mu_{i+1}$ and $Y^\mu_{i+1}$ satisfy the fixed-point
equation $S = (\mathrm{Pre}_{ \Delta_{d, \varepsilon}}(S) \cap W)
\cup A_i$, where $A_i := \mathrm{Pre}_{ \Delta_{d, \varepsilon}}(Z^\mu_i)$.
Define
\begin{align*}
  \Phi_\varepsilon(S) &:= \bigl(\mathrm{Pre}_{ \Delta_{d, \varepsilon}}(S)
    \cap W\bigr) \cup A_i, \\
  \Phi(S) &:= \bigl(\mathrm{Pre}_{\Delta_d}(S) \cap W\bigr) \cup A_i.
\end{align*}
Since $\mathrm{Pre}_{ \Delta_{d, \varepsilon}}(S) \subseteq
\mathrm{Pre}_{\Delta_d}(S)$ for all $S \subseteq X_d$, it follows
that $\Phi_\varepsilon(S) \subseteq \Phi(S)$ for all
$S \subseteq X_d$, so Lemma~\ref{lemma_inclusion_oprators} gives
$\nu S.\,\Phi_\varepsilon(S) \subseteq \nu S.\,\Phi(S)$,
i.e.\ $Y^\mu_{i+1} \subseteq Z^\mu_{i+1}$.
For reverse inclusion, note that $Z^\mu_{i+1} =
\nu S.\,\Phi(S)$ is a fixed-point of $\Phi_\varepsilon$
(by~\eqref{Z_mu}), so the maximality of
$Y^\mu_{i+1} = \nu S.\,\Phi_\varepsilon(S)$ yields
$Z^\mu_{i+1} \subseteq Y^\mu_{i+1}$. Therefore $Z^\mu_{i+1} = Y^\mu_{i+1}$. By induction, $Z^\mu_i = Y^\mu_i$ for all $i \geq 0$, and thus $n = m$. Finally, since $Z^\mu_i = Y^\mu_i$ for all $i \in \{0,1,\ldots,n\}$, we conclude that $L^\mu = L^\mu_\varepsilon$. Consequently, for any $q \in X_d$, the index function $j_{L^\mu}(q)$ satisfies $j_{L^\mu}(q) = j_{L^\mu_{\varepsilon}}(q)$. Therefore, for any $q \in X_d$ with $j_{L^\mu}(q) < \infty$, we have:
\(
\mathcal{C}_d(q) = \mathcal{C}_{d,\varepsilon}(q) = \{ v \in U_d \mid (q,v) \in Z^\mu_{j_{L^\mu}}(q) = Y^\mu_{j_{L^\mu_{\varepsilon}}}(q) \}.
\) This completes the proof.
\end{proof}

\subsection{Proof of Theorem~\ref{thm:pers_refine}}

\begin{proof}
Consider a map \(\varepsilon: X \times U_d \rightarrow \mathbb{R}_{\geq 0}\) satisfying $\sup_{z\in Q(x)}\varepsilon(z,v) < \gamma(Q(x), v, L^\mu)$ for all \((x, u) \in X \times U_d\) and the symbolic model of the perturbed system \(S_{d}(\Sigma_{\varepsilon})=(X_d, X_d^0, U_d,  \Delta_{d, \varepsilon})\). Let $\mathcal{C}_{d, \varepsilon}$ the persistence controller for \(S_{d}(\Sigma_{\varepsilon})\) for the target set $X_T$. From Proposition~\ref{prop:pers_equiv} we have $\mathcal{C}_d = \mathcal{C}_{d, \varepsilon}$. Hence, it follows from Proposition \ref{Prop:refinement} that the controller \(\mathcal{C}\) is a persistence controller for the perturbed system \(S(\Sigma_{\varepsilon})\) and the persistence specification $Q^{-1}(\mathcal{H}_{Pers}^{X_T})$.
\end{proof}

\subsection{Proof of Proposition~\ref{Prop:pers_caracterization}}

\begin{proof}
From the definition of the non-shrinking iterations of the persistence specification, we have $Z^\mu_0 \subseteq Z^\mu_{1} \dots \subseteq Z^\mu_n$. Moreover, since the $\Pre_{\Delta_d}$ operator is monotone, we have $  \Pre_{\Delta_d}(Z^\mu_0) \subseteq \Pre_{\Delta_d}(Z^\mu_{1}) \dots \subseteq \Pre_{\Delta_d}(Z^\mu_n).$
Let  $(x, v) \in Q^{-1}(Z^\mu_{j_{L^\mu}(q)})$, $q = Q(x)$ and $j = j_{L^\mu}(q)$. We consider three mutually exclusive cases.
\noindent Case 1: $j = 1$. 
Then $(q,v) \in Z^\mu_1 = \Pre_{\Delta_d}(Z^\mu_1) \cap W$. 
By definition, $\Delta_d(q,v) \subseteq Z^\mu_1$ which is the smallest set in the list $L^\mu$, hence $I(q,v,L^\mu) = Z^\mu_1$.
\noindent Case 2: $j > 1$ and $(q,v) \notin \Pre_{\Delta_d}(Z^\mu_{j-1})$. 
Since $(q,v) \in Z^\mu_j = \big(\Pre_{\Delta_d}(Z^\mu_j) \cap W\big) \cup \Pre_{\Delta_d}(Z^\mu_{j-1})$, 
the assumption implies $(q,v) \in \Pre_{\Delta_d}(Z^\mu_j) \cap W $.
From $(q,v) \in \Pre_{\Delta_d}(Z^\mu_{j})$ and $(q,v) \notin \Pre_{\Delta_d}(Z^\mu_{j-1})$, 
we conclude $I(q,v,L^\mu) = Z^\mu_j$.
\noindent Case 3: $j > 1$ and $(q,v) \in \Pre_{\Delta_d}(Z^\mu_{j-1})$. 
By minimality of $j = j_{L^\mu}(q)$, we have $(q,v) \notin Z^\mu_{j-1}$.
Since $Z^\mu_{j-1} = \big(\Pre_{\Delta_d}(Z^\mu_{j-1}) \cap W\big) \cup \Pre_{\Delta_d}(Z^\mu_{j-2})$, it must hold that $(q,v) \notin \Pre_{\Delta_d}(Z^\mu_{j-2})$. From $(q,v) \in \Pre_{\Delta_d}(Z^\mu_{j-1})$ and $(q,v) \notin \Pre_{\Delta_d}(Z^\mu_{j-2})$, 
we conclude $I(q,v,L^\mu) = Z^\mu_{j-1}$.
All possible cases are covered, which completes the proof.
\end{proof}

\subsection{Proof of Lemma~\ref{lemma_inclusion_oprators}}

\begin{proof}
We proceed by induction on $i$. For $i = 0$, by definition, $Z_0^\varepsilon = X_d \times U_d = Z_0$, so $Z_0^\varepsilon \subseteq Z_0$ holds trivially. Assume $\nu^i Z.\, G_\varepsilon(Z) \subseteq \nu^i Z.\, G(Z)$ for some $i \geq 0$, we have $\nu^{i+1} Z.\, G_\varepsilon(Z) = G_\varepsilon\!\left(\nu^i Z.\, G_\varepsilon(Z)\right)
    \subseteq G_\varepsilon\!\left(\nu^i Z.\, G(Z)\right)  
    \subseteq G\!\left(\nu^i Z.\, G(Z)\right)  
    = \nu^{i+1} Z.\, G(Z), $
where the first inclusion comes from the induction hypothesis and the second inclusion from hypothesis $G_\varepsilon \subseteq G$ point-wise. This completes the induction. We can deduce that $\nu Z . G_\varepsilon(Z) \subseteq \nu Z . G(Z)$ by taking a large number of iterations in which both fixed-points are reached.
\end{proof}

\subsection{Proof of Lemma~\ref{Lemma_persistence}}

\begin{proof}
At the $i$-th iteration of the outer loop in Algorithm~\ref{alg:persistence}, we have from line~9 that $(Z^\mu_{i+1} = Z^\nu_j$), where $j$ denotes the final iteration index of the inner fixed-point computation. Combining this with line~8 yields \(Z^\mu_{i+1} = Z^\nu_j = Z^\nu_{j+1}.\) Hence, replacing both $Z^\nu_j$ and $Z^\nu_{j+1}$ by $Z^\mu_{i+1}$ in line $7$ provides the required results.
\end{proof}

\subsection{Proof of Proposition~\ref{prop:recu_equiv}}

\begin{proof}
The proof proceeds by induction on the iteration steps of Algorithm~\ref{alg:recurrence}. We have $Z_0^\nu = X \times U$ for both the nominal and perturbed systems, so $Z_0^\nu = Y_0^\nu$. Assume that for some $i \geq 0$, we have $Z_i^\nu = Y_i^\nu$. Consider the inner loop for outer index $i$ in Algorithm~\ref{alg:recurrence}. We prove by induction on $j$ that for all $j$, $Z_j^{\mu,i} = Y_j^{\mu,i}$. \textbf{Base case ($j=0$):} $Z_0^{\mu,i} = \emptyset \times U = Y_0^{\mu,i}$. \textbf{Induction step:} Assume $Z_j^{\mu,i} = Y_j^{\mu,i}$.
From Algorithm~\ref{alg:recurrence}, the update is:\(
Z_{j+1}^{\mu,i} = \big( \Pre_{\Delta_d}(Z_i^\nu) \cap W \big) \cup \Pre_{\Delta_d}(Z_j^{\mu,i}).
\) Since $Z_i^\nu \in L^\mu$ and $Z_j^{\mu,i} \in L^\mu$, and $\sup_{z\in Q(x)}\varepsilon(z,v) < \gamma(Q(x), v, L^\mu)$ for all $(x,v)$, Theorem~\ref{Thrm:Pre_mu} implies:
\(
\Pre_{\Delta_d}(Z_i^\nu) = \Pre_{ \Delta_{d, \varepsilon}}(Y_i^\nu) \quad \text{and} \quad \Pre_{\Delta_d}(Z_j^{\mu,i}) = \Pre_{ \Delta_{d, \varepsilon}}(Y_j^{\mu,i}).
\)
Therefore, \(
    Z_{j+1}^{\mu,i} = \big( \Pre_{ \Delta_{d, \varepsilon}}(Y_i^\nu) \cap W \big) \cup \Pre_{ \Delta_{d, \varepsilon}}(Y_j^{\mu,i}) = Y_{j+1}^{\mu,i}.
    \)
Thus, by induction on $j$, we have $Z_j^{\mu,i} = Y_j^{\mu,i}$ for all $j$, including the fixed-point $j = m(i)$. The outer update sets $Z_{i+1}^\nu = Z_{m(i)}^{\mu,i}$ and $Y_{i+1}^\nu = Y_{m(i)}^{\mu,i}$. Since $Z_{m(i)}^{\mu,i} = Y_{m(i)}^{\mu,i}$, we conclude $Z_{i+1}^\nu = Y_{i+1}^\nu$. By induction on $i$, we have $Z_i^\nu = Y_i^\nu$ for all $i$, including the final fixed-point $i = n$. Therefore, $Z_n^\nu = Y_n^\nu$ and $n_1 = n$, which implies: $\mathrm{dom}(\mathcal{C}_d) = \pi_X(Z_n^\nu) = \pi_X(Y_n^\nu) = \mathrm{dom}(\mathcal{C}_{d,\varepsilon}).$ Now consider \( q \in X_d \), since $L^\mu = L^\mu_\varepsilon$ , it follows that \( L = L_{\varepsilon} \), which implies that \( j_{L}(q) = j_{L_{\varepsilon}}(q) \). Therefore, for any \( q \in X_d \) with \( j_{L}(q) < \infty \), we have
\( \mathcal{C}_d(q) = \mathcal{C}_{d, \varepsilon}(q) = \{ v \in U \mid (q,v) \in  Z_{j_{L}}(q) =  Y_{j_{L_{\varepsilon}}}(q)\} .\)
\end{proof}

\subsection{Proof of Theorem~\ref{thm:recu_refine}}

\begin{proof}
Consider a map \(\varepsilon: X \times U_d \rightarrow \mathbb{R}_{\geq 0}\) satisfying {$\sup_{z\in Q(x)}\varepsilon(z,v) < \gamma(Q(x), v, L^\mu)$} and the symbolic model of the perturbed system \(S_{d}(\Sigma_{\varepsilon})=(X_d, X_d^0, U_d,  \Delta_{d, \varepsilon})\). From Proposition~\ref{prop:recu_equiv} where $\mathcal{C}_{d, \varepsilon}$ is a recurrence controller for \(S_{d}(\Sigma_{\varepsilon})\) and the same target set $X_T$, we have $\mathcal{C}_d = \mathcal{C}_{d, \varepsilon}$. Hence, it follows from Proposition \ref{Prop:refinement} that the controller \(\mathcal{C}\) is a recurrence controller for the perturbed system \(S(\Sigma_{\varepsilon})\) and the recurrence specification $Q^{-1}(\mathcal{H}_{Rec}^{X_T})$.
\end{proof}

\subsection{Proof of Proposition~\ref{prop:unified_comparison}}

\begin{proof}
{Let $(x,v) \in Q^{-1}(Z_\infty)$ and set $q = Q(x)$. By the definition of
$\gamma$ in~\eqref{Def:gamma_l}, we have
$\gamma(q,v,L_\Phi) = \eta(q,v,I(q,v,L_\Phi))$, so by the monotonicity of
$\eta$ (Lemma~\ref{lem:2}) it suffices to establish
$\Delta_d(q,v) \subseteq I(q,v,L_\Phi)$. We first verify that the family defining $I(q,v,L_\Phi)$ is non-empty by
showing $(q,v) \in \mathrm{Pre}_{\Delta_d}(Z_\infty)$ equivalent to
$\Delta_d(q,v) \subseteq \pi_X(Z_\infty)$, for each
specifications. Indeed, for safety,
$Z_\infty = \Pre_{\Delta_d}(Z_\infty) \cap W \subseteq
\Pre_{\Delta_d}(Z_\infty)$ gives $\Delta_d(q,v) \subseteq \pi_X(Z_\infty)$;
for persistence, \(Z_\infty = \big( \Pre_{\Delta_d}(Z_{\infty}) \cap W \big) \cup \Pre_{\Delta_d}(Z_\infty)  =  \Pre_{\Delta_d}(Z_\infty)\) gives
$\Delta_d(q,v) \subseteq \pi_X(Z_\infty)$; for recurrence, $Z_\infty
= \big( \Pre_{\Delta_d}(Z_\infty) \cap W \big) \cup \Pre_{\Delta_d}(Z_\infty)
= \Pre_{\Delta_d}(Z_\infty)$  gives
$\Delta_d(q,v) \subseteq \pi_X(Z_\infty)$; and for reachability,
$(q,v) \in Z_\infty = \Pre_{\Delta_d}(Z_\infty) \cup W$ and $(q, v) \notin W$ gives
$\Delta_d(q,v) \subseteq \pi_X(Z_\infty)$, since $(q, v) \in W = X_T \times U_d$ means we already reach the target and no succession required. All the sets $\pi_X(Z_i)$ considered for the intersection $I(q,v,L_\Phi)$ contain $\Delta_d(q,v)$ which implies $\Delta_d(q,v) \subseteq I(q,v,L_\Phi)$. }
\end{proof}

\end{document}